\keywords{computational complexity, the coding of computations through formulae, 
exponential time, polynomial space, the lower complexity bound of the language recognition}
\theoremstyle{plain}
\newcommand{\eg}{\text{e.g.,}\xspace}
\begin{document}

\title[A correspondence between the time and space complexity]{A correspondence between the time and space \texorpdfstring{\\} {}complexity\rsuper*} 

\thanks{The author thanks Mikhail G. Peretyat'kin and Alexander V. Seliverstov for their attention to the work and useful discussion.} 

\author[I.V.~Latkin]{Ivan V. Latkin\lmcsorcid{0000-0001-8844-4105}}

\address{East Kazakhstan Technical University, Ust-Kamenogorsk, Kazakhstan}
\email{lativan@yandex.kz}  

\begin{abstract}
\noindent We investigate the correspondence between the time and space recognition 
complexity of languages. For this purpose, we will code the long-continued computations of 
deterministic two-tape Turing machines by the relatively short-length quantified Boolean 
formulae. The modified Meyer and Stockmeyer method will appreciably be used for this 
simulation. It will be proved using this modeling that the complexity classes 
Deterministic Exponential Time and Deterministic Polynomial Space coincide. It will also  
be proven that any language recognized in polynomial time can be recognized in almost 
logarithmic space. Furthermore, this allows us slightly to improve the early founded lower 
complexity bound of decidable theories that are nontrivial relative to some equivalence relation 
(this relation may be equality) --- each of these theories is consistent with the formula, which 
asserts that there are two non-equivalent elements.

This is a corrected version of the preprint \url{https://arxiv.org/abs/2311.01184} submitted 
originally on November 2, 2023. 
\end{abstract}

\maketitle

\section{Introduction}\label{s1}

At the beginning, we recall some designations. Although they are widely applied, 
this reminder will help to avoid misunderstandings. 

A function $\exp_k(n)$ is called {\em $k$-iterated} (or {\em $k$-story}, or {\it $k$-fold}) 
exponential if, for every natural $k$, it is calculated in the following way: \ 
$\exp_0(n)\!=\!n$, \ $\exp_{k+1}(n)\!=\!2^{\exp_{k}(n)}$. The length of a word $X$ is denoted by 
$|X|$, i.e., $|X|$ is the number of symbols in $X$ together indices. If $A$ is a set, then   
$|A|$ denotes its cardinality; "$A \rightleftharpoons\mathcal{A}$" means "$A$ is a designation 
for $\mathcal{A}$" or "$A$ equals to $\mathcal{A}$ by definition"; and $\exp(n)\rightleftharpoons
\exp_1(n)$. If $a$ is a real number, then $\lceil a\rceil$ will denote the least integer  
that is no less than $a$. We will not use logarithms on different bases, so $\log x$ further 
denotes $\log_2x$.

As usual, if $a$ is a symbol (or letter) of some alphabet, then $a^k$ denotes the $k$-fold 
concatenation of this letter with themself; $|a^k|\!=\!k$, $a^0$ is the empty word $\Lambda$.
The variable $n$ always designates the length of the input 
string in the expressions $r$-$DTIME[t(n)]$ and $r$-$DSPACE[s(n)]$, which accordingly are 
the classes of languages recognized by the $r$-tape deterministic Turing machines ($r$-DTM) 
with the time and space complexity $\mathcal{O}(t(n))$ and $\mathcal{O}(s(n))$. We regard that 
\ $r$-$DTIME[t(n)]\subseteq$ $r$-$DTIME[t_1(n)]$ \ and \ $r$-$DSPACE[s(n)]\subseteq$ 
$r$-$DSPACE[s_1(n)]$, if $t(n)\!<\!t_1(n)$ and $s(n)\!<\!s_1(n)$ for almost all $n$. When the 
number of tapes is inessential, we will omit the label "$r$-". 

A set (or a language) is identified with the related decision (recognition) problem. 
We recall the definitions and notations of some complexity classes: \  $\mathbf{L}
\!\rightleftharpoons\!DSPACE[\log n]$ (Deterministic Logarithmic Space); 
\[
\mathbf{^kEXP}\rightleftharpoons
\bigcup_{j\geqslant1}\bigcup_{c\geqslant1} DTIME[\exp_k(cn^j)]
\] 
($k$-fold Deterministic Exponential Time) for $k\!\geqslant\!1$;  
\[
\mathbf{^kEXPSPACE}\rightleftharpoons\bigcup_{j\geqslant1}\bigcup_{c\geqslant1}
DSPACE[\exp_k(cn^j)]
\] 
($k$-fold Deterministic Exponential Space) for $k\!\geqslant\!1$. One can define also that 
$\mathbf{^0EXP}$ is $\mathbf{P}$, \ $\mathbf{^0EXPSPACE}=\mathbf{PSPACE}$.

The language $TQBF$ consists of the true quantified Boolean formulae (the corresponding 
problem is designated as $QBF$, or sometimes $QSAT$).

\subsection{Is the equality of the $\mathbf{EXP}$ and $\mathbf{PSPACE}$ classes possible?}
\label{s1.1}

The coincidence of these classes seems to some researchers quite fantastic. 
However, certain reasonable arguments can shake this disbelief. 

Let us consider the assertion (which is almost apparent) that the class $\mathbf{P}$ is the 
proper subclass of $\mathbf{PSPACE}$. Because the language $TQBF$ is complete for the class  
$\mathbf{PSPACE}$ \cite{GJ79,SM73}, this statement is tantamount to proving that $TQBF\!\notin\!
\mathbf{P}$. The most reliable way to establish that the language $TQBF$ does not belong to 
$\mathbf{P}$ is to prove that its time complexity is exponential or at least subexponential.
Let us discuss this in more detail. 

Firstly, the hypothesis that the language $TQBF$ does not belong to $DTIME[\exp(cn^{\alpha})]$ 
for some $c,\alpha\!>\!0$, is not less plausible than the Exponential Time Hypothesis which  
asserts that the problem on the satisfiability of 3-$CNF$ with $m$ variables cannot be solved in  
time $\exp(Cm^{\beta})$ for appropriate $C,\beta\!>\!0$. 

Secondly, this has already been made by one of the most natural methods. Namely, it has been 
proved in \cite{Lat22} that the theory of two-element Boolean algebra, which is equivalent to the 
language $TQBF$ relatively polynomial reduction, does not belong to $DTIME[\exp(c\cdot n^{\rho})]
$, where $\rho\!=\!(2\!+\!\varepsilon)^{-1}$ for every $\varepsilon\!>\!0$, and so 
$\mathbf{P}$ is a proper subclass of $\mathbf{PSPACE}$. 
   
The lower bound on the time complexity for the theory $TQBF$ has been yielded by the technique 
of the immediate codings of the machine actions employing formulae similarly as it has been 
made for many other theories in \cite{FR74,Mey73,Mey75,MS72,Rab99,Rob74,Sto74,Vor04}; 
this approach was named the Rabin and Fischer method in \cite{Lat22}\footnote{although one 
should call the G\"{o}del, Rabin, and Fischer method}. For many of these theories, more 
precise lower bounds on the complexity of decidability have now been obtained; 
however, we here wish to emphasize the long-standing prominence of the method.

More precisely, the following theorem has been proved in \cite{Lat22}; here and in what follows, a 
Turing machine is identified with its program (or {\it transitions function}). 

\begin{thm}[\cite{Lat22}]{}\label{old-main}
For each one-tape deterministic Turing machine $P$ and every input string $X$, one can 
write a closed formula (sentence) $\Omega(X,P)$ of the signature of the 
two-element Boolean algebra $\mathcal{B}$ with the following properties:\\
(i) \ $\Omega(X,P)$ can be written within polynomial time of $|X|$ and $|P|$;\\
(ii)\ $\Omega(X,P)$ is true over the algebra $\mathcal{B}$ \ if and only 
if the Turing machine $P$ accepts input $X$ within time \ $\exp(|X|)$;\\
(iii)\ for every $\varepsilon\!>\!0$, there is a constant \ $D\!>\!0$ such that 
the inequalities \
$|X|\!<\!|\Omega(X,P)|\!\leqslant\! D\cdot|P|\cdot|X|^{2+\varepsilon}$
 \ hold for all sufficiently long $X$. 
 \end{thm}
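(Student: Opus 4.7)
The plan is to build $\Omega(X,P)$ by a Meyer--Stockmeyer style compression of the computation of $P$ on $X$, in the tradition of the Rabin--Fischer coding of Turing machine runs by formulae. Write $n=|X|$ and let $T=2^n$ be the time bound under consideration.

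First I would fix the indexing. A configuration of $P$ during a $T$-step run is specified by its state, its head position in $\{0,1\}^n$ (the head can move at most $T$ cells), and tape contents, where each cell holds a fixed-size alphabet symbol encoded by $O(1)$ Boolean values. Time instants are also indexed by $\{0,1\}^n$. With this indexing a single-step transition predicate $\mathrm{Trans}(C_1,C_2)$, saying ``$C_2$ is the successor of $C_1$ under $P$'', can be written by universally quantifying one position $p\in\{0,1\}^n$ and asserting that cell $p$ in $C_2$ matches what $P$'s program prescribes from the head/state/neighborhood of $C_1$ at $p$. Using one universal position variable of $n$ bits, this costs $O(|P|\cdot n)$ symbols.

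Next I would apply the standard Stockmeyer doubling to compress $T$ one-step transitions into $n$ quantifier alternations. Define $R_0\equiv\mathrm{Trans}$ and, inductively,
\[
R_{k+1}(\alpha,\beta)\;\equiv\;\exists\mu\,\forall\gamma,\delta\,\bigl[\bigl((\gamma=\alpha\wedge\delta=\mu)\vee(\gamma=\mu\wedge\delta=\beta)\bigr)\to R_k(\gamma,\delta)\bigr].
\]
Then $\Omega(X,P)$ declares that some accepting configuration is reachable from the initial configuration $C_X$ by $n$-fold doubling, namely $\exists C_{\mathrm{acc}}\,[\mathrm{Acc}(C_{\mathrm{acc}})\wedge R_n(C_X,C_{\mathrm{acc}})]$. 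Each doubling level reuses a single subformula instance, so the size grows additively, not multiplicatively, in the number of levels.

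The principal obstacle is that configurations have up to $T=2^n$ cells, so they cannot be quantified as explicit Boolean tuples without destroying the size budget. The heart of the Rabin--Fischer / Meyer--Stockmeyer method is to replace explicit configurations by a succinct description: a ``configuration variable'' is really an $O(n)$-bit handle (position cursor plus constant-size head/state data) in whose scope the predicate $\gamma=\alpha$ is interpreted as ``for the currently quantified cell position, the content in $\gamma$ equals the content in $\alpha$''. Re-read this way, the doubling recursion remains sound because one-tape TM transitions depend only on a local neighborhood. Reproducing this succinct encoding correctly---so that both $\mathrm{Trans}$ and the doubling recursion fire as intended over the two-element Boolean algebra---is the technically delicate part. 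Granting it, the size count gives the claimed bound: $R_0$ contributes $O(|P|\cdot n)$, each of the $n$ doubling levels adds $O(n^{1+\varepsilon})$ for fresh quantifier blocks and the succinct-equality machinery, and the total is $O(|P|\cdot n^{2+\varepsilon})$; polynomial-time constructibility (i) is clear from the recursive definition, property (ii) follows by induction on $k$ that $R_k$ expresses reachability in $2^k$ steps, and the lower bound $|X|<|\Omega(X,P)|$ in (iii) is immediate since $X$ must occur in $C_X$.
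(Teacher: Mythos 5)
Your route matches the paper's and, by the remark at the end of Subsection~\ref{s1.2}, the route of \cite{Lat22}: encode a single step locally with a universal cell-position quantifier and an $O(n)$-bit ``handle'' per endpoint, then compress $\exp(n)$ steps by Meyer--Stockmeyer doubling; the size accounting in (iii) then falls out roughly as you say. But you identify the one genuinely hard step and then step around it. You notice, correctly, that an $O(n)$-bit $\mu$ cannot carry the tape contents of an intermediate configuration (which after $2^{k}$ steps may occupy exponentially many cells), and your proposed fix --- re-reading $\gamma=\alpha$ as ``per-cell content agreement at the currently quantified position'' --- is not the mechanism the paper uses and, as stated, does not make the doubling sound. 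Nothing in your setup prevents the existential choice of $\mu$ from being a tuple that both halves of the recursion happen to accept yet that corresponds to no real midpoint; such a false witness would propagate into $R_n$ and then into the accept/reject verdict.

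The actual device is different. Each handle $\widehat{y}_t$ carries an \emph{inspection pointer} $\widehat{x}_t$ together with the content $\widehat{f}_t$ of the pointed-to cell (plus head position, scanned symbol, and state), and $\widehat{a}\equiv\widehat{y}_t$ stays literal tuple equality. The simulating sentence does not fix a single initial handle $C_X$ and existentially guess $C_{\mathrm{acc}}$; instead it universally quantifies both endpoint tuples and constrains them by short formulae $\chi(0)$ and $\chi(\omega)$, so that one reads $\forall\widehat{y}_0,\widehat{y}_T\,[(\chi(0)\wedge\Phi^{(m)})\rightarrow\chi(\omega)]$, and every cell position is ``inspected'' by a different value of the universally quantified handle. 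Soundness of the recursion then rests on an asymmetric inductive invariant (Proposition~\ref{Pr4}(i)): there is one \emph{special} value of the left endpoint such that $\Phi^{(k)}$ holds for \emph{every} right-endpoint tuple consistent with the target configuration, and the existential $\widehat{Y}$ may then be taken to be that special tuple for the unique deterministic midpoint, discharging both halves. Establishing that invariant --- the content of Propositions~\ref{Pr2}, \ref{Pr3}, and \ref{Pr4} --- is the bulk of the proof and is exactly the part you ``grant,'' so the proposal as written assumes its key lemma rather than proving it.
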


The evaluation of the time complexity of $Th\mathcal{B}$ has been obtained from this  
theorem by the Rabin and Fischer method (see Proposition \ref{mainPr} here and Proposition 
2.1 in \cite{Lat22}). 

We will often cite the article \cite{Lat22} in Sections \ref{s1} and \ref{s2}; nevertheless, the 
given paper is fully self-contained, and no previous knowledge of that work is needed. Moreover, 
the paper \cite{Lat22} is not even mentioned in Sections 3--6 during the proof of the main 
theorem on modeling, although the method of its proof differs from the one of Theorem 
\ref{old-main} only in detail.   

Analyzing the theorem's proof leads to a natural generalization, which we will consider 
in detail in Section \ref{s2}. Now, let us restrict ourselves to the simplest amplification, 
which we will describe here briefly. Namely, the time restriction (it will be named a 
{\it simulation period}) in the second item of Theorem \ref{old-main} can be extended to 
$\exp(m(|X|)$, where  $m(n)$ is a polynomial. 

In this case, instead of the formula $\Omega(X,P)$, we construct the sentence $\Omega^{(m)}(X,P)$
with almost the same properties: the item (i) coincides with the former verbatim, and 
in the items (ii) and (iii), the number $|X|\!=\!n$ is replaced by the value $m(|X|)$. More 
precisely:\\
(ii)' \ $\Omega^{(m)}(X,P)$ is true over the algebra $\mathcal{B}$ \ if and only if the Turing 
machine $P$ accepts input $X$ within time \ $\exp(m(|X|))$; \\ 
(iii)' \ for every $\varepsilon\!>\!0$, there is a constant \ $D\!>\!0$ such that 
the inequalities \ $m(|X|)\!<\!|\Omega^{(m)}(X,P)|\!\leqslant\!D\cdot|P|\cdot m^{2+
\varepsilon}(|X|)$ \ hold for all sufficiently long $X$.

The construction of the new formula remains practically the same; it is only necessary to 
increase the tuples of variables of the 'x' and 'z' types to length $m(|X|)\!+\!1$, i.e., 
instead of $\langle x_{t,0},\ldots,x_{t,n}\rangle$ and $\langle z_{t,0},\ldots,z_{t,n}\rangle
$, where $n\!=\!|X|$, (see Subsection 4.1 in \cite{Lat22} or here) we should respectively use  
$\langle x_{t,0},\ldots,x_{t,m(n)}\rangle$ and $\langle z_{t,0},\ldots,z_{t,m(n)}\rangle$. The 
tuples length of variables the 'q', 'f', and 'd' types  remains the same. We should also make 
appropriate substitutions in other places, except for the "large" conjunction in the formula 
$\chi(0)$ describing the initial configuration (see Subsection 6.2 in \cite{Lat22} or here). In 
the proof, it is also sufficient to restrict ourselves to these substitutions.

Since $Th\mathcal{B}\!\in\!\mathbf{PSPACE}$, most precisely, the truth of the formulae 
$\Omega(X,P)$ and $\Omega^{(m)}(X,P)$ can be checked using additionally only $\mathcal{O}|
(\Omega(X,P)|)$ and $\mathcal{O}|(\Omega^{(m)}(X,P)|)$ memory cells respectively, we, in 
essence, obtain the following corollaries of slightly generalized Theorem \ref{old-main}.

\begin{cor}\label{corEvsPsp} 
(i) The class $\mathbf{E}\!\rightleftharpoons\bigcup_{c\geqslant0}DTIME[\exp(cn)]$ 
is a subclass of $DSPACE[n^3]$.\\
(ii) $\mathbf{EXP}\subseteq\mathbf{PSPACE}$.  \qed
\end{cor}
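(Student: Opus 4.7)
The plan is to combine the formula-encoding of Theorem \ref{old-main} (in the generalized form with an arbitrary polynomial simulation period $m(n)$ sketched immediately above) with the classical fact that $TQBF$ admits a decision procedure running in space linear in the formula length. Given a language $L$ in the class of interest, I would first fix a deterministic one-tape Turing machine $P$ deciding $L$ within the prescribed time bound; a multi-tape machine can be converted to a one-tape machine at the cost of at most a quadratic time slowdown, which keeps us inside $\mathbf{E}$ (since $\exp(cn)^{2}\!=\!\exp(2cn)$) and, trivially, inside $\mathbf{EXP}$.

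For part (i), I would apply the generalization with simulation period $m(n)\!=\!cn$, producing a sentence $\Omega^{(m)}(X,P)$ over the signature of $\mathcal{B}$ which is true exactly when $P$ accepts $X$. By property (iii)', choosing any fixed $\varepsilon$ with $0\!<\!\varepsilon\!<\!1$ yields $|\Omega^{(m)}(X,P)|\!\leqslant\!D\cdot|P|\cdot(c|X|)^{2+\varepsilon}$, and since $|P|$, $D$, $c$ are constants this bound is $\mathcal{O}(n^{3})$ for large enough $n$. Property (i) guarantees that the formula can be written on the worktape within polynomial time (hence within that space). Evaluating it via the standard recursive $QBF$ procedure (traversing the quantifier prefix, storing the current partial assignment, and recursing on subformulae) uses additional space linear in the formula, so the overall space budget stays $\mathcal{O}(n^{3})$. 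Part (ii) is entirely analogous: if $P$ runs in time $\exp(cn^{j})$, I would take $m(n)\!=\!cn^{j}$, whereupon $|\Omega^{(m)}(X,P)|$ is polynomial in $n$ and its truth is checkable in polynomial space.

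The only obstacle I anticipate is careful bookkeeping rather than any genuinely hard step. One should verify that (a) writing the formula $\Omega^{(m)}(X,P)$ and then running the $QBF$ evaluator can be composed within the same space budget, which is handled by the familiar device of regenerating pieces of the formula on demand (or, more bluntly, by noting that the formula itself fits in the allotted space, so it may simply be stored); and (b) the multi-tape to one-tape reduction does not push the simulation period above $\exp(m(n))$, which is absorbed by the freedom in choosing the constants $c$ and $j$. All the substantive work has already been done inside Theorem \ref{old-main}; the corollary is then a direct combination of its length bound with the $\mathbf{PSPACE}$ algorithm for $TQBF$.
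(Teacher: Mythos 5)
Your proposal follows essentially the same route as the paper: apply the generalization of Theorem~\ref{old-main} to a simulation period $m(n)=cn$ (for part (i)) or $m(n)=cn^{j}$ (for part (ii)), then appeal to the linear-space decision procedure for quantified Boolean sentences over $\mathcal{B}$, using property~(iii)$'$ with a small $\varepsilon$ to land inside $n^{3}$ or a polynomial bound. The paper simply states the corollary with \texttt{\textbackslash qed} and relies on the sketch in Subsection~\ref{s1.1}; you fill in the same ingredients, plus the routine (and correct) observation that a multi-tape machine running in time $\exp(cn)$ can be replaced by a one-tape machine running in time $\exp(2cn)$, keeping you inside $\mathbf{E}$ and $\mathbf{EXP}$, so that Theorem~\ref{old-main}'s one-tape hypothesis applies.
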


But it is well known that $\mathbf{PSPACE}\subseteq\mathbf{EXP}$, consequently, 
$\mathbf{PSPACE}=\mathbf{EXP}$. We have already essentially shown (so far informally) 
one of the basic statements about the equality of classes, 
namely, $\mathbf{PSPACE}=\mathbf{EXP}$, using the results of \cite{Lat22}. This allows us to 
obtain by induction on $k$ the equality of the bigger classes $\mathbf{^{(k+1)}EXP}$ and 
$\mathbf{^{k}EXPSPACE}$ by applying the padding argument.

\subsection{Main goals and objectives; the paper's structure}\label{s1.2}

The extension of the simulation period, which we have demonstrated above by considering 
generalized Theorem \ref{old-main} (see also Theorem \ref{Main} below), is a fairly easy task 
concerning its proof, although it results in an impressive consequence that \ 
$\mathbf{^kEXPSPACE}\!=\!\mathbf{^{(k+1)}EXP}$ \ for every $k\!\geqslant\!0$ (Theorem 
\ref{Imp-k}). It is more interesting to learn to write the modeling formulae $\Omega^{(m)}(X,P)$
for the Turing machines having the two tapes because such machines allow to definite 
the space complexity that is less than the length of an input string. In addition, we will 
reduce (not much) the upper bound of the length of these sentences also for $m(n)\!=\!\mathcal{O}
(n)$ by modifying some fragments of the simulation formulae (both these purposes 
will be achieved in Sections \ref{s2},\ref{s4}--\ref{s6}; see also Section \ref{s8}). This will 
allow us to obtain the more precise lower bound of the recognition complexity for the many 
decidable theories, specifically for $TQBF$ and $Th\mathcal{B}$ from Theorem \ref{old-main}. 

Our main goal is to prove that any problem from class $\mathbf{P}$ can be solved using almost 
logarithmic memory. This goal will be achieved in section \ref{s7} (Theorem \ref{PvsLog}), not 
only by limiting the simulation period but mainly by simulating the computations of two-tape 
Turing machines. To understand the proof of this theorem, one needs a detailed description 
of the formulae that model the actions of two-tape machines and which cannot be obtained from 
the formulation of Theorem \ref{Main}. For this reason, a complete proof of the main theorem 
on modeling is given in Sections \ref{s3}--\ref{s6}. At that, the paper's results  
\cite{Lat22} are improved to some extent, as described at the end of Subsection \ref{s2.1}. 

\subsection{Used methods and the main idea}\label{s1.3}

We will use the technique of the immediate codings of the machine actions employing quantified 
Boolean formulae. Let us notice that this language has good enough expressiveness.  

The computation simulation using quantified Boolean formulae was well-known for a long 
time. For example, back in 1971, S.A.~Cook modeled with such formulae the actions of 
nondeterministic machines in polynomial time \cite{Coo71}; this is probably their first 
application for modeling computations. Cook's method gives the polynomially bounded simulating 
$\exists$-formulae that model very cumbersome computations since a nondeterministic machine  
potentially is able to produce an exponential amount of configurations (or \emph{instantaneous 
descriptions} \cite{AHU76,Coo71,SM73}) in polynomial bounded time; and the description of each of 
these potential configurations can be deduced from Cook's formula.

Then, A.R.~Meyer and L.J.~Stockmeyer showed in 1973 that a language $TQBF$ is polynomially 
complete in the class $\mathbf{PSPACE}$ \cite{SM73}. This implies, in particular, that there is 
an algorithm, which produces a quantified Boolean formula for every Turing machine $P$ and for 
any input string $X$ in polynomial time, and all these sentences model the computations of the 
machine $P$ in polynomial space; namely, each of these sentences is true if and only if $P$ 
accepts the given input. The long enough computations are simulated at that, as the polynomial 
constraint on memory allows the machine to run during the exponential time \cite{AHU76,AB09,DK14, 
GJ79}. 

The method, which is employed for the implementation of simulation in \cite{SM73}, permits 
writing down a polynomially bounded formula for the modeling of the exponential quantity of 
the Turing machine steps, provided that one step is described by the formula, whose length is 
polynomial (see the proof of Theorem 4.3 in \cite{SM73}). There, one running step of the machine 
is described by a Boolean $\exists$-formula. It corresponds to the subformula of the 
formula, which has been applied for the modeling of the actions of the nondeterministic Turing 
machine in the proof of $\mathbf{NP}$-completeness of the problem SAT \cite{Coo71} (see also 
the proof of Theorem 10.3 in \cite{AHU76}). Therefore this $\exists$-formula will be termed {\it 
the Cook's method formula}.

The construct of Meyer and Stockmeyer is modified as follows in \cite{Lat22} and here. One 
running step of a machine is described by the more complicated formula that is universal  
in essence (see Remark \ref{remQu} in Section \ref{s4}). 
This complication is caused because Cook's method formula is very long 
--- it is far longer than the amount of the used memory. It has a subformula consisting of one 
propositional (or Boolean) variable $C_{i,j,t}$ --- see, for instance, the proof of 
Theorem~10.3 in \cite{AHU76}. This variable is true if and only if the $i$th cell contains the
symbol $X_j$ of the tape alphabet at the instant $t$. But suppose that each of the first 
$k\!+\!1$ squares of tape contains the symbol $X_0$ at time $t$, and the remaining part of the 
tape is empty. This simple tape configuration is described by the formula that has a fragment  
$C_{0,0,t}\wedge C_{1,0,t}\wedge\ldots\wedge C_{k,0,t}$. Only this subformula is 
$2k+1$ in length without taking the indices into account. It is impossible to abridge this 
record since applying of the universal quantifier to the indices is not allowed within 
the confines of the first-order theory. Thus, to describe the machine actions on the 
exponential fragment of tape, Cook's method formula has to be length still more. 

In \cite{Lat22}, the binary notation of the cell number is encoded by a value set of special 
variables $x_{t,0},\ldots,x_{t,m}$, where $m\!=\!\lceil\log T\rceil$, and $T$ is a 
simulation period, while $t$ denotes a step number. A similar notation will also be used 
in this paper when identifying naturally the value of the variable "true" here with the number 
one in \cite{Lat22} and the value "false" with zero. However, the length of the set of 
variables encoding cell numbers on different tapes will be distinct. This length will be 
equal to \ $s\!=\!\lceil\log n\rceil$ \ for the configurations on the first (input) tape and 
$m(n)\!=\!\lceil\log T\rceil$ for the ones on the second (work) tape, where $n$ is 
the length of an input string (see Subsections \ref{s4.1} and \ref{s6.2} and also 
Section \ref{s7} for further details), provided that the function $m(n)$ is 
space-constructible. We will also apply various types of variables for the 
describing configuration on different tapes: the capital letters with subscripts will be used 
for the first tape and lowercase ones for the second. So $\mathcal{O}(s\lceil\log s\rceil)$ 
various symbols (together with indices) are sufficient for the describing of one cell on the 
first tape, and $\mathcal{O}(n\cdot s\cdot\lceil\log s\rceil)$ ones are doing for the 
representation of the whole input. 

In addition, one can describe an instruction action of the machine, which uses $T\!=\!\exp(m(n))$ 
memory cells, with the aid of a formula that is $\mathcal{O}(m(n)\cdot\lceil\log(m(n))\rceil)$ 
in length. The main idea of such a brief description consists of the following. Firstly, a record 
can change in the only square of the second tape for each running step, while all the records 
remain unchanged on the first (input) tape. Both the heads can only shift by one position, 
although the whole computation can use a significant amount of memory. Secondly, the actions 
executed by the Turing machine on a current step exclusively depend on the records in two 
cells, at which the machine heads are aimed. Therefore, it is enough to describe "the 
inspection" of two squares, the shifts of heads, and the possible change in the only cell on 
the second tape in explicit form. In contrast, the contents of the remaining ones can be 
"copied" by applying the universal quantifier (see the construction of the formula 
$\Delta^{\textit{cop}}(\widehat{u})$ in Subsection \ref{s4.2}). So, we need to explicitly 
describe only three simple actions and not more than five squares on the tapes for every running 
step (in fact, we will only describe explicitly two or three cells --- see Subsections \ref{s3.1} 
and \ref{s4.2}) and the proof of Proposition \ref{Pr2}. 

Thus, by sacrificing the brevity of the description of one tape cell, we obtain a more brief 
description of many other fragments of the modeling formula.    

The denoted locality of the actions of deterministic machines has long been used 
in the modeling of machine computation with the help of formulae, \eg
 Lemma 2.14 in \cite{Sto74} or Lemma 7 in \cite{Vor04}.

In the beginning, all variables will be introduced in great redundancy in order 
to facilitate the proof; namely, the variables will have the first indices $t$ 
from 0 to $T$. Next, many of the variables will be eliminated using the 
methods from \cite{SM73} and \cite{Lat22} --- see Subsections \ref{s4.3.2} and 
\ref{s6.3} for further details.

The description of the initial configuration has a length of \ $\mathcal{O}(n\cdot s\cdot\lceil
\log s\rceil+m(n)\cdot\lceil\log m(n)\rceil)$ if the quantifiers are anew used. The condition 
of the successful termination of computations is very short. Hence, the entire formula, 
which simulates the first $\exp(m(n))$ steps of the computations of the machine 
$P$, will be no more than \linebreak 
$\mathcal{O}(n\cdot s\cdot\lceil\log s)\rceil\!+\!m(n)\cdot\lceil\log m(n)\rceil)
+\mathcal{O}(|P|\cdot m(n)\!+\!m^2(n))\cdot\lceil\log m(n)\rceil])$ \
in length taking into account the indices.  
 
\begin{rem}\label{rem1} 
Since $s$ is equal to $\lceil\log n\rceil$, we need to select $\lceil\log n\rceil\cdot
\lceil\log\lceil\log n\rceil\rceil$ squares for the record of the variables  
$X_{0},\ldots,X_{s}$ for describing one cell at the first tape. But the function 
$\lceil\log\lceil\log n\rceil\rceil$ is not fully space-constructible since it grows as 
$o(\log n)$. However, we do not need to measure out $\lceil\log\lceil\log n\rceil\rceil$ 
cells using the exactly same amount of memory; it is enough only to fill $s$ cells with symbols 
$X$ (this we can do even utilizing exactly $s$ cells because the function $\lceil\log n\rceil$ 
is fully space-constructible); after that, we can number these cells using a little more ones 
than $\lceil\log n\rceil\cdot\lceil\log\lceil\log n\rceil\rceil$. It is described in 
Subsection \ref{s6.4} (Lemma \ref{sp-con}) how one can do this in a more general case. 
\end{rem}

\section{The main results}\label{s2}

Here, we will state the main theorem (Theorem \ref{Main}\,, on simulation) and prove some of 
its consequences, which are the most important in the author's judgment. However, the 
application of this theorem to languages of class $\mathbf{P}$ will be discussed separately in 
Section \ref{s7}. The proof of the theorem on modeling is given in sections 
\ref{s3}--\ref{s6}.

It is assumed that the first tape of all 2-DTMs mentioned in this paper is input tape. This 
means that the machine can read the records on the first tape, but it is forbidden to erase 
and write any symbols on this tape. We recall that a function $f(n)$ is termed \ {\it fully 
space-constructible}, if there is a 2-DTM, which puts a marker exactly in the $f(n)$th square 
on the second (working) tape, and the machine head on this tape does not visit $(f(n)\!+\!1)
$th cell for every input having length $n$ \cite{DK14}. Such functions are sometimes named 
{\it space-constructible} \cite{AHU76}. 
 
\begin{thm}[on modeling]\label{Main}
Let \ $m(n)\!\geqslant\lceil\log n\rceil$ \ be a fully space-constructible function increasing 
almost everywhere, $P$ be a two-tape deterministic Turing machine (2-DTM), and $X$ be an input 
string on the first tape of machine $P$. Then there exists a closed Boolean formula (sentence) 
$\Omega^{(m)}(X,P)$ with the following properties:

(i)\ $\Omega^{(m)}(X,P)$ is true if and only if the 2-DTM $P$ accepts input $X$ within 
time \ $\exp(m(|X|))$; 

(ii) \ there  exist constants $D_{1}$, $D_2$, and $D_w$ such that the inequalities ($n\!=\!|X|
$ here and everywhere below) 
\begin{align}\label{in1}
m(n) \ < \ |\Omega^{(m)}(X,P)| \ \leqslant \ D_1\cdot n\cdot\lceil\log n\rceil\cdot 
\lceil\log\lceil\log n\rceil\rceil + \nonumber\\ 
+D_2\cdot m(n)\cdot\lceil\log m(n)\rceil 
+D_w\cdot[(|P|\cdot m(n)\!+\!m^2(n))\cdot\lceil\log m(n)\rceil] 
\end{align}
hold for every sufficiently long $X$; 

(iii) \ there is a polynomial $g(k)$ such that for all $X$ and $P$, the construction time 
of the sentence $\Omega^{(m)}(X,P)$ is not greater than \ $g(|X|\!+\!|P|\!+\!m(|X|))$, 
while required space for this is $\mathcal{O}(|\Omega^{(m)}(X,P)|)$.
\end{thm}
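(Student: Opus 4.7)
The plan is to build the formula $\Omega^{(m)}(X,P)$ as a conjunction of three parts:
an initial-configuration formula $\chi(0)$ that encodes the input $X$ on the first tape
and blank contents on the second tape at time $0$, an iterated transition formula that
ties the configurations at times $t$ and $t+1$, and a short acceptance condition
imposed at the final time $T := \exp(m(n))$. Configurations at each time $t\in\{0,\ldots,T\}$
will be represented by tuples of propositional variables: cells of the first (input) tape
are addressed by tuples of length $s=\lceil\log n\rceil$, while cells of the second
(work) tape are addressed by tuples of length $m(n)=\lceil\log T\rceil$; additional
variables encode the machine's internal state and the two head positions in binary.
The indices on variables will themselves be written in binary, costing
$\lceil\log s\rceil$ and $\lceil\log m(n)\rceil$ symbols respectively per variable name.

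For a single transition, the idea I would follow is the locality trick already advertised
in Subsection~\ref{s1.3}: only two cells (those scanned by the two heads) can affect the
step, and only one cell (on the work tape) can change. Concretely, I would write a
formula $\Delta(t,t+1)$ of the form "there exist values of the two scanned symbols and of
the two head coordinates such that (a) these coordinates are the correct head positions
at time $t$, (b) the transition function $P$ maps the pair of scanned symbols and the
state to the new symbol, new state, and head shifts at time $t+1$, and (c) for every other
address $\widehat{u}$ on the work tape, the content at time $t+1$ equals that at time
$t$; and analogously nothing is written on the input tape". Clause (c) is the subformula
$\Delta^{\textit{cop}}(\widehat{u})$ hinted at in the introduction; the universal quantifier
over $\widehat{u}$ is the key device that keeps $\Delta(t,t+1)$ of length
$\mathcal{O}((|P|+m(n))\cdot\lceil\log m(n)\rceil)$ instead of exponential. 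Enumerating
$|P|$ possible transitions contributes the factor $|P|$, while binary address comparisons
of two tuples of length $m(n)$ contribute the $m(n)\lceil\log m(n)\rceil$ factor.

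To fold $T=\exp(m(n))$ repeated transitions into a short formula I would use the
Meyer--Stockmeyer recursion: define $\Psi_k(U,V)\equiv$ "configuration $V$ follows from
configuration $U$ in at most $2^k$ steps" by
$\Psi_0=\Delta$ and $\Psi_{k+1}(U,V)\equiv\forall U',V'\,[((U',V')=(U,W)\vee(U',V')=(W,V))
\to\Psi_k(U',V')]$ for a freshly quantified intermediate $W$; this is the standard way to
avoid duplication of $\Psi_k$ in its own body. After $m(n)$ recursion levels one obtains
$\Psi_{m(n)}(\chi(0),\textit{accept})$, contributing the $m^2(n)\lceil\log m(n)\rceil$
summand in~(\ref{in1}). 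Together with the initial-configuration formula, which costs
$D_1\cdot n\cdot\lceil\log n\rceil\cdot\lceil\log\lceil\log n\rceil\rceil$ symbols for
writing out the $n$ input symbols each in its own $s$-indexed slot, and
$D_2\cdot m(n)\lceil\log m(n)\rceil$ symbols for the (short) blank-tape description
and the acceptance clause, the stated length bound~(\ref{in1}) follows. As mentioned
in Subsection~\ref{s4.3.2}, many of the redundantly introduced $t$-indexed variables can
then be eliminated using the techniques of~\cite{SM73} and~\cite{Lat22}; I would verify
that this elimination does not spoil the bound.

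Finally, to secure (iii) I would produce $\Omega^{(m)}(X,P)$ by a straightforward syntactic
routine that generates the three parts sequentially; each symbol of the output depends on a
counter ranging over positions, time indices, and addresses, all writable in binary using
$\mathcal{O}(|\Omega^{(m)}(X,P)|)$ space. Full space-constructibility of $m(n)$ is used here
to ensure that a scratch block of exactly $m(n)$ cells can be marked (and, as in
Remark~\ref{rem1}, re-indexed so that we can write down variable subscripts of length
$\lceil\log m(n)\rceil$ without exceeding the allotted memory). The main technical
obstacle I anticipate is twofold: (1)~arranging the transition formula so that the
quantifier prefix of $\Delta(t,t+1)$ interacts cleanly with the $\Psi_k$ recursion,
producing a single well-formed $\forall$/$\exists$ body of the claimed length rather than
a formula that grows by a constant factor at every recursion level; and
(2)~accounting carefully for the subscript lengths of variables so that the first-tape
contribution stays within $n\lceil\log n\rceil\lceil\log\lceil\log n\rceil\rceil$ rather
than the easier but weaker $n\lceil\log n\rceil^{2}$ bound. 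The latter is precisely the
quantitative improvement over Theorem~\ref{old-main} that the paper is aiming for.
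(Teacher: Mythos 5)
Your proposal follows essentially the same approach as the paper: different-length address tuples for the two tapes, the locality trick with a universal quantifier over work-tape addresses for the unchanged cells (the paper's $\Delta^{cop}(\widehat{u})$), the Meyer--Stockmeyer recursion $\Psi_{k+1}(U,V)\equiv\exists W\,\forall U',V'\,[\ldots]$ to fold $\exp(m(n))$ steps into $m(n)$ levels, and the same attribution of the three summands in Inequality~\ref{in1}. The one cosmetic divergence is that you frame the top-level formula as a conjunction of (initial config) $\wedge$ (transitions) $\wedge$ (accept), whereas the paper writes it as a universally quantified implication $\forall\widehat{y}_0,\widehat{y}_T\,[(\chi(0)\,\&\,\Phi^{(m)}(P))\to\chi(\omega)]$; for deterministic machines both can be made correct, but the implication form is what makes the paper's correctness argument (Propositions~\ref{Pr2}--\ref{Pr5}, establishing that the modeling is single-valued and sufficient) go through cleanly, and your plan would still need that verification layer, which you leave implicit.
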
  

{\bf See the proof in Sections \ref{s3}--\ref{s6}}. At first, the simulation formulae will have a 
huge number of "redundant" variables. The constructed formulae will be rewritten in the 
brief form after the correctness of this modeling has been ascertained (see Propositions 
\ref{Pr2}(ii), \ref{Pr3}, and \ref{Pr4}(ii) below); the Meyer and Stockmeyer method is 
substantially used at that.

\begin{rems}\label{rem2}
a) \ The function $g(|X|\!+\!|P|\!+\!m(|X|))$ may depend on $|X|$ in a
not polynomial way, despite $g(k)$ is polynomial because the 
function $m(n)$ may be not polynomial; \eg $m(n)$ can be $k$-fold exponential.

b) \ The upper estimate of the length of the modeling formula has three summands in the 
right part of Inequality \ref{in1}. The first of them, $D_1\cdot n\cdot\lceil\log n\rceil\cdot
\lceil\log\lceil\log n\rceil\rceil$ evaluates the length of the formula that describes an 
input string on the first tape. The next $D_2\cdot m(n)\cdot\log m(n)$ does it for the 
formula, which asserts that the second tape is empty before the beginning of the machine work. 
The third summand does it for the subformula that describes the actions of the machine $P$. 
The summand corresponding to the condition of the successful termination of computations is
absent here, as this subformula is very short; its length  
is not more than $c\cdot m(n)$ for appropriate constant $c$. 
\end{rems}

\begin{cor}[of the main theorem]\label{estTQBF} 
Under the conditions of the theorem, if $n\!\leqslant\!
m(n)\!\leqslant\!Cn$ for some constant $C$, then for every $\varepsilon\!>\!0$, there is 
a constant \ $D_{C}\!>\!0$ such that the inequality 
$|\Omega^{(m)}(X,P)|\!\leqslant\!D_{C}\cdot|X|^{2+\varepsilon}$ \ 
is valid for all long enough $X$; more precisely, \break $|\Omega^{(m)}(X,P)|\!\leqslant\!
D_{C}\cdot|X|^2\cdot\lceil\log|X|\rceil$.
\end{cor}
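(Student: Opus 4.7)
The plan is a direct substitution of the hypothesis $m(n)\leq Cn$ into inequality (\ref{in1}) of Theorem \ref{Main}, followed by bounding each of the three summands on the right-hand side separately by a constant multiple of $n^{2}\lceil\log n\rceil$. Since $\lceil\log\lceil\log n\rceil\rceil = o(\lceil\log n\rceil)$, the first summand $D_{1}\cdot n\cdot\lceil\log n\rceil\cdot\lceil\log\lceil\log n\rceil\rceil$ is already $o(n\cdot\lceil\log n\rceil^{2})$ and in particular $o(n^{2}\lceil\log n\rceil)$, so it is harmless. For the second summand, substituting $m(n)\leq Cn$ gives $D_{2}\cdot Cn\cdot\lceil\log(Cn)\rceil = O(n\lceil\log n\rceil)$, again trivially absorbed.

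The only term that matters is the third one. Here I would treat $|P|$ as a constant (the machine $P$ is fixed, and the final constant $D_{C}$ is allowed to depend on $|P|$ and $C$). Then
\[
(|P|\cdot m(n) + m^{2}(n))\cdot\lceil\log m(n)\rceil \ \leq \ (|P|\cdot Cn + C^{2}n^{2})\cdot(\lceil\log n\rceil + \lceil\log C\rceil),
\]
which for sufficiently long $X$ is bounded by a constant (depending on $C,|P|,D_{w}$) times $n^{2}\lceil\log n\rceil$. Summing the three contributions yields a constant $D_{C}$ with $|\Omega^{(m)}(X,P)|\leq D_{C}\cdot |X|^{2}\cdot\lceil\log|X|\rceil$, which is the sharper form of the corollary. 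The weaker bound $|\Omega^{(m)}(X,P)|\leq D_{C}\cdot|X|^{2+\varepsilon}$ is then immediate: for any fixed $\varepsilon>0$ we have $\lceil\log n\rceil\leq n^{\varepsilon}$ for all sufficiently large $n$, and this factor is absorbed into the exponent (after possibly enlarging $D_{C}$).

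There is essentially no real obstacle in this proof; it is a routine unwinding of the three-term estimate of Theorem \ref{Main}. The one mildly delicate point worth flagging is the status of $|P|$: since the statement does not explicitly fold $|P|$ into the constant, I would make it clear at the outset that $D_{C}$ is permitted to depend on $|P|$ and $C$ (this is standard in such corollaries, as $P$ is fixed throughout the discussion and only the input length $n=|X|$ varies). Once that is acknowledged, the linear-in-$|P|$ contribution $|P|\cdot m(n)\cdot\lceil\log m(n)\rceil$ is dominated by $m^{2}(n)\cdot\lceil\log m(n)\rceil$ for large $n$, and the quadratic bound follows without further subtlety.
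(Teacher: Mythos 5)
Your proof is correct and follows essentially the same route as the paper's very terse proof: substitute $m(n)\leq Cn$ into Inequality~(\ref{in1}), bound each of the three summands by a constant times $n^{2}\lceil\log n\rceil$, and then upgrade $\lceil\log n\rceil$ to $n^{\varepsilon}$ for large $n$. The one place where your framing diverges from the paper is the treatment of $|P|$. You propose to fold $|P|$ into the constant $D_{C}$, allowing $D_{C}$ to depend on $P$. The paper instead restricts to $|X|\geqslant|P|$, so that $|P|\cdot m(n)\leqslant n\cdot Cn\leqslant C^{2}n^{2}$ (using $m(n)\geqslant n$), and the resulting $D_{C}$ depends only on $C$ and the universal constants $D_{1},D_{2},D_{w}$ --- not on $P$. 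This is not cosmetic: Corollary~\ref{estTQBF} feeds into Corollary~\ref{comRecB} via Proposition~\ref{mainPr}, and item~(iii) of that proposition explicitly requires the constants in the length bound to be independent of $P$; a $P$-dependent $D_{C}$ would break that chain. Your own closing observation --- that $|P|\cdot m(n)\cdot\lceil\log m(n)\rceil$ is dominated by $m^{2}(n)\cdot\lceil\log m(n)\rceil$ once $n\geqslant|P|$ --- is exactly the mechanism the paper uses, and it does yield a $P$-independent $D_{C}$; you should make that the main argument rather than an aside, and drop the earlier claim that $D_{C}$ is "permitted to depend on $|P|$."
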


\begin{proof} 
The latter inequality follows from Inequality \ref{in1} and the condition 
$m(n)\!\leqslant\!Cn$; it is valid for all $X$ such that $|X|\geqslant|P|$. For each  
$\varepsilon\!>\!0$, there is a number $k$ such that an inequality \ 
$(\lceil\log n\rceil\leqslant\!n^{\varepsilon}$ holds for \ $n\!\geqslant\!k$. The first 
inequality is obtained from this and the second one.   
\end{proof}

\subsection{The computational complexity of decidable theories}\label{s2.1}

The following consequences of the main theorem require some concepts and material 
from \cite{Lat22}; we will briefly remind the reader of this information. The function 
$F(n)$, which is monotone increasing on all sufficiently large $n$, is called a {\it limit 
upper bound for the class of all polynomials} (LUBP) if, for any polynomial $p$, there is a 
number $k$ such that the inequality $F(n)\!>\!p(n)$ holds for $n\!\geqslant\!k$, i.e., each 
polynomial is asymptotically smaller than $F$. An obvious example of the limit upper bound 
for all polynomials is a $k$-iterated exponential for every $k\!\geqslant\!1$.

The following proposition is a certain generalization of the corresponding 
assertion from Subsection 4.1 in \cite{Rab99}; it is formulated here to estimate 
the computational complexity of decidable theories. However, it is valid for 
arbitrary languages --- see Proposition 2.1 in \cite{Lat22}.

\begin{prop}[\cite{Lat22}, the generalized Rabin and Fischer method]{}\label{mainPr} 
Let $F$ be a limit upper bound for all polynomials and $\mathcal{T}$ be a theory 
in the signature (or {\it underlying language}) $\sigma$. Suppose that for any $k$-tape DTM 
$P$ and every string $X$ on the input tape of this machine, one can effectively construct a 
sentence $S(P,X)$ of the signature $\sigma$, possessing the following properties:

(i) $S(P,X)$ can be built within time $g(|X|\!+\!|P|)$, where $g$ is a polynomial 
fixed for all $X$ and $P$;

(ii) $S(P,X)$ belongs to $\mathcal{T}$ if and only if the machine $P$ accepts the
input $X$ within $F(|X|)$ steps; 

(iii) there exist constants $D,b,s\!>\!0$ such that the inequalities \ \
$|X|\!\leqslant\!|S(P,X)|\!\leqslant\!D\cdot|P|^b\cdot|X|^s$ \ hold for all 
sufficiently long $X$, and these constants do not depend on $P$.

Then \ $\mathcal{T}\!\notin\!k$-$DTIME[F(D^{-\zeta}\cdot n^{\zeta})]$ holds for 
$\zeta\!=\!s^{-1}$. 
\end{prop}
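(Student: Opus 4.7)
The plan is to proceed by contradiction via a diagonalization in the Rabin--Fischer style.  Suppose that $\mathcal{T}\in k$-$DTIME[F(D^{-\zeta}n^{\zeta})]$ and fix a $k$-tape DTM $M$ that decides $\mathcal{T}$ within this bound on inputs of length $n$.  I would design a diagonal $k$-tape DTM $P^*$ that, on a (possibly padded) input $X$, (a) constructs the sentence $S(P^*,X)$ using the algorithm of (i), in time $g(|X|+|P^*|)$; (b) simulates $M$ on $S(P^*,X)$; and (c) accepts $X$ iff $M$ rejects $S(P^*,X)$.

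The next step is to bound the total running time $\tau(X)$ of $P^*$.  Using (iii) together with the identity $\zeta s=1$,
\[
D^{-\zeta}\,|S(P^*,X)|^{\zeta}\;\leq\;D^{-\zeta}\bigl(D\,|P^*|^{b}|X|^{s}\bigr)^{\zeta}\;=\;|P^*|^{b\zeta}\cdot|X|,
\]
so phase (b) takes at most $c\,F(|P^*|^{b\zeta}|X|)$ steps and
\[
\tau(X)\;\leq\;g(|X|+|P^*|)\;+\;c\,F\bigl(|P^*|^{b\zeta}\,|X|\bigr).
\]
I would then calibrate a padding scheme (inputs of the form $X=Y1^{p}$ with padding length $p$ tuned to the constants $|P^*|^{b\zeta}$ and $c$) so that $\tau(X)\leq F(|X|)$ holds for every sufficiently long padded input.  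The polynomial overhead $g(|X|+|P^*|)$ is dominated by $F(|X|)$ because $F$ is a limit upper bound for all polynomials; the multiplicative constant $c$ is absorbed by the linear-speedup theorem; and the padding length is chosen to compensate for the constant factor $|P^*|^{b\zeta}$ appearing inside $F$.

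The diagonal then closes in the usual way: by (ii) applied to $(P^*,X)$, $S(P^*,X)\in\mathcal{T}$ iff $P^*$ accepts $X$ within $F(|X|)$ steps, and by the time bound from the previous step this is equivalent to $P^*$ accepting $X$ outright.  Combined with the construction of $P^*$, which accepts $X$ iff $S(P^*,X)\notin\mathcal{T}$, one obtains $P^*(X)=\neg P^*(X)$ on infinitely many inputs, which is impossible for a deterministic machine.  The main obstacle I expect is the padding bookkeeping in step three: the constant $|P^*|^{b\zeta}$ sits \emph{inside} the argument of $F$, so it cannot be absorbed by the big-$\mathcal{O}$ in the definition of $DTIME$, and the argument rests squarely on the LUBP property of $F$ together with a carefully chosen padding length, exactly as in the classical Rabin--Fischer lower-bound method.
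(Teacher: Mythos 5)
The paper does not actually prove Proposition~\ref{mainPr}; it is quoted from \cite{Lat22} and the reader is referred to Proposition~2.1 there. So there is no in-paper proof to compare against, and your attempt has to be judged on its own. Your overall strategy (the Fischer--Rabin diagonal machine $P^*$ that builds $S(P^*,X)$, queries the hypothetical decider $M$, and flips the answer) is indeed the intended method, and your estimate $D^{-\zeta}|S(P^*,X)|^{\zeta}\leq|P^*|^{b\zeta}|X|$ is the right calculation. You also correctly single out the real difficulty: the factor $|P^*|^{b\zeta}$ sits inside the argument of $F$.

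However, the ``calibration'' in your third step is the gap, not just an obstacle you expect to overcome. Padding the input to $X=Y1^{p}$ cannot neutralize a multiplicative constant inside $F$'s argument: if $|X|$ is replaced by $|Y|+p$, then the bound you need, $g(|X|+|P^*|)+cF(|P^*|^{b\zeta}|X|)\leq F(|X|)$, is rescaled on both sides in the same way, so the comparison $F(|P^*|^{b\zeta}|X|)$ versus $F(|X|)$ is unchanged by padding. The LUBP hypothesis only dominates the polynomial term $g(|X|+|P^*|)$; it says nothing that would make $F(c\,n)\leq F(n)$ when $c=|P^*|^{b\zeta}>1$, and in fact for any reasonable $F$ (e.g.\ $F=\exp$) the ratio $F(cn)/F(n)$ diverges. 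Linear speedup absorbs the $\mathcal{O}$-constant in front of $F$, but again not the constant inside its argument. In short, step three as written is a non sequitur: with the tools you invoke, the diagonal machine's running time is bounded only by something of the form $\mathrm{poly}+c\,F(|P^*|^{b\zeta}|X|)$, which for $|P^*|>1$ exceeds $F(|X|)$, so the equivalence ``$P^*$ accepts $X$ $\Leftrightarrow$ $P^*$ accepts $X$ within $F(|X|)$ steps'' --- which is the whole point of the diagonal --- does not follow. A correct write-up would have to either (a) change the conclusion to $\mathcal{T}\notin k\text{-}DTIME[F(c\,n^{\zeta})]$ for a constant $c$ that does depend on $|P^*|$ (i.e.\ treat $D\,|P^*|^{b}$ as the effective ``$D$'' once $P^*$ is fixed), or (b) exhibit a genuinely different trick that removes the $|P^*|$-dependence, and you have supplied neither. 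As it stands, I would not accept the argument as a proof of the stated bound.
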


The following statement improves the lower bound on the recognition complexity of the theory 
of the most straightforward Boolean algebra and the language $TQBF$. 

\begin{cor}\label{comRecB} 
Let $\mathcal{B}$ be a two-element Boolean 
algebra. For every $\varepsilon\!>\!0$, neither $TQBF$ nor $Th\mathcal{B}$
belong to \ 2-$DTIME[\exp(D\cdot n^{\rho})]$ \ for some constant $D\!<\!1$, where 
$\rho\!=\!(2\!+\!\varepsilon)^{-1}$ and $n^{\varepsilon}\!\geqslant\!\log n$.
\end{cor}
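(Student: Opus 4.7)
The plan is to combine the main theorem on modeling (Theorem~\ref{Main}) with the generalized Rabin and Fischer method (Proposition~\ref{mainPr}). First, I would instantiate Theorem~\ref{Main} with the linear simulation period $m(n)\!=\!n$, which is fully space-constructible and satisfies $m(n)\!\geqslant\!\lceil\log n\rceil$. This attaches to every two-tape DTM $P$ and input $X$ a closed Boolean formula $\Omega^{(n)}(X,P)$ of the signature of $\mathcal{B}$ that is true --- equivalently, lies in both $TQBF$ and $Th\mathcal{B}$ --- if and only if $P$ accepts $X$ within $\exp(|X|)$ steps. By item~(iii) of the theorem, $\Omega^{(n)}(X,P)$ can be built within time $g(2|X|\!+\!|P|)$ for a fixed polynomial $g$, meeting the polynomial construction-time hypothesis of Proposition~\ref{mainPr}.

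Next, I would invoke Corollary~\ref{estTQBF} (with $C\!=\!1$) to control the length: $|\Omega^{(n)}(X,P)|\!\leqslant\!D_{C}\!\cdot\!|X|^{2}\lceil\log|X|\rceil$ whenever $|X|\!\geqslant\!|P|$, with $D_{C}$ depending on neither $X$ nor $P$. Fixing an arbitrary $\varepsilon\!>\!0$ and using the premise $n^{\varepsilon}\!\geqslant\!\log n$, the $\lceil\log|X|\rceil$ factor is absorbed into $|X|^{\varepsilon}$, so that $|X|\!<\!|\Omega^{(n)}(X,P)|\!\leqslant\!D_{C}\!\cdot\!|P|\!\cdot\!|X|^{2+\varepsilon}$ for all sufficiently long $X$ (the lower bound being the left half of Theorem~\ref{Main}(ii), and the harmless $|P|$ inserted only to fit the template $D\cdot|P|^{b}\cdot|X|^{s}$ of Proposition~\ref{mainPr}). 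Applying Proposition~\ref{mainPr} with $k\!=\!2$, $F(n)\!=\!\exp(n)$, $b\!=\!1$, $s\!=\!2\!+\!\varepsilon$, $D\!=\!D_{C}$, and $\zeta\!=\!\rho\!=\!(2\!+\!\varepsilon)^{-1}$ then yields that neither $TQBF$ nor $Th\mathcal{B}$ belongs to $2$-$DTIME[\exp(D_{C}^{-\rho}\!\cdot\!n^{\rho})]$.

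To pin down the explicit constant $D\!<\!1$ demanded by the corollary, I would observe that non-containment in $DTIME[\exp(c\cdot n^{\rho})]$ automatically passes down to $DTIME[\exp(D\cdot n^{\rho})]$ for every $D\!\leqslant\!c$, since the latter class is contained in the former. Setting $c\!=\!D_{C}^{-\rho}$ and choosing any $D$ with $0\!<\!D\!<\!\min\{1,\,D_{C}^{-\rho}\}$ therefore completes the proof. I anticipate no substantive obstacle, since the heavy lifting is already absorbed into Theorem~\ref{Main}; the only items requiring care are verifying that the length bound of Corollary~\ref{estTQBF} is genuinely uniform in $P$ on the range $|X|\!\geqslant\!|P|$, and that the construction-time bound of Theorem~\ref{Main}(iii) collapses to an honest polynomial in $|X|\!+\!|P|$ when $m(n)\!=\!n$ rather than, say, a $k$-fold exponential --- both are immediate from the quoted statements.
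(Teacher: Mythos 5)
Your proof takes the same route as the paper: Theorem~\ref{Main} with $m(n)\!=\!n$, Corollary~\ref{estTQBF} to control the formula length, and Proposition~\ref{mainPr} to convert this into a non-membership result. The bookkeeping with $b\!=\!1$, $s\!=\!2+\varepsilon$, $\zeta\!=\!\rho$, and the downward-monotonicity observation to secure $D\!<\!1$ are all fine and make explicit what the paper's one-line proof leaves implicit.

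There is, however, one small gap on the $Th\mathcal{B}$ half. The formula $\Omega^{(n)}(X,P)$ produced by Theorem~\ref{Main} is a quantified \emph{Boolean} formula (a QBF) over propositional variables, not a first-order sentence of the signature of $\mathcal{B}$, so calling it ``of the signature of $\mathcal{B}$'' is not accurate and the clause ``lies in\,\ldots\,$Th\mathcal{B}$'' is not meaningful as written. Proposition~\ref{mainPr} requires $S(P,X)$ to be a sentence of the underlying language of the theory $\mathcal{T}$; to invoke it with $\mathcal{T}=Th\mathcal{B}$ you must first rewrite $\Omega^{(n)}(X,P)$ as an equivalent first-order sentence about the two-element Boolean algebra. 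The paper supplies precisely this step: it appeals to the fact that each QBF converts to an equivalent $Th\mathcal{B}$-sentence (and conversely) with only linear lengthening, so the bound $\leqslant D_C\cdot|P|\cdot|X|^{2+\varepsilon}$ survives up to a constant factor and Proposition~\ref{mainPr} applies to both languages. Once you insert that translation, your argument is complete.
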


\begin{proof} 
It straightforwardly follows from the proposition and Corollary \ref{estTQBF}, because the 
language $TQBF$ and theory $Th\mathcal{B}$ not only are polynomially equivalent, but also each 
sentence of $TQBF$ can be rewritten as an equivalent sentence of $Th\mathcal{B}$ with the 
linear lengthening, and vice versa.   
\end{proof}

\begin{cor}[\cite{Lat22}]\label{mainCor1} 
The class $\mathbf{P}$ is a proper subclass of the class $\mathbf{PSPACE}$. 
\end{cor}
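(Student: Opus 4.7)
The plan is to combine the quantitative lower bound provided by Corollary \ref{comRecB} with the classical $\mathbf{PSPACE}$-completeness of $TQBF$ to obtain a strict separation. First I would invoke Corollary \ref{comRecB}, which, for any fixed $\varepsilon>0$ satisfying $n^{\varepsilon}\geqslant\log n$, places $TQBF$ outside $2\text{-}DTIME[\exp(D\cdot n^{\rho})]$ with $\rho=(2+\varepsilon)^{-1}>0$. Since $\rho$ is strictly positive, the function $\exp(D\cdot n^{\rho})$ is superpolynomial: for every polynomial $p(n)$ there is some $n_0$ such that $\exp(D\cdot n^{\rho})>p(n)$ for all $n\geqslant n_0$. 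Hence $TQBF\notin 2\text{-}DTIME[p(n)]$ for any polynomial $p$; passing from two tapes to the general polynomial-time class only changes the polynomial, so $TQBF\notin\mathbf{P}$.

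Next, I would recall the Meyer--Stockmeyer theorem \cite{SM73,GJ79} already cited in Subsection \ref{s1.1}: the language $TQBF$ is complete for $\mathbf{PSPACE}$ under polynomial-time reductions. Consequently, the inclusion $TQBF\in\mathbf{P}$ is equivalent to $\mathbf{PSPACE}\subseteq\mathbf{P}$. Because we have just shown $TQBF\notin\mathbf{P}$, this equivalence yields $\mathbf{PSPACE}\not\subseteq\mathbf{P}$.

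Finally, I would conclude by combining the trivial inclusion $\mathbf{P}\subseteq\mathbf{PSPACE}$ (every deterministic polynomial-time machine uses at most polynomial space) with the previous step, obtaining $\mathbf{P}\subsetneq\mathbf{PSPACE}$ as required. The only genuine content in this corollary is the superpolynomial lower bound from Corollary \ref{comRecB}, which in turn rests on Theorem \ref{Main} and Proposition \ref{mainPr}; everything remaining is standard. In that sense, there is no real obstacle in the present step, beyond the elementary verification that $\exp(D\cdot n^{\rho})$ eventually dominates every polynomial, which is immediate from $\rho>0$.
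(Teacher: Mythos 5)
Your proposal is correct and follows essentially the same route as the paper: Corollary \ref{comRecB} gives a superpolynomial time lower bound on $TQBF$, hence $TQBF\notin\mathbf{P}$, while $TQBF\in\mathbf{PSPACE}$ (indeed $\mathbf{PSPACE}$-complete), and so the trivial inclusion $\mathbf{P}\subseteq\mathbf{PSPACE}$ is strict. The paper phrases the second step directly via membership of $TQBF$ (and $Th\mathcal{B}$) in $\mathbf{PSPACE}$, whereas you route through completeness; both are immediate and lead to the same conclusion.
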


\begin{proof} 
The theory $Th\mathcal{B}$ and the language $TQBF$ do not belong to the class $\mathbf{P}$ 
under the previous corollary. But both belong to the class $\mathbf{PSPACE}$; 
moreover, they are polynomially complete for this class \cite{GJ79,SM73}.   
\end{proof}

Let $\mathfrak{K}$ be a class of the algebraic systems, whose signature $\sigma$ contains the 
symbol of the binary predicate $\sim$, and this predicate is interpreted as an equivalence 
relation on every structure of the class; in particular, $\sim$ may be an equality relation. 
These relations are denoted by the same symbol $\sim$ in all systems of the class.

\begin{defi}[\cite{Lat22}]\label{eqnotr}  
Let us assume that there exists a 
\emph{$\sim$-nontrivial} system $\mathcal{E}$ in a class $\mathfrak{K}$, i.e., 
this structure contains at least two $\sim$-nonequivalent elements.  
Then the class $\mathfrak{K}$ is also termed \emph{$\sim$-nontrivial}. 
A theory $\mathcal{T}$ is named \emph{$\sim$-nontrivial} if it has a 
$\sim$-nontrivial model. When the relation $\sim$ is either the equality or there is  
a formula $N(x,y)$ of the signature $\sigma$ such that the sentence $\exists x,yN(x,y)$  
is consistent with the theory $Th\mathfrak{K}$ (or $\mathcal{T}$, or belongs to 
$Th\mathcal{E}$), and this formula means that the elements $x$ and 
$y$ are not equal, then the term "$\sim$-nontrivial" will be replaced with 
\emph{"equational-nontrivial"}. If it does not matter which equivalence relation is meant, 
we will discuss \emph{equivalency-nontrivial} theory, class, or structure.
\end{defi}
 
Prominent examples of such theories are the theories of pure equality and one 
equivalence relation. Furthermore, nearly all decidable theories mentioned in the surveys 
\cite{ELTT65,Rab99} are nontrivial regarding equality or equivalence.

\begin{thm}\label{Teqnotr} 
Let $\mathcal{E}$, $\mathfrak{K}$, and $\mathcal{T}$ 
accordingly be a $\sim$-nontrivial system, class, and theory of the signature 
$\sigma$, in particular, they may be equational-nontrivial. Then \ 
there exists an algorithm that builds the sentence $\Omega^{(T)}(X,P)$ 
of $\sigma$ for every 2-DTM $P$ and any of its input $X$, where 
$T\!\in\!\{Th\mathcal{E},Th\mathfrak{K},\mathcal{T}\}$; \ this formula  
possesses the properties (i) and (ii) of the sentence $S(P,X)$ from the formulation 
of Proposition \ref{mainPr} for $F(|X|)\!=\!\exp(|X|)$. Moreover, for each  
$\varepsilon\!>\!0$, there is a constant $E_{T,\sigma}$ such that the inequalities \ 
$|X|\!<\!|\Omega^{(T)}(X,P)|\!\leqslant\!E_{T,\sigma}\cdot|P|\cdot|X|^{2+\varepsilon}$ \ 
hold for any long enough $X$ and such that $|X|^{\varepsilon}\!\geqslant\!\log|X|$.
\end{thm}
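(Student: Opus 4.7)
The plan is to reduce to the Boolean case already handled by the main theorem by interpreting the modeling QBF inside an arbitrary $\sim$-nontrivial $\sigma$-structure. First, I apply Theorem \ref{Main} with $m(n) := \max(n, \lceil \log n \rceil)$ to obtain, for every 2-DTM $P$ and every input $X$, a closed QBF $\Omega^{(m)}(X,P)$ which is true iff $P$ accepts $X$ within $\exp(|X|)$ steps; by inequality (\ref{in1}) its length is bounded by $O\bigl((|P|+n)\,n \lceil \log n \rceil\bigr)$ and it is constructible in polynomial time. I then push this QBF along a length-linear interpretation into $\sigma$, using the formula $N(u,v)$ supplied by the hypothesis (in the $\sim$-nontrivial case simply $\neg(u \sim v)$, in the equational-nontrivial case the fixed formula from Definition \ref{eqnotr}).

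The interpretation is the standard relativization to two equivalence classes. Fix fresh variables $a,b$; to each propositional variable $v$ of the QBF assign a distinct first-order variable $x_v$. Inductively define $\varphi^*$ by replacing every occurrence of $v$ by the atom $x_v \sim a$ and relativizing every quantifier $Qv\,\psi$ to $\{x : x \sim a \vee x \sim b\}$ in the usual way ($\forall$ via implication, $\exists$ via conjunction). Set
\[
\Omega^{(T)}(X,P) \; := \; \exists a\, \exists b \bigl(N(a,b) \wedge \varphi^*\bigr).
\]
Correctness is verified by induction on $\Omega^{(m)}(X,P)$: if $P$ accepts $X$ in time $\exp(|X|)$, then in any $\sim$-nontrivial model we may instantiate $a,b$ by witnesses of $N(a,b)$, and the equivalence classes $[a]_{\sim}, [b]_{\sim}$ play the roles of Boolean truth values, so $\varphi^*$ evaluates in the model exactly as the QBF does; if $P$ does not accept, $\Omega^{(m)}(X,P)$ is a false QBF, so no admissible choice of $a,b$ can make $\varphi^*$ true, and $\Omega^{(T)}(X,P)$ fails in every $\sim$-nontrivial model. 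This yields property (ii) of Proposition \ref{mainPr} in the belonging-sense for $Th\mathcal{E}$ and in the consistency-sense permitted by Definition \ref{eqnotr} for $Th\mathfrak{K}$ and $\mathcal{T}$.

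For the length bound, each Boolean variable and each quantifier inflates by an additive constant depending only on $|N|$ and $|\sigma|$, hence $|\Omega^{(T)}(X,P)| \leq c_\sigma \cdot |\Omega^{(m)}(X,P)| + O(|N|)$ for the head block. Combining with Theorem \ref{Main} yields $|\Omega^{(T)}(X,P)| \leq c'_{\sigma}\cdot(|P|+|X|)\cdot|X|\cdot \lceil \log |X|\rceil$, and the assumption $|X|^\varepsilon \geq \log |X|$ collapses this to the required $E_{T,\sigma}\cdot |P|\cdot |X|^{2+\varepsilon}$. Property (i) of Proposition \ref{mainPr} follows because the translation is a linear-time syntactic pass after Theorem \ref{Main}(iii). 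The main subtlety is checking that Boolean semantics is preserved when $\sim$ is merely an equivalence rather than equality: this is exactly why each Boolean atom is replaced by $x_v \sim a$ rather than $x_v = a$, and why $N$ must assert non-equivalence; once the relativization restricts every $x_v$ to $[a]_{\sim} \cup [b]_{\sim}$, the two-valued semantics is recovered uniformly, with no hidden dependence on the richer structure of the ambient model.
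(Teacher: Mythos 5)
Your overall strategy matches the paper's: apply Theorem~\ref{Main} with $m(n)=|X|$ to get a modeling QBF, then translate it into a $\sigma$-sentence by relativizing quantifiers to a two-element ``Boolean domain'' $[a]_\sim\cup[b]_\sim$ picked out by the witnesses of $N$. The inner translation (replacing a Boolean atom $v$ by $x_v\sim a$, relativizing $\forall$ by implication, $\exists$ by conjunction) is sound and your length bookkeeping is right. The paper itself delegates all of this to Theorem~7.1 of \cite{Lat22}, so spelling it out is useful.

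However, there is a genuine gap in the outer quantification. You set
$\Omega^{(T)}(X,P) := \exists a\,\exists b\,(N(a,b)\wedge\varphi^*)$,
and this only delivers Proposition~\ref{mainPr}(ii) for $T = Th\mathcal{E}$. For $T = Th\mathfrak{K}$ and $T = \mathcal{T}$ the definition of $\sim$-nontriviality only requires the \emph{existence} of one nontrivial structure; the class (or the collection of models of $\mathcal{T}$) is free to contain $\sim$-trivial structures with a single equivalence class. In such a structure $\exists a\,\exists b\,N(a,b)$ is false, so your $\Omega^{(T)}(X,P)$ fails there even when $P$ accepts $X$, and hence $\Omega^{(T)}(X,P)\notin Th\mathfrak{K}$ (respectively $\notin\mathcal{T}$). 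That breaks the ``accepts $\Rightarrow$ belongs'' direction of Proposition~\ref{mainPr}(ii). You noticed this yourself and retreated to ``the consistency-sense'' for those two cases, but Theorem~\ref{Teqnotr} explicitly asserts the belonging property (ii), not a consistency variant, and Corollary~\ref{comRecEqNtr} relies on the belonging version when it invokes Proposition~\ref{mainPr}.

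The repair is to flip the outer block to a universal guard,
$\Omega^{(T)}(X,P) := \forall a\,\forall b\,\bigl(N(a,b)\rightarrow\varphi^*\bigr)$.
Then $\sim$-trivial models satisfy the sentence vacuously (the premise is never met), while in any $\sim$-nontrivial model the relativized formula $\varphi^*$ evaluates exactly as the QBF for every admissible pair $a,b$, so the sentence holds iff the QBF is true. This gives both directions of property~(ii) uniformly for $Th\mathcal{E}$, $Th\mathfrak{K}$, and $\mathcal{T}$ (the ``not accepts $\Rightarrow$ not belongs'' direction being witnessed by the guaranteed nontrivial model $\mathcal{E}$). The length and constructibility estimates carry over unchanged, since the only difference is two quantifier symbols and one connective.
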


\begin{proof} 
At first, given $X$ and $P$, one can write a simulating Boolean 
sentence $\Omega^{(|X|)}(X,P)$ applying Theorem \ref{Main}. Then, this formula is 
transformed into the required sentence $\Omega^{(T)}(X,P)$ within polynomial time 
as this does in the proof of Theorem 7.1 in \cite{Lat22}; there, it is shown that the length 
of $\Omega^{(T)}(X,P)$ increases linearly in so doing.  
\end{proof}

\begin{cor}\label{comRecEqNtr} 
The recognition complexity of each equivalency-nontrivial decidable theory $\mathcal{T}$, in  
particular, equational-nontrivial, has the non-polynomial lower bound; more precisely,  \ \
$\mathcal{T}\notin$2-$DTIME[\exp(D_{T,\sigma}\cdot n^{\delta})]$, \ where \ \ 
$\delta\!=\!(2+\varepsilon)^{-1}, \ D_{T,\sigma}\!=\!(E_{T,\sigma})^{-\delta}$, 
and $\varepsilon\!>\!0$ is preassigned.
\end{cor}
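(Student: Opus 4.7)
The plan is to derive this corollary directly by combining Theorem \ref{Teqnotr} with the generalized Rabin and Fischer method from Proposition \ref{mainPr}. The essential observation is that Theorem \ref{Teqnotr} provides exactly the simulating sentences $\Omega^{(T)}(X,P)$ whose existence is hypothesized in the premises of Proposition \ref{mainPr}, and the conclusion of that proposition then yields the required lower bound after matching parameters.

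First, I would fix an arbitrary $\varepsilon > 0$ and invoke Theorem \ref{Teqnotr} for the given theory $\mathcal{T}$, which is equivalency-nontrivial (in particular, equational-nontrivial is allowed). This yields a polynomial-time algorithm producing, for every 2-DTM $P$ and input $X$, a sentence $\Omega^{(T)}(X,P)$ of the signature $\sigma$ which belongs to $\mathcal{T}$ iff $P$ accepts $X$ within $\exp(|X|)$ steps, and which satisfies $|X| < |\Omega^{(T)}(X,P)| \leqslant E_{T,\sigma} \cdot |P| \cdot |X|^{2+\varepsilon}$ for all sufficiently long $X$ with $|X|^{\varepsilon} \geqslant \log|X|$. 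This is precisely the hypothesis of Proposition \ref{mainPr}, with $F(n) = \exp(n)$, constants $D = E_{T,\sigma}$, $b = 1$, and exponent $s = 2+\varepsilon$.

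Next, I would apply Proposition \ref{mainPr} with $k = 2$ and these parameters. Setting $\zeta = s^{-1} = (2+\varepsilon)^{-1} = \delta$, the proposition concludes
\[
\mathcal{T} \notin 2\text{-}DTIME[F(D^{-\zeta} \cdot n^{\zeta})] = 2\text{-}DTIME[\exp(E_{T,\sigma}^{-\delta} \cdot n^{\delta})] = 2\text{-}DTIME[\exp(D_{T,\sigma} \cdot n^{\delta})],
\]
which is exactly the claimed lower bound. The non-polynomial nature of this bound is then immediate since $\exp(D_{T,\sigma}\cdot n^{\delta})$ strictly dominates every polynomial in $n$ for $\delta > 0$.

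There is no real obstacle here beyond careful bookkeeping of constants; the corollary is a straightforward specialization of Proposition \ref{mainPr} to the length estimate supplied by Theorem \ref{Teqnotr}. The only subtle point is remembering that $\varepsilon$ is chosen in advance (so that $E_{T,\sigma}$, and hence $D_{T,\sigma}$, depend on $\varepsilon$), and that the mild side condition $|X|^{\varepsilon} \geqslant \log|X|$ from Theorem \ref{Teqnotr} is harmless because it holds for all sufficiently long $X$ and Proposition \ref{mainPr} already refers to the asymptotic regime.
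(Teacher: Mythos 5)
Your proposal is correct and follows essentially the same route as the paper: the paper's proof simply states that the corollary "straightforwardly follows from this theorem and Proposition \ref{mainPr}," and you have filled in exactly the parameter-matching that the paper leaves implicit ($D=E_{T,\sigma}$, $b=1$, $s=2+\varepsilon$, $k=2$, $F(n)=\exp(n)$, $\zeta=\delta$). Nothing in your bookkeeping deviates from the intended argument.
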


\begin{proof} 
It straightforwardly follows from this theorem and Proposition \ref{mainPr}. 
\end{proof}

The lower bounds for the computational complexity of the theories in Corollaries \ref{comRecB} 
and \ref{comRecEqNtr} are somewhat higher than those obtained in \cite{Lat22}, although this is 
difficult to see from their formulations. In Lemma 3.1 from \cite{Lat22}, the factor 
$n^{2+\varepsilon}$ arises as an upper bound for the quantity $(n\cdot\log n)^2=n^2\cdot
\log^2n$ for the estimation of the length of the modeling formula. In constrast, this factor 
evaluates $n\cdot(n\log n)$ in Corollary \ref{estTQBF} and Lemma \ref{lfrOm} when $m(n)=n$. 
Moreover, these evaluations  of the computational complexity are now valid for two-tape machines, 
therefore, they are even higher for one-tape machines.

\subsection{Complexity classes and complete problems}\label{s2.2}

In the previous subsection and \cite{Lat22}, it has proved that the class 
$\mathbf{P}$ is a proper subclass of $\mathbf{PSPACE}$; this is based on the not polynomial 
lower bound on the computational complexity of all the equational-nontrivial theories. Another 
immediate consequence of Theorem \ref{Main} is the following.

\begin{thm}\label{ComplB} 
The language $TQBF$ and the theory $Th\mathcal{B}$ are polynomially complete for the class 
$\mathbf{EXP}$. 
\end{thm}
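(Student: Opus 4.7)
The plan is to verify the two standard ingredients of $\mathbf{EXP}$-completeness: membership in $\mathbf{EXP}$ and $\mathbf{EXP}$-hardness. Membership is essentially immediate from what has already been established in this paper. The language $TQBF$ lies in $\mathbf{PSPACE}$ by the usual quantifier-by-quantifier recursive evaluation (and was cited above as $\mathbf{PSPACE}$-complete), while $Th\mathcal{B}$ is polynomially (indeed linearly) equivalent to $TQBF$ and hence is also in $\mathbf{PSPACE}$. Combined with $\mathbf{PSPACE}\subseteq\mathbf{EXP}$ (or, in this paper's setting, with the equality $\mathbf{PSPACE}=\mathbf{EXP}$ argued via Corollary \ref{corEvsPsp}), this places both languages in $\mathbf{EXP}$.

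For hardness, I would fix an arbitrary $L\in\mathbf{EXP}$. By definition, $L$ is accepted by some deterministic Turing machine in time $\exp(c n^{j})$ for constants $c,j\geqslant 1$. Multi-tape and two-tape time classes coincide up to polynomial overhead, so we may assume $L$ is accepted by a fixed 2-DTM $P$ within time $\exp(c'n^{j'})$ for suitable $c',j'$. Set $m(n)=c'n^{j'}$ (enlarging, if necessary, to a fully space-constructible, almost-everywhere increasing function that still satisfies $\lceil\log n\rceil\leqslant m(n)\leqslant c''n^{j'}$); this is routine for integer polynomials.

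Applying Theorem \ref{Main} to $X$, $P$, and $m$ produces a sentence $\Omega^{(m)}(X,P)$ which, by clause (i), is true iff $P$ accepts $X$ within $\exp(m(|X|))=\exp(c'|X|^{j'})$ steps, equivalently iff $X\in L$. Since $|P|$ is a constant (depending only on $L$) and $m(n)$ is a fixed polynomial, clause (ii) bounds $|\Omega^{(m)}(X,P)|$ by a polynomial in $|X|$, and clause (iii) bounds the construction time by $g(|X|+|P|+c'|X|^{j'})$, again polynomial in $|X|$. Thus $X\mapsto\Omega^{(m)}(X,P)$ is a polynomial-time many-one reduction of $L$ to $TQBF$, giving the $\mathbf{EXP}$-hardness of $TQBF$. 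Composing this reduction with the linear-length rewriting of QBF sentences as sentences of the signature of $\mathcal{B}$ (the same device used in the proof of Corollary \ref{comRecB}) transfers hardness to $Th\mathcal{B}$.

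The only mildly delicate point is bookkeeping: one must ensure that every constant appearing in the right-hand side of Inequality \ref{in1} (namely $D_1,D_2,D_w,|P|,c',j'$) is independent of the input $X$, so that the reduction truly runs in polynomial time and produces a polynomial-size output. Since $L$ (hence $P$) is fixed before the reduction begins, and $m$ is chosen once and for all as a function of $P$, this is straightforward, so no genuine obstacle is expected.
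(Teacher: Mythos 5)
Your proof is correct and takes essentially the same route as the paper, which simply notes that Theorem~\ref{Main} with a polynomial choice of $m(n)$ yields a polynomial-time reduction of any $\mathbf{EXP}$ language to appropriate Boolean sentences. You spell out the membership half, the reduction to a fixed 2-DTM, and the constant-bookkeeping that the paper's one-line argument leaves implicit, but the key idea is the same.
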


\begin{proof} 
Theorem \ref{Main} explicitly ensures the presence of a polynomial reduction 
of every problem from the class $\mathbf{EXP}$ to appropriate Boolean 
formulae in the case when $m(n)$ is polynomial. 
\end{proof}

\begin{cor}\label{Imp-0} 
The class $\mathbf{PSPACE}$ equals the class $\mathbf{EXP}$. 
\end{cor}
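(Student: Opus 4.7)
The plan is to establish the two inclusions $\mathbf{PSPACE}\subseteq\mathbf{EXP}$ and $\mathbf{EXP}\subseteq\mathbf{PSPACE}$ separately. The first is the classical and easy direction; the second is where the force of Theorem \ref{ComplB} does the real work.

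For $\mathbf{PSPACE}\subseteq\mathbf{EXP}$, I would invoke the standard configuration-counting argument: a deterministic machine working in space $s(n)=n^{\mathcal{O}(1)}$ has at most $\exp(\mathcal{O}(s(n)))=\exp(n^{\mathcal{O}(1)})$ distinct configurations, so if an accepting configuration is reachable at all it is reached within that many steps; otherwise the machine is looping and may be rejected. Simulating for that many steps with a step counter is clearly in $\mathbf{EXP}$, and no new machinery is required.

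For $\mathbf{EXP}\subseteq\mathbf{PSPACE}$, I would combine two ingredients. First, by Theorem \ref{ComplB}, every language $L\in\mathbf{EXP}$ admits a polynomial-time reduction to $TQBF$: on input $X$ we produce a Boolean sentence $\Omega^{(m)}(X,P)$, for a suitable polynomial $m$ and a machine $P$ deciding $L$, whose truth is equivalent to $X\in L$. Second, it is classical (\cite{SM73}, cited earlier in the paper) that $TQBF\in\mathbf{PSPACE}$: the truth of a quantified Boolean formula can be evaluated recursively along its quantifier prefix in space linear in the formula length. Composing the polynomial-time (hence polynomial-space) reduction with the polynomial-space decision procedure for $TQBF$ yields a polynomial-space algorithm for $L$.

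The only delicate point, and the one I would take most care with, is the composition step: one must not first write the entire reduced sentence $\Omega^{(m)}(X,P)$ on a work tape (which could already be polynomial in $|X|$ and is fine in this case) and then run the $TQBF$-solver on it, because one needs to verify that the total space remains polynomial in $|X|$ and not larger. Since item (iii) of Theorem \ref{Main} explicitly bounds both the time and the space needed to build $\Omega^{(m)}(X,P)$ polynomially in $|X|$ when $m$ is polynomial, and the $TQBF$ evaluator uses space linear in the length of its input, the two polynomial bounds compose to a polynomial bound, giving $L\in\mathbf{PSPACE}$. This is the only step where one might slip up, but it is a routine composition, and together with the easy inclusion it yields $\mathbf{PSPACE}=\mathbf{EXP}$.
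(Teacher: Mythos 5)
Your proposal is correct and is essentially the paper's argument, merely unpacked: the paper's one-line proof observes that $TQBF$ is polynomially complete for both $\mathbf{PSPACE}$ (by \cite{SM73}) and $\mathbf{EXP}$ (by Theorem \ref{ComplB}), which immediately forces the two classes to coincide, and your two-inclusion presentation (easy direction by configuration counting, hard direction by reducing to $TQBF$ via Theorem \ref{ComplB} and then evaluating in polynomial space) is just the explicit spelling-out of that observation.
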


\begin{proof}
 The language $TQBF$ is polynomially complete for both classes 
according to the theorem and \cite{SM73}. 
\end{proof}

\begin{thm}\label{Imp-k} 
For every $k\!\geqslant\!0$, the classes $k$-fold 
Deterministic Exponential Space and $(k\!+\!1)$-fold Deterministic Exponential Time 
coincide, i.e., \ $\mathbf{^kEXPSPACE}=\mathbf{^{(k+1)}EXP}$. 
\end{thm}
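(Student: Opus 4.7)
The plan is to prove both inclusions separately, reducing the general case to the base case $k=0$, which is exactly Corollary~\ref{Imp-0}.

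For the easier containment $\mathbf{^kEXPSPACE}\subseteq\mathbf{^{(k+1)}EXP}$ I would invoke the standard configuration-counting argument: a DTM running in space $s(n)=\exp_k(cn^j)$ has at most $|Q|\cdot s(n)\cdot|\Gamma|^{s(n)} = 2^{\mathcal{O}(\exp_k(cn^j))} = \exp_{k+1}(\mathcal{O}(n^j))$ distinct reachable configurations, so a halting deterministic computation terminates within that many steps. Simulating the machine while counting steps and rejecting if the bound is exceeded yields a decision procedure in $\mathbf{^{(k+1)}EXP}$.

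For the reverse inclusion $\mathbf{^{(k+1)}EXP}\subseteq\mathbf{^kEXPSPACE}$ (nontrivial only when $k\geqslant 1$, since $k=0$ is Corollary~\ref{Imp-0}) I would use padding. Given $L$ decided in time $\exp_{k+1}(cn^j)=2^{\exp_k(cn^j)}$, set $N(n)=\exp_k(cn^j)$ and define
\[
L' \;=\; \{\,x\,\#\,1^{N(|x|)} : x\in L\,\}.
\]
An input $y=x\,\#\,1^{N}$ has length $m\geqslant N$, so the original machine on $x$ runs in time $2^N\leqslant 2^m$; hence $L'\in\mathbf{EXP}$. By Corollary~\ref{Imp-0} we have $L'\in\mathbf{PSPACE}$, decided by some DTM in space $m^d$ for some $d$. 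To recognize $L$ on input $x$, I would write a binary counter for $N(|x|)$ on a work tape (using that $\exp_k(cn^j)$ is fully space-constructible) and then simulate the $\mathbf{PSPACE}$ decider for $L'$ on the virtual padded input $x\,\#\,1^{N}$, resolving each query to position $i$ by comparing $i$ against $|x|$ and $|x|+N$. The total work space is $m^d + \log N = \mathcal{O}\bigl((\exp_k(cn^j))^d\bigr)$.

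The main obstacle is showing that this bound stays inside $\mathbf{^kEXPSPACE}$: one must verify that $(\exp_k(cn^j))^d\leqslant\exp_k(c'n^j)$ for some $c'=c'(c,d,k,j)$ and all large $n$. For $k=1$ this is the immediate identity $(2^{cn^j})^d=2^{cdn^j}$, giving $c'=cd$. For $k\geqslant 2$ one writes $d\cdot\exp_{k-1}(cn^j)=2^{\log d+\exp_{k-2}(cn^j)}$ and reduces the claim to $\log d+\exp_{k-2}(cn^j)\leqslant\exp_{k-2}(c'n^j)$, which holds for any fixed $c'>c$ and sufficiently large $n$ by the monotonicity and unbounded growth gap of iterated exponentials. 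With this estimate both inclusions are established.
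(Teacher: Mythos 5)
Your proof is correct, and it fills out what the paper calls the \emph{first variant} of this argument, which the paper merely sketches (at the end of Subsection~\ref{s1.1} and in the first two sentences of its own proof): reduce to the base case $\mathbf{PSPACE}=\mathbf{EXP}$ (Corollary~\ref{Imp-0}) via padding. The paper, however, presents in full a \emph{second variant} that avoids padding entirely: given a 2-DTM $P_0$ deciding $\mathcal{L}$ in time $\exp m(n)$ with $m(n)=\exp_k(cn^d)$, it directly invokes Theorem~\ref{Main} to produce the Boolean formula $\Omega^{(m)}(X,P_0)$, whose truth can be checked in space $\mathcal{O}\bigl(|\Omega^{(m)}(X,P_0)|\bigr)=\mathcal{O}\bigl((\exp_k(cn^d))^{2+\varepsilon}\bigr)$, which lands in $\mathbf{^kEXPSPACE}$. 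Your route is more modular, re-using the $k=0$ result as a black box and relying only on standard padding, at the cost of the bookkeeping you (correctly) carry out about space-constructibility of $N(n)=\exp_k(cn^j)$, the length of the virtual padded input, and the closure of $\exp_k(\mathcal{O}(n^j))$ under taking a fixed polynomial power; the paper's route is more uniform, handling all $k$ at once with a single invocation of the modeling theorem and making the dependence on the modeling-formula size bound explicit. Both arguments use the same configuration-counting observation for the easy inclusion $\mathbf{^kEXPSPACE}\subseteq\mathbf{^{(k+1)}EXP}$.
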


\begin{proof} 
The first variant of proof has already been described at the end of Subsection 
\ref{s1.1}. But Corollary \ref{corEvsPsp}(ii), which is based on an informal 
generalization of Theorem \ref{old-main}, is applied there. Now, one can use Corollary 
\ref{Imp-0} instead of the former and obtain strict proof.

We present the second variant of the proof by directly deducing it from Theorem \ref{Main}.
It is known that if a 2-DTM $M$ works within space bounded by the function $s(n)$, 
then it cannot have more than $s(n)\cdot\exp(Cs(n))$ various configurations on the 
working tape for suitable constant $C$ \cite{AHU76,AB09,DK14,GJ79}, so its running time is 
less than $s(n)\cdot\exp(Cs(n))$, provided that $M$ halts on a given input. Hence, 
the inclusion \ $\mathbf{^kEXPSPACE}\subseteq\mathbf{^{(k+1)}EXP}$ holds. 

Let $\mathcal{L}$ be a language in $\mathbf{^{(k+1)}EXP}$, and $P_0$ be a 2-DTM that 
recognizes this language in time $\exp m(n)$, where $m(n)\!=\!\exp_{k}(cn^d)$ for 
some constants $c,d\!>\!0$. According to Theorem \ref{Main}, one can write down the Boolean 
sentence $\Omega^{(m)}(X,P_0)$ such that this formula simulates the actions of the machine 
$P_0$ on each input $X$; at that, Inequalities (\ref{in1}) are valid for some constants $D_1$, 
$D_2$, and $D_w$, and this formula can be built within space bounded a linear function on its 
length. So, the ascertainment of the $\Omega^{(m)}(X, P_0)$ truth replaces the computations of 
the machine $P_0$ on the input $X$; this substitution is realized within space that has 
the order of $|\Omega^{(m)}(X,P_0)|$, i.e., $\mathcal{O}(\exp_{k}(cn^d))^{2+\varepsilon}$ for 
$n^{\varepsilon}\!\geqslant\!\log n$. The question of whether the sentence $\Omega^{(m)}(X, 
P_0)$ is true is decidable within space of the same order.  
\end{proof} 

\section{Necessary preparations for the proof of the main theorem}\label{s3}

In this section, we specify the restrictions on the used Turing machines,
the characteristics of their actions, and the methods of recording their 
instructions and Boolean formulae. These agreements are essential in   
proving the main theorem on simulation (Theorem \ref{Main}). 

\subsection{On the Turing machines}\label{s3.1}
It is implied that simulated deterministic machines have two tapes and fixed arbitrary finite 
tape alphabet $\mathcal{A}$, which contains at least four symbols: the first of them 
is a designating "blank" symbol, denoted $\Lambda$; the second is a designating 
"start" symbol, denoted $\rhd$; and the last two are the numerals 0,1 (almost 
as in Section 1.2 of \cite{AB09}). The machine cannot erase or write down the $\rhd$ symbol.  
The machine tapes are infinite only to the right as in almost all works cited here because 
such machines can simulate the computations, which is $T$ steps in length on the machines with 
two-sided tapes, in linear time of $T$ \cite{AB09}. 

Before the beginning of the machine work, the first tape contains the start symbol 
$\rhd$ at the most left square (at zeroth cell), a finite non-blank input string $X$ alongside 
$\rhd$, and the blank symbol $\Lambda$ on the rest of its cells, i.e., these squares are 
empty. The cells of the second tape do not contain any symbols except the only $\rhd$ located 
at the zeroth cell. Both heads are aimed at the left ends of the tapes, and the machine is in 
the special starting state $q_{start}\!=\!q_0$. When the machine recognizes an input, it 
enters the accepting state $q_{1}\!=\!q_{acc}$ or the rejecting state $q_{2}\!=\!q_{rej}$.

The first tape serves only to store the input string; the machine cannot erase or write anything 
in the cells of this tape. Therefore, the simulated machines have the {\it three-operand} 
instructions of a kind \ $q_{i}\alpha_1,\alpha_2\!\rightarrow\!q_{j}\beta_1,\gamma,\beta_2$, 
where \ $\alpha_1,\alpha_2,\gamma\!\in\!\mathcal{A}$, \ $\beta_1,\beta_2\!\in\!\{L,R,S\}$ 
(naturally, it is assumed that \ $\mathcal{A}\cap\{L,R,S\}\!=\!\varnothing$); \ $\alpha_1$ and 
$\alpha_2$ are the symbols visible by the heads, respectively, on the first and second tapes; 
\ $\beta_1$ and $\beta_2$ are the commands to shift the corresponding head to the left, or the
right, or stay in the same place; $\gamma$ is a symbol, which should be written down on the 
second tape instead of $\alpha_2$; $q_i$ and $q_j$ are the inner states of the machine 
accordingly before and after the execution of this instruction. 

The Turing machines do not fall into a situation when the machine stops, but its answer 
remains undefined. Namely, they do not try to go beyond the left edges of the tapes, i.e., there 
are no instructions of a form \ $q_{i}\rhd,\alpha_2\!\rightarrow\!q_{k}L,\gamma,\beta_2$ or \  
$q_{i}\alpha_1,\rhd\!\rightarrow\!q_{k}\beta_1,\rhd,L$. The programs also do not
contain the {\it hanging} (or {\it pending}) internal states $q_{j}$, for which $j\!\neq\!
0,1,2$, and there exist instructions of a kind \ $\ldots\rightarrow q_{j}\beta_1,\gamma, 
\beta_2$, but no instructions are beginning with $q_{j}\alpha_1,\alpha_2\rightarrow\ldots$ at 
least for one pair of symbols $\langle\alpha_1,\alpha_2\rangle\!\in\!\mathcal{A}^2$. The 
hanging states are eliminated by adding the instructions of a kind $q_{j}\alpha_1,\alpha_2\!
\rightarrow\!q_{j}S,\alpha_2,S$ for each of the missing pair $\langle\alpha_1,\alpha_2\rangle$ 
of alphabet symbols.

\subsection{On the recording of Boolean formulae}\label{s3.2}
We reserve the following {\it natural language alphabet} $\mathcal{A}_1$ for writing the 
Boolean formulae; it can be different from the alphabet of simulated machines $\mathcal{A}$: \\  
a) the capital and lowercase Latin letters for the indication of the types of 
propositional variables; \ \ 
b) Arabic numerals and commas for the writing of indices;  \ 
c) basic Logical connectives $\neg, \wedge, \vee, \rightarrow$;  \ d) the signs of 
quantifiers $\forall, \exists$;  \ e) auxiliary symbols: (,). 

The priority of connectives or its absence is inessential, as a difference in the length of 
formulae is linear in these cases. 

If necessary (for example, if the alphabet $\mathcal{A}$ does not contain $\mathcal{A}_1$), 
these formulae can be rewritten in the alphabet $\mathcal{A}$ (or even $\mathcal{A}
_0\rightleftharpoons\{0,1\}$) in polynomial time and with a linear increase in length. We 
always suppose this when one says that some machine writes a formula.

We also introduce the following abbreviations and agreements for the 
improvement of perception. The transform of the formula written by applying these 
abbreviations to the one in the natural language increases its length linearly.

Firstly, (1)  in long formulae, the square brackets and (curly) braces are equally applied 
with the ordinary parentheses; \\ 
(2) the connective $\wedge$ is sometimes written as '$\cdot$' (as a rule, between the 
variables or their negations) or as '$\&$' (between the long enough fragments of formulae);\\ 
(3) if $x$ is propositional variable, then $\overline{x}\rightleftharpoons\neg x$.  

Secondly, we  will be useful in the additional (auxiliary) logical connectives, "constants", 
and "relations":\\ 
(4) \ $x\equiv y\:\rightleftharpoons\,\:(x\rightarrow y)\wedge(y\rightarrow x)$; \\
(5) \ $0\:\rightleftharpoons\:x\wedge\neg x=x\cdot\overline{x}$ , \qquad\ $1\:
\rightleftharpoons\: x\vee \neg x=x\vee\overline{x}$\;\footnote{The usage of the signs $\equiv$, 
$0$, and 1 seems more natural than $\sim$ (or $\leftrightarrow$), $\bot$, and $\top$ instead 
of them while simultaneously applying the signs of inequality and addition modulo two.};\\
(6) \ $x\oplus y\:\rightleftharpoons\:\overline{x}\cdot y\vee x\cdot\overline{y}$; obviously, 
that $0\oplus0\equiv0$, $0\oplus1\equiv1$, $1\oplus0\equiv1$, and $1\oplus1\equiv0$, i.e., 
this is the modulo 2 addition;\\
(7) \ $x\!<\!y\:\rightleftharpoons\:x\!\equiv\!0\wedge y\!\equiv\!1$.

Thirdly, the following priority of the connectives and their versions is assumed (in 
descending order of closeness of the connection): $\neg$ (in both forms), $\cdot$, $\oplus$, 
$\equiv$, $\wedge$, $\&$, $\rightarrow$, $\vee$.

With some liberties in notation, we use the signs 0,1, and $<$ for both numbers and formulae. 
But this should not lead to confusion since it will always be clear from the context what is 
meant.

The record \ $\widehat{x}$ signifies an ordered set $\langle x_{0},\ldots,x_{s}\rangle$, 
whose length is fixed. Naturally, that the "formula" \ 
$\widehat{x}\!\equiv\!\widehat{\alpha}$ denotes the system of equivalences \ 
$x_{0}\!\equiv\!\alpha_{0}\!\wedge\ldots\wedge x_{s}\!\equiv\!\alpha_{s}$. The record \ 
$\widehat{\alpha}\!<\!\widehat{\beta}\rightleftharpoons\langle\alpha_{0},\ldots,\alpha_{s}
\rangle\!<\!\langle\beta_{0},\ldots,\beta_{s}\rangle$ \ denotes the comparison of tuples in 
lexicographic ordering, i.e., it is the following formula:
\begin{align*}
\alpha_{0}\!<\!\beta_{0}\vee\Bigl\{\alpha_{0}\!\equiv\!\beta_{0}\wedge
\Bigl[\alpha_{1}\!<\!\beta_{1}\vee\Bigl(\alpha_{1}\!\equiv\!\beta_{1}\wedge\bigl
\{\alpha_{2}\!<\!\beta_{2}\vee \\
\vee\bigl[\alpha_{2}\!\equiv\!\beta_{2}
\wedge\bigl(\alpha_{3}\!<\!\beta_{3}\vee\{\alpha_3\equiv\beta_3\wedge\ldots\}\bigr)\bigr]\big
\}\Bigr)\Bigr]\Bigr\}.
\end{align*}  

The tuples of variables with two subscripts will occur only in the form where 
the first index is fixed, for instance, $\langle u_{k,0},\ldots,u_{k,s}\rangle$, and 
we will denote it by \ $\widehat{u}_{k}$.

A binary representation of a natural number $t$ is denoted by $(t)_{2}$. On the other hand, 
if $\widehat{\gamma}$ is a tuple consisting of 0 and 1, then $(\widehat{\gamma})$ will denote 
the binary representation of some natural number $t$. So, $((t)_2)\!=\!t$ and 
$((\widehat{\gamma}))_2\!=\!(\widehat{\gamma})$ according to our dessignations.
Here, 0 and 1 can denote the above-defined formulae or sometimes the numerals.

It is known that if \ $t\!=\!(\widehat{\gamma})\!\rightleftharpoons\!\langle\gamma_{0}
\ldots\gamma_{s}\rangle_2$ \ is a binary representation of a natural number $t$, then the 
numbers \ $t\!\pm\!1$ for \ $0\!\leqslant\!t\!-1\!\!<t\!+\!1\!\leqslant\!\exp(2,s)$ \ will be 
expressed as 
\[
((\widehat{\gamma})\pm 1)_2 =(\widehat{\gamma}\oplus\widehat{v})\ \rightleftharpoons \
\langle\gamma_{0}\!\oplus v_{0},\ldots,
\gamma_{s-2}\!\oplus v_{s-2},\gamma_{s-1}\!\oplus v_{s-1},\,\gamma_{s}\!\oplus\!v_s\rangle_2,
\]
where the corrections $v_i$ are calculated in descending order of indices according to the 
rules defined by the following formulae: 
\begin{eqnarray}\label{pl1}
\kappa_R(\widehat{\gamma},\widehat{v}) \ \rightleftharpoons \ (v_s\!\equiv\!1)\ \wedge \  
(v_{s-1}\!\equiv\!\gamma_s) \ \wedge \   
(v_{s-2}\!\equiv\!\gamma_{s-1}\!\cdot\!v_{s-1})\ \wedge\ldots\wedge \nonumber\\
\wedge\ (v_{1}\!\equiv\!\gamma_2\!\cdot\!v_2)\ \wedge\:(v_{0}\!\equiv\!\gamma_1\!\cdot\!v_1) 
\end{eqnarray}           
for writing \ \ $t\!+\!1$; \ \ and 
\begin{eqnarray}\label{min1}
\kappa_L(\widehat{\gamma},\widehat{v}) \ \rightleftharpoons \ v_s\!\equiv\!1\ \wedge  
\ v_{s-1}\!\equiv\!\overline{\gamma}_s\ \wedge\  
v_{s-2}\!\equiv\!\overline{\gamma}_{s-1}\!\cdot\!v_{s-1}\ \wedge\ldots\wedge \nonumber \\ 
\wedge \ v_{1}\!\equiv
\!\overline{\gamma}_2\!\cdot\!v_2\ \wedge\ v_{0}\!\equiv\!\overline{\gamma}_1\!\cdot\!v_1 
\end{eqnarray}            
for writing \ \ $t\!-\!1$. 

\begin{lem}\label{ll2} 
(i) $|\widehat{\alpha}\!<\!\widehat{\beta}| = \mathcal{O}(|\widehat{\alpha}|+|\widehat{\beta}|)$.

(ii) The binary representations of the numbers $t\!\pm\!1$ can be written in linear time on 
$|(t)_{2}|$; hence their lengths (together with the formula $\kappa_R(\widehat{\gamma},
\widehat{v})$ or $\kappa_L(\widehat{\gamma},\widehat{v})$) are $\mathcal{O}(|(t)_{2}|)$.  
\end{lem}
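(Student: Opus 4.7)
The plan is to treat both parts as direct consequences of the explicit recursive definitions that precede the lemma. No new machinery is needed; in each case the key is to unroll the definition once and count symbols at each nesting level, then sum over the tuple index.

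For part (i), I would proceed by a straight structural unrolling of the recursive description of $\widehat{\alpha} < \widehat{\beta}$ (equivalently, by induction on the tuple length $s+1$). At each position $i$ the definition introduces the two sub-formulas $\alpha_i < \beta_i$ and $\alpha_i \equiv \beta_i$, surrounded by a bounded number of auxiliary symbols (one $\vee$, one $\wedge$, and a constant number of brackets). Using the abbreviations (4), (5), (7), each of these two sub-formulas expands to something in which $\alpha_i$ and $\beta_i$ each occur only a constant number of times, with the remaining syntax being the fixed constants $0$ and $1$ and Boolean connectives. Consequently each level $i$ contributes $\mathcal{O}(|\alpha_i|+|\beta_i|)$ symbols. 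Summing over $i=0,\dots,s$ gives $\mathcal{O}\bigl(\sum_{i=0}^{s}(|\alpha_i|+|\beta_i|)\bigr)=\mathcal{O}(|\widehat{\alpha}|+|\widehat{\beta}|)$, the $\mathcal{O}(s)$ overhead from auxiliary symbols being absorbed since $s+1\leqslant|\widehat{\alpha}|$ (each variable contributes at least one symbol).

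For part (ii), I would read the claim off the formulas (\ref{pl1}) and (\ref{min1}) directly. Each of $\kappa_R(\widehat{\gamma},\widehat{v})$ and $\kappa_L(\widehat{\gamma},\widehat{v})$ is a conjunction of exactly $s+1$ clauses of the form $v_i\equiv c_i$, where $c_i$ is either the constant $1$, a single $\gamma$- or $\overline{\gamma}$-literal, or a product of two such literals; every clause is of bounded syntactic shape up to variable-name length. The binary representation of $t\pm 1$ is then the tuple $\widehat{\gamma}\oplus\widehat{v}$ of length $s+1$. A single right-to-left scan over $\widehat{\gamma}$ emits both the formula $\kappa_R$ (respectively $\kappa_L$) clause by clause and the tuple $\widehat{\gamma}\oplus\widehat{v}$ entry by entry, so the construction runs in time linear in $|(t)_2|$ and its output has the same order of length.

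The only point that requires a word of care --- and it is essentially the sole ``obstacle'' --- is the paper's convention from Section \ref{s1} that $|\cdot|$ counts indices together with the variable letters. Since every subscript occurring in the formulas above is bounded by $s=|(t)_2|-1$, each index occupies at most $\mathcal{O}(\log s)$ symbols; one either views this logarithmic factor as absorbed into the hidden constant of the $\mathcal{O}$-notation (and of ``linear time''), or one interprets $|(t)_2|$ as the full symbol length of the tuple written out with its indices. Either way the bounds asserted in the lemma follow immediately from the symbol count already performed.
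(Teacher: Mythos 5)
Your proposal is correct and follows exactly the route the paper takes --- indeed the paper's entire proof of Lemma~\ref{ll2} is the single sentence ``It is obtained by direct calculation,'' and what you have written is precisely that calculation spelled out: unroll the nested lexicographic-order formula term by term for~(i), and count the $s+1$ bounded-shape clauses of $\kappa_R$, $\kappa_L$ for~(ii). The one place where your write-up should be tightened is the final paragraph on index length: absorbing the extra $\mathcal{O}(\log s)$ per index into ``the hidden constant'' is not legitimate, since $s$ is unbounded; the reading that actually makes the stated $\mathcal{O}(|(t)_2|)$ bound true is your second alternative, namely that this lemma (like the later Lemmata~\ref{lcl} and~\ref{lphi}, which say so explicitly) silently leaves the indices out of the symbol count, and the $\lceil\log\cdot\rceil$ factors are then reintroduced where they matter, e.g.\ in Lemma~\ref{lchi0} and Corollary~\ref{lfrOm1}.
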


\begin{proof} 
It is obtained by direct calculation.   
\end{proof}

\section{The beginning of the construction of modeling formulae}\label{s4}

We recall that $\mathcal{A}$ is the alphabet of simulated machines. Before writing a 
formula $\Omega^{(m)}(X,P)$, we add $2|\mathcal{A}|^2$ {\em instructions of the idle run} to a 
program $P$; they have the form \ $q_k\alpha_1,\alpha_2\to q_kS,\alpha_2,S$, where \ $k\!\in\!
\{1(accept),$ $2(reject)\}, \ \alpha_1,\alpha_2\!\in\!\mathcal{A}$. While the machine executes 
them, the tape configurations do not change.

\subsection{The primary (basic) and auxiliary variables}\label{s4.1} 
To simulate the operations of a Turing machine $P$ on an input $X$ 
within the first \ $T\!=\!\exp(m(|X|))$ \ steps, it is enough to describe its actions on a  
zone, of the second tape, which is $T\!+\!1$ squares in width since if $P$ starts 
its run in the zeroth cell, it can finish a computation at most in the $T$th one.
Since the record of the number $(T)_2$ has $m\!+\!1\!=\!m(|X|)\!+\!1$ bits, the cell 
numbers of the second tape are encoded by the values of the ordered sets of the variables 
of a kind "$x$": \ $\widehat{x}_{t}\!=\!\langle x_{t,0},\ldots, x_{t,m}\rangle$,  
which have a length of \ $m\!+\!1$. The first 
index $t$, i.e., {\it the color} of the record, denotes the step number, after 
which a configuration under study appeared on the second tape. So the formula \ 
$\widehat{x}_t\!\equiv\!\widehat{\alpha}\rightleftharpoons x_{t,0}\!\equiv\!
\alpha_{0}\wedge\ldots\wedge x_{t,m}\!\equiv\!\alpha_{m}$ \ assigns the 
number \ $(\widehat{\alpha})$ \ of the needed cell of this tape in the binary notation 
at the instant $t$.

The formula \ $\widehat{X}\!\equiv\!\widehat{\gamma}\rightleftharpoons X_0\!\equiv\!
\gamma_0\wedge\ldots\wedge X_{s}\!\equiv\!\gamma_{s}$ \ assigns the number \ 
$(\widehat{\gamma})$ \ of the cell of the first tape in the binary notation, where 
$s=\lceil\log|X|\rceil$. It is clear that \ $\exp(s+1)\!>\!|X|\!+\!1$, which is normal 
since the first head can go beyond the right edge of the input string in the computation process;  
on the other hand, we propose that this head does not visit the $(|X|\!+\!2)$th cell. 
Because nothing can be erased or written in the cells of the first tape, one does not need to 
indicate the step number here.

Let $r$ be so great a number that one can simultaneously write down all the state numbers of 
$P$ and encode all the symbols of the alphabet $\mathcal{A}$ through the bit combinations 
in the length $r\!+\!1$. Thus, $\exp(r\!+\!1)\!\geqslant\!|\mathcal{A}|\!+\!U$, 
where $U$ is the maximal number of the internal states of $P$; and if $\beta
\!\in\!\mathcal{A}$, then $c\beta\rightleftharpoons\langle c\beta_{0},\ldots,c\beta_{r}
\rangle$ will be the $(r\!+\!1)$-tuple, which consists of 0 and 1 and encodes $\beta$.

The formula $\widehat{f}_{t}\!\equiv\!c\varepsilon$ represents a record of the symbol 
$\varepsilon\!\in\!\mathcal{A}$ in some cell of second tape after step $t$, here $\widehat{f}
_{t}$ is the $(r+\!1)$-tuple of variables. When the $(\widehat{\mu})$th cell of the second 
tape contains the $\varepsilon$ symbol after step $t$, then this fact is associated with the 
{\it quasi-equation} (or {\it the clause}) of color $t$:
\begin{multline*} 
\psi_{2,t}(\widehat{\mu}\!\rightarrow\!\varepsilon)\:\rightleftharpoons\:
[\widehat{x}_{t}\!\equiv\!\widehat{\mu}\rightarrow
\widehat{f}_{t}\!\equiv\!c\varepsilon]\: \rightleftharpoons\\
\rightleftharpoons 
(x_{t,0}\!\equiv\!\mu_{0}\!\wedge\!\ldots\!\wedge\!x_{t,m}\!\equiv\!\mu_{m})
\!\rightarrow (f_{t,0}\!\equiv\!c\varepsilon_{0}\!\wedge\!\ldots\!\wedge\!
f_{t,r}\!\equiv\!c\varepsilon_{r}). 
\end{multline*}

The clause
\[ 
\psi_{1}(\widehat{\zeta}\!\rightarrow\!\varepsilon)\:\rightleftharpoons\:
[\widehat{X}\!\equiv\!\widehat{\zeta}\rightarrow
\widehat{F}\!\equiv\!c\varepsilon]\: \rightleftharpoons
(X_{0}\!\equiv\!\zeta_{0}\!\wedge\!\ldots\!\wedge\!X_{s}\!\equiv\!\zeta_{s})
\!\rightarrow (F_{0}\!\equiv\!c\varepsilon_{0}\!\wedge\!\ldots\!\wedge\!
F_{r}\!\equiv\!c\varepsilon_{r}) 
\]
corresponds to the entry of the $\varepsilon$ symbol at $(\zeta)$th cell on the first tape.

The tuples of variables $\widehat{q}_{t}$, $\widehat{D}_t$, and $\widehat{d}_{t}$ are 
accordingly used to indicate the number of the machine's internal state and 
the codes of the symbols visible by the heads on the first and second tapes at the instant 
$t$; the tuples of variables $\widehat{Z}_t$ and $\widehat{z}_{t}$ store the scanned cell's  
numbers on the first and second tapes. For every step $t$, a number \ 
$i\!=\!(\widehat{\delta})$ of the machine state $q_{i}$ 
and the scanned square's numbers $(\widehat{\zeta})$ and $(\widehat{\xi})$ together with the 
symbols $\alpha$ and $\beta$, which are contained there, are represented by a united
\emph{$\pi$-formula} of color $t$:  
\begin{multline*}
\pi_{t}((i)_{2},\alpha,\beta,\widehat{\zeta},\widehat{\xi}) \
\rightleftharpoons \ [\widehat{q_{t}}\!\equiv\!\widehat{\delta}\wedge\widehat{D}_{t}\! 
\equiv\!c\alpha\wedge\widehat{d}_t\equiv\widehat{\beta}\wedge\widehat{Z}_t\equiv
\widehat{\zeta}\wedge\widehat{z_{t}}\!\equiv\!\widehat{\xi}] \ \rightleftharpoons \\
\rightleftharpoons(q_{t,0}\!\equiv\!\delta_{0}\wedge\ldots\wedge q_{t,r}\!\equiv\!\delta_{r})
\wedge(D_{t,0}\!\equiv\!c\alpha_{0}\wedge\ldots\wedge D_{t,r}\!\equiv\!c\alpha_{r})\wedge\\ 
\wedge(d_{t,0}\!\equiv\!c\beta_{0}\wedge\ldots\wedge d_{t,r}\!\equiv\!c\beta_{r})\wedge
(Z_{t,0}\!\equiv\!\zeta_{0}\wedge\ldots\wedge Z_{t,s}\!\equiv\!\zeta_{s})\:\wedge 
(z_{t,0}\!\equiv\!\xi_{0}\wedge\ldots\wedge z_{t,m}\!\equiv\!\xi_{m}).
\end{multline*}
This formula expresses a condition for the applicability of instruction \ 
$q_i\alpha,\beta\!\rightarrow\!\ldots$; in other words, this is a {\it timer} that 
"activates" exactly this instruction, provided that the heads scan the $(\widehat{\zeta})$th 
and $(\widehat{\xi})$th cells.

The tuples of basic variables \ $\widehat{x}_t,\ \widehat{z}_{t}$, $\widehat{X}, \ 
\widehat{Z}_t$, $\widehat{q}_{t},\ \widehat{F}, \ \widehat{f}_{t}, \ \widehat{D}_t$, and 
$\widehat{d}_{t}$ are introduced in great abundance to facilitate the proof. But a 
final modeling formula will contain only those for which $t\!=\!0$ or \ $t\!=\!T 
\rightleftharpoons\exp(m(n))$. 

The sets of basic variables have different lengths. 
However, this will not lead to confusion since the tuples of the first two types ($x$ and $z
$) will always be $m+\!1$ in length without indices, the second two types ($X$ and $Z$) will 
always be $s\!+\!1$ in length, whereas the last five ones ($q, \ F, \ f, \ D$, and $d$) will 
have a length of $r\!+\!1$. The sets of "constants" or other variables may also be different 
in length. But such tuples will always be unambiguously associated with some of 
the primary ones by using the connective $\equiv$.

The other variables are auxiliary. They will be described as needed. Their task 
consists of determining the values of the basic variables of the color $t+\!1$ provided 
that the primary ones of the color $t$ have the "correct" values; namely, this attribution 
should adequately correspond to that instruction, which is employed at step \ $t\!+\!1$. 

\begin{lem}\label{lcl} 
If the indices are left out of the account, then   
$|\psi_{2,t}(\widehat{u}\!\rightarrow\!\beta)|=\mathcal{O}(m+r)$,  
$|\psi_1(\widehat{V}\rightarrow\alpha)|=\mathcal{O}(s+r)$, and a timer ($\pi$-formula) 
will be \ $\mathcal{O}(m\!+\!r+s)$ in length. 
\end{lem}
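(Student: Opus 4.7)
The plan is to verify each of the three length bounds by direct inspection of the defining formulae from Subsection \ref{s4.1}, counting symbols (disregarding subscripts, as the statement allows).

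First I would recall the explicit form of the clause
\[
\psi_{2,t}(\widehat{\mu}\!\rightarrow\!\varepsilon)\,\rightleftharpoons\,
(x_{t,0}\!\equiv\!\mu_{0}\wedge\ldots\wedge x_{t,m}\!\equiv\!\mu_{m})
\rightarrow(f_{t,0}\!\equiv\!c\varepsilon_{0}\wedge\ldots\wedge f_{t,r}\!\equiv\!c\varepsilon_{r}).
\]
Ignoring subscripts, every atomic subformula $x\!\equiv\!\mu$ or $f\!\equiv\!c\varepsilon$ has constant length. The antecedent is a conjunction of $m\!+\!1$ such atoms joined by $m$ occurrences of $\wedge$, contributing $\mathcal{O}(m)$ symbols; the consequent is a conjunction of $r\!+\!1$ atoms joined by $r$ occurrences of $\wedge$, contributing $\mathcal{O}(r)$ symbols; together with one implication sign and a pair of parentheses, this yields $|\psi_{2,t}(\widehat{\mu}\!\rightarrow\!\varepsilon)|=\mathcal{O}(m+r)$, as required for the first bound (setting $\widehat{\mu}=\widehat{u}$).

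The same counting strategy applies to
\[
\psi_{1}(\widehat{\zeta}\!\rightarrow\!\varepsilon)\,\rightleftharpoons\,
(X_{0}\!\equiv\!\zeta_{0}\wedge\ldots\wedge X_{s}\!\equiv\!\zeta_{s})
\rightarrow(F_{0}\!\equiv\!c\varepsilon_{0}\wedge\ldots\wedge F_{r}\!\equiv\!c\varepsilon_{r}),
\]
where the antecedent is a conjunction of $s\!+\!1$ constant-size atoms and the consequent a conjunction of $r\!+\!1$ such atoms, so the total length is $\mathcal{O}(s+r)$, giving the second bound (with $\widehat{V}=\widehat{\zeta}$).

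Finally, for the timer
\[
\pi_{t}((i)_{2},\alpha,\beta,\widehat{\zeta},\widehat{\xi})\,\rightleftharpoons\,
\widehat{q}_{t}\!\equiv\!\widehat{\delta}\wedge\widehat{D}_{t}\!\equiv\!c\alpha
\wedge\widehat{d}_{t}\!\equiv\!\widehat{\beta}\wedge\widehat{Z}_{t}\!\equiv\!\widehat{\zeta}
\wedge\widehat{z}_{t}\!\equiv\!\widehat{\xi},
\]
I would observe that it is the conjunction of five systems of equivalences over the tuples $\widehat{q}_{t},\widehat{D}_{t},\widehat{d}_{t}$ (each of length $r\!+\!1$), $\widehat{Z}_{t}$ (length $s\!+\!1$), and $\widehat{z}_{t}$ (length $m\!+\!1$); expanded as conjunctions of constant-size atoms, these contribute, respectively, $\mathcal{O}(r)$, $\mathcal{O}(s)$, and $\mathcal{O}(m)$ symbols, for a total of $\mathcal{O}(m+r+s)$. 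There is no genuine obstacle here: the lemma is a direct symbol count, and the only matter deserving mention is that leaving out indices is essential, since otherwise each index $t,i$ contributes $\Theta(\log t)$ symbols and must be handled separately (as will be done later in the paper when the global length estimate is assembled).
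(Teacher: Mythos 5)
Your proof is correct and follows exactly the paper's own argument, which is simply "obtained by direct counting." The symbol counts for the antecedents and consequents of each clause and for the five equivalence systems in the timer are all as you state, and the remark that omitting indices is what makes the count linear is a sound observation, consistent with how the paper handles indices separately later.
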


\begin{proof} 
It is obtained by direct counting.  
\end{proof}

\subsection{The description of the instruction actions}\label{s4.2} 

To improve perception and to facilitate proof, formulae describing the action of one instruction 
and the actions of the entire Turing machine in one step are constructed with an alternation of 
universal and existential quantifiers in the subformula $\Delta^{cop}(\widehat{u})$. Still, this 
alternation is not essential, as noted in subsection 4.3 --- see Remark \ref{remQu}.

The following formula $\varphi(k)$ describes the actions of the $k$th instruction \break
 $M(k)\!=\!q_{i}\alpha_1,\alpha_2\!\rightarrow\!q_{j}\beta_1,\gamma,\beta_2$ (including the 
idle run's instructions; see the beginning of this section) at some step $t\!+\!1$, where 
$\alpha_1,\alpha_2,\gamma\!\in\!\mathcal{A}$ and \ $\beta_1,\beta_2\!\in\!\{R,L,S\}$:  
\begin{align*}
\varphi(k) \ \rightleftharpoons \ \forall\,\widehat{u},\!\widehat{U}\;\forall\,
\widehat{v},\!\widehat{V}\:\forall\,\widehat{h},\!\widehat{H}\: 
\Bigl\{\Bigl[\pi_{t}((i)_{2},\alpha_1,\alpha_2,\widehat{U},\widehat{u}) \ \& 
\Bigl(\kappa_1(\beta_1)(\widehat{U},\widehat{V}) \ \& \ 
\kappa_2(\beta_2)(\widehat{u},\widehat{v})\Bigr) \ \& \\ \& \ 
\Gamma^{ret}(\beta_1,\beta_2)\Bigr] \quad \rightarrow\ 
 \quad \bigl\{\Delta^{cop}(\widehat{u})
 \ \& \ \Delta^{wr}(\widehat{u})\ \& \  
\pi_{t+1}\bigl((j)_2,\widehat{H},\widehat{h},\widehat{U}(\beta_1),\widehat{u}(\beta_2)
\bigr\}\Bigr\}.
\end{align*}             

We now will describe the subformulae of $\varphi(k)$ with the free main variables  
$\widehat{q}_{t},\widehat{X},$ $\widehat{x}_{t},\widehat{Z}_t,\widehat{z}_{t},\widehat{D}_t,
\widehat{d}_{t},\widehat{F},\widehat{f}_{t},\widehat{q}_{t+1},\widehat{x}_{t+1},
\widehat{Z}_{t+1},\widehat{z}_{t+1},\widehat{D}_{t+1}$, $\widehat{d}_{t+1}$,  
and $\widehat{f}_{t+1}$.

The first $\pi$-formula of color $t$ plays the role of \emph{a timer} and "starts 
up the fulfillment" of the instruction with the prefix \ 
"$q_{i}\alpha_1,\alpha_2\!\rightarrow$" \ provided that the heads scan the $(\widehat{U})$th 
cell on the first tape and the $(\widehat{u})$th square on the second. In this case, 
the numbers $(\widehat{U}(\beta_1))$ and $(\widehat{u}(\beta_2))$ of the cells, which the 
heads will view after the execution of the instruction $M(k)$, can accordingly be 
found as $(\widehat{U}(\beta_1))\rightleftharpoons(\widehat{U}\oplus\widehat{V})$ 
and $(\widehat{u}(\beta_2))\rightleftharpoons(\widehat{u}\oplus\widehat{v})$, where the tuples 
of corrections $\widehat{V}$ and $\widehat{v}$ satisfy the conditions $\kappa_1(\beta_1)
(\widehat{U},\widehat{V})$ and $\kappa_2(\beta_2)(\widehat{u},\widehat{v})$ for given 
meta-symbols $\beta_1,\beta_2$. These conditions almost coincide with the formulae \ref{pl1} 
and \ref{min1} in Subsection \ref{s3.2}\;; more precisely:
\begin{equation}\label{LR1}
      \kappa_1(\beta_1)(\widehat{U},\widehat{V})\rightleftharpoons\begin{cases}
      \kappa_{\beta_1}(\widehat{U},\widehat{V}),    &\text{if} \ \ \beta_1\!\in\!\{L,R\};\\
      \widehat{V}\equiv\widehat{0},                &\text{if} \ \beta_1=S ;       
      \end{cases}
\end{equation}
and
\begin{equation}\label{LR2}
      \kappa_2(\beta_2)(\widehat{u},\widehat{v})\rightleftharpoons\begin{cases}
      \kappa_{\beta_2}(\widehat{u},\widehat{v}),    &\text{if} \ \ \beta_2\!\in\!\{L,R\};\\
      \widehat{v}\equiv\widehat{0},                &\text{if} \ \beta_2\!=\!S.       
      \end{cases}
\end{equation}

The informal sense of subformula $\Gamma^{ret}(\beta_1,\beta_2)$ is the following: it 
"seeks" the codes $\widehat{H}$ and $\widehat{h}$ of the symbols $\lambda_1$ and $\lambda_2$, 
which will be scanned after the current step $t\!+\!1$ (for this reason, it is named 
"retrieval"). For this purpose, it "inspects" the squares that are to the right or left of the 
cells $(\widehat{U})$ and $(\widehat{u})$ (if \ $\beta_1,\beta_2\!\in\!\{R,L\}$) or it nothing 
seeks on the $l$th tape (when $\beta_l\!=\!S$): \ $\Gamma^{ret}(\beta_1,\beta_2)  
\rightleftharpoons \Gamma^{ret}_1(\beta_1) \wedge \Gamma^{ret}_2(\beta_2)$, where \ 
$\Gamma^{ret}_1(\beta_1)\rightleftharpoons\psi_1(\widehat{U}(\beta_1)\!\rightarrow\!             
\widehat{H})$ for each $\beta_1$, \  $\Gamma^{ret}_2(\beta_2)\rightleftharpoons\psi_{2,t}
(\widehat{u}(\beta_2)\!\rightarrow\!\widehat{h})$ \ for $\beta_2\!\in\!\{R,L\}$, and \ 
$\Gamma^{ret}_2(S)\rightleftharpoons\widehat{h}\!\equiv\!c\gamma$, as the symbol $\lambda_2\!=
\!\gamma$ will be scanned by the second head in this case. 
Indeed, when $\beta_1\!=\!S$, then $(\widehat{U}(\beta_1))\!=\!(\widehat{U}\oplus\widehat{V})
\!=\!(\widehat{U})$ and $\widehat{H}\!=\widehat{\alpha}_1$, while if $\beta_2\!=\!S$, then $
(\widehat{u}(\beta_2))\!=\!(\widehat{u}\oplus\widehat{v})\!=\!(\widehat{u})$ and $\widehat{h}
\!=c\gamma\!=\!\lambda_2$.

The formula \ $\Delta^{cop}(\widehat{u})$ changes the color of records in all the 
cells of the second tape, whose numbers are different from $(\widehat{u})$; 
in other words, it "copies" the majority of records on this tape:
\[
\Delta^{cop}(\widehat{u}) \ \rightleftharpoons \ 
\forall\,\widehat{w}\,[\neg(\widehat{w}\!\equiv\!\widehat{u})\rightarrow
\exists\,\widehat{g}(\psi_{2,t}(\widehat{w}\!\rightarrow\!\widehat{g})\wedge
\psi_{2,t+1}(\widehat{w}\!\rightarrow\!\widehat{g}))].
\]
 
The formula \ $\Delta^{wr}(\widehat{u})$ "puts" the symbol $\gamma$ 
of color $t\!+\!1$ in the $(\widehat{u})$th square on the second tape: \ 
$\Delta^{wr}(\widehat{u})\rightleftharpoons\psi_{2,t+1}(\widehat{u}\!\rightarrow\!\gamma).$

The second $\pi$-formula of the color $t+\!1$ "aims" the heads at the $(\widehat{U}\oplus
\widehat{V})$th and $(\widehat{u}\oplus\widehat{v})$th 
cells; "places" the symbols $\widehat{H}$ and $\widehat{h}$ in these 
locations; and "changes" the machine state's number on $j$:  
\[
\widehat{Z}_{t+1}\!\equiv\!\widehat{U}\!\oplus\!\widehat{V}\ \wedge \ \widehat{z}_{t+1}\!
\equiv\!\widehat{u}\oplus\widehat{v} \ \wedge \ \widehat{D}_{t+1}\!\equiv\!\widehat{H} \
\wedge \ \widehat{d}_{t+1}\!\equiv\!\widehat{h} \ \wedge \ \widehat{q}_{t+1}\!\equiv\!(j)_2.
\]

\begin{lem}\label{lphi} 
If the indices are left out of account, then $|\varphi(k)|=\mathcal{O}(m+r+s)$. 
\end{lem}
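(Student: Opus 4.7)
The plan is to decompose $\varphi(k)$ into its syntactic constituents and bound each one by $\mathcal{O}(m+r+s)$ using Lemma \ref{lcl}, Lemma \ref{ll2}, and direct counting for the pieces constructed in this subsection; since $\varphi(k)$ is the conjunction/implication of constantly many such pieces, the total is $\mathcal{O}(m+r+s)$.

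First I would bound the quantifier prefix $\forall\,\widehat{u},\widehat{U}\,\forall\,\widehat{v},\widehat{V}\,\forall\,\widehat{h},\widehat{H}$. The tuples $\widehat{u},\widehat{v}$ have length $m+1$, the tuples $\widehat{U},\widehat{V}$ have length $s+1$, and $\widehat{h},\widehat{H}$ have length $r+1$; ignoring indices as the statement allows, the prefix therefore contributes $\mathcal{O}(m+s+r)$ symbols. Next I would handle the two $\pi$-formulae (at instants $t$ and $t+1$): by Lemma \ref{lcl} each is of length $\mathcal{O}(m+r+s)$, regardless of whether its argument tuples are the free $\widehat{U},\widehat{u}$ or the ``shifted'' $\widehat{U}(\beta_1),\widehat{u}(\beta_2)$.

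Then I would treat the shift/correction and retrieval subformulae. By the definitions in \eqref{LR1} and \eqref{LR2}, each of $\kappa_1(\beta_1)(\widehat{U},\widehat{V})$ and $\kappa_2(\beta_2)(\widehat{u},\widehat{v})$ is either the trivial equivalence $\widehat{V}\!\equiv\!\widehat{0}$ or $\widehat{v}\!\equiv\!\widehat{0}$ (when $\beta_l=S$), or one of the formulae $\kappa_R,\kappa_L$ from \eqref{pl1}, \eqref{min1}, whose lengths are $\mathcal{O}(s)$ and $\mathcal{O}(m)$ respectively by Lemma \ref{ll2}(ii). The retrieval subformula splits as $\Gamma^{ret}_1\wedge\Gamma^{ret}_2$: here $\Gamma^{ret}_1(\beta_1)$ is a $\psi_1$-clause of length $\mathcal{O}(s+r)$ by Lemma \ref{lcl}, while $\Gamma^{ret}_2(\beta_2)$ is either a $\psi_{2,t}$-clause of length $\mathcal{O}(m+r)$ or the short equivalence $\widehat{h}\!\equiv\!c\gamma$ of length $\mathcal{O}(r)$; in all cases $|\Gamma^{ret}|=\mathcal{O}(m+s+r)$.

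It remains to bound $\Delta^{wr}(\widehat{u})$ and $\Delta^{cop}(\widehat{u})$. The first is a single $\psi_{2,t+1}$-clause, hence $\mathcal{O}(m+r)$ by Lemma \ref{lcl}. The second has the form $\forall\widehat{w}\,[\neg(\widehat{w}\!\equiv\!\widehat{u})\rightarrow\exists\widehat{g}(\psi_{2,t}(\widehat{w}\!\rightarrow\!\widehat{g})\wedge\psi_{2,t+1}(\widehat{w}\!\rightarrow\!\widehat{g}))]$; the new quantifier prefix $\forall\widehat{w}\,\exists\widehat{g}$ contributes $\mathcal{O}(m+r)$, the comparison $\widehat{w}\!\equiv\!\widehat{u}$ contributes $\mathcal{O}(m)$ (by Lemma \ref{ll2}(i)), and the two $\psi_2$-clauses together contribute $\mathcal{O}(m+r)$, so $|\Delta^{cop}|=\mathcal{O}(m+r)$. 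Adding the contributions of all constantly many summands yields $|\varphi(k)|=\mathcal{O}(m+r+s)$. There is no real obstacle here: the proof is an unwound tally of the bounds already established in Lemmas \ref{ll2} and \ref{lcl}. The only minor thing to watch is that the $\forall\widehat{w}\,\exists\widehat{g}$ in $\Delta^{cop}$ does not inflate the count, which is immediate because both bound tuples have lengths $m+1$ and $r+1$.
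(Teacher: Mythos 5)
Your proof is correct and is exactly the ``immediate calculation'' that the paper's one-line proof invokes (citing Lemmata~\ref{ll2} and~\ref{lcl}); you simply write out the tally explicitly, term by term. The only nitpick is cosmetic: the bound $|\widehat{w}\!\equiv\!\widehat{u}|=\mathcal{O}(m)$ follows from the definition of tuple equivalence as a conjunction of $m+1$ atomic equivalences, not really from Lemma~\ref{ll2}(i), which concerns $\widehat{\alpha}\!<\!\widehat{\beta}$.
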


\begin{proof} 
This follows from Lemmata \ref{ll2} and \ref{lcl} by immediate calculation. 
\end{proof}

\subsection{The description of the running steps and configurations}\label{s4.3} 
The Meyer and Stockmeyer technique will be applied to describe the machine's actions over an 
exponential period.
 
\subsubsection{One step.} Let $N$ be a number of the machine's instructions $P$ together 
with $2|\mathcal{A}|^2$ instructions of the idle run (see the beginning of this section). The 
formula $\Phi^{(0)}(P)$ that describes one step (whose number is $t\!+\!1$) of the 
machine $P$ run is of the form:
\begin{equation}\label{Phi0}
\Phi^{(0)}(P)(\widehat{y}_{t},\widehat{y}_{t+1}) \ \rightleftharpoons \
\bigwedge_{0<\,k\leqslant\,N}\varphi_{}(k)(\widehat{y}_{t},\widehat{y}_{t+1}),
\end{equation}  
where \
$\widehat{y}_{t}\rightleftharpoons\langle\widehat{q}_{t},\widehat{X},\widehat{x}_{t},
\widehat{Z}_{t},\widehat{z}_{t},\widehat{D}_{t},\widehat{d}_{t},\widehat{F},\widehat{f}_{t}
\rangle$ \ \ and \ \
$\widehat{y}_{t+1}\rightleftharpoons\langle\widehat{q}_{t+1},\widehat{X}$, $\widehat{x}_{t
+1},\widehat{Z}_{t+1}$, $\widehat{z}_{t+1},\widehat{D}_{t+1},$ $\widehat{d}_{t+1},\widehat{F},
\widehat{f}_{t+1}\rangle$ are two \ $(2m\!+\!5r\!+\!2s)$-tuples of its free variables.

\begin{rems}\label{remQu} 
\textbf{a)}. Let $\langle\chi\rangle$ be a quantifier-free part of a  
formula $\chi$. Using the well-known Tarski and Kuratowski algorithm, we obtain  
that the prenex-normal form of $\varphi(k)$ is  $\forall\,\widehat{U},\widehat{u},\widehat{V},
\widehat{v}\,\forall\widehat{H},\widehat{h}\,\forall\,
\widehat{w}\,\exists\,\widehat{g}\langle\varphi(k)\rangle$. 
Recall that the variables $\widehat{U},\widehat{u},\widehat{V},\widehat{v}$, $\widehat{H},
\widehat{h},\widehat{w}$, 
and $\widehat{g}$ are auxiliary and "attached" to the corresponding basic 
variables (see Subsection 4.1). The variables $\widehat{g}$ "service" only the 
variables $\widehat{f}_t$ and $\widehat{f}_{t+1}$. Hence, the tuple 
$\widehat{g}$ has a length of \ $r\!+\!1$, but not \ $m\!+\!1\!=\!m(n)\!+\!1$ and not \ 
$s\!+\!1\!=\!\lceil\log n\rceil\!+\!1$, i.e., its length does not depend on $n\!=\!|X|$. 
Therefore, we can substitute the subformula \ $\Delta^{cop}(\widehat{u})$ \ with \ 
$\Delta^{cop}_1(\widehat{u})\rightleftharpoons\forall\widehat{w}
\bigvee_{\rho\in\mathcal{A}}\langle\Delta^{cop}(\widehat{u})(\rho)\rangle,$ where all 
possible tuples $c\rho$ are placed instead of the tuples $\widehat{g}$,  
since the sense of the basic variables $\widehat{f}_t$ and $\widehat{f}_{t+1}$ is "storing" 
information about the alphabet symbols. It is clear that \ $|\Delta^{cop}_1(\widehat{u})|
\leqslant|\mathcal{A}|\cdot|\Delta^{cop}(\widehat{u})|$ because one does not need to 
write a prefix $\exists \widehat{g}$. Thus, the formulae $\varphi(k)$ and 
$\Phi^{(0)}(P)(\widehat{y}_{t},\widehat{y}_{t+1})$ \ are universal in essence.

\textbf{b)}. \label{Ind} 
The formula $\varphi(k)$ does not depend on concrete entries on 
the tapes at instant $t$ but depends only on the length of the input string $n$ (since 
$s+1=\lceil\log n\rceil+1$ is the number of variables $Z_{t,i}, \ X_i$, and $Z_{t+1,i}$) and 
the value of function $m(n)$ (which equals to the number of variables $x_{t,j}, \ z_{t,j}, \ 
z_{t+1,j},$ and $x_{t+1,j}$) at this concrete value $n$. Therefore the whole formula 
$\Phi^{(0)}(P)$ also depends only on the parameter $n$ and the function $m$ for fixed program 
$P$. There is a full analogy with programs for real computers, which can contain some 
parameters depending on the input length. For example, a program for multiplying square 
matrices in programming languages Algol, BASIC, Pascal, $C\#$, $C^{++}$, etc. can contain 
parameter $n$, which is the number of rows and entered additionally as a rule (certainly, this 
program is a learning exercise). However, a competently written program should not depend on 
specific values of elements of matrices to be multiplied.
\end{rems}

\begin{lem}\label{trcl} 
(i) If \ $\widehat{x}_{t}\!\neq\!\widehat{\mu}$, then 
a clause $\psi_{2,t}(\widehat{\mu}\!\rightarrow\!\varepsilon)$ will be true 
independently of the values of variables $\widehat{f}_{t}$. In particular, any 
quasi-equation, which is contained in the record of 
$\langle\Delta^{cop}(\widehat{u})\rangle$, will be true, if its 
color is $t$ or $t+\!1$, and at the same time \ $\widehat{x}_{t}\!\neq\!\widehat{w}$ 
or \ $\widehat{x}_{t+1}\!\neq\!\widehat{w}$ respectively. 

(ii)  When \ $\widehat{X}\!\neq\!\widehat{\mu}$, then 
a clause $\psi_{1}(\widehat{\mu}\!\rightarrow\!\varepsilon)$ will be true 
independently of the values of variables $\widehat{F}$.

(iii) For some constant $C_{1}$, the inequality \
$|\Phi^{(0)}(P)(\widehat{y}_{t},\widehat{y}_{t+1})|\!\leqslant\!C_{1}\cdot
|P|\cdot|\varphi(N)|$ holds, provided that the program $P$ is not empty.
\end{lem}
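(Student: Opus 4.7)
The plan is to unfold the definitions of $\psi_{2,t}$, $\psi_1$, and $\Phi^{(0)}(P)$ and observe that each claim is immediate from the semantics of material implication and a direct counting argument. No nontrivial combinatorial step is required; the lemma is essentially bookkeeping that will be invoked repeatedly in later sections.

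For part (i), I would recall that $\psi_{2,t}(\widehat{\mu} \to \varepsilon)$ is by definition the implication $(\widehat{x}_t \equiv \widehat{\mu}) \to (\widehat{f}_t \equiv c\varepsilon)$, whose antecedent is the conjunction $x_{t,0} \equiv \mu_0 \wedge \ldots \wedge x_{t,m} \equiv \mu_m$. Under the hypothesis $\widehat{x}_t \neq \widehat{\mu}$, at least one of the coordinate equivalences is false, hence the antecedent is false and the whole implication is true regardless of $\widehat{f}_t$. For the ``in particular'' assertion, I would expand the quantifier-free body $\langle \Delta^{cop}(\widehat{u})\rangle$ as the implication $\neg(\widehat{w} \equiv \widehat{u}) \to (\psi_{2,t}(\widehat{w} \to \widehat{g}) \wedge \psi_{2,t+1}(\widehat{w} \to \widehat{g}))$ and apply the argument just given to each of its two conjuncts, now with $\widehat{w}$ in the role of $\widehat{\mu}$. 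Part (ii) proceeds identically, replacing $\widehat{x}_t$ and $\widehat{f}_t$ with $\widehat{X}$ and $\widehat{F}$ in the definition of $\psi_1$.

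For part (iii), I would count directly from formula (\ref{Phi0}): $\Phi^{(0)}(P)$ is a conjunction of exactly $N$ subformulae $\varphi(1), \ldots, \varphi(N)$ separated by $N-1$ occurrences of $\wedge$ (together with possible parentheses). Every $\varphi(k)$ has the same syntactic skeleton; the only differences among them are the constant bit tuples $c\alpha_1, c\alpha_2, c\gamma, (i)_2, (j)_2$ encoding the fixed instruction and the choice of one of the three fixed sub-formulae in the definitions (\ref{LR1}), (\ref{LR2}), and of $\Gamma^{ret}_2$. Hence there is a constant $C_0$ (depending only on $|\mathcal{A}|$ and on the size of the encodings of states appearing anywhere in $P$, which is bounded by $r+1$) such that $|\varphi(k)| \leq C_0 \cdot |\varphi(N)|$ for every $k \leq N$. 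Since $N = |P| + 2|\mathcal{A}|^2$ with $|\mathcal{A}|$ fixed and $|P| \geq 1$, we have $N \leq C' \cdot |P|$ for a suitable $C'$, giving
\[
|\Phi^{(0)}(P)| \ \leq \ N \cdot C_0 \cdot |\varphi(N)| + (N-1) \ \leq \ C_1 \cdot |P| \cdot |\varphi(N)|
\]
with $C_1 \rightleftharpoons C' \cdot (C_0 + 1)$, which is the required inequality.

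The only place where one might stumble is in writing down explicitly why all $\varphi(k)$ have a common length bound; this reduces to observing that the tuple lengths $m+1$, $s+1$, $r+1$ are determined solely by $|X|$, the choice of $m$, and $|\mathcal{A}|$ together with the maximal state index of $P$, none of which depends on the particular instruction $M(k)$. Given that observation, the estimate is a one-line arithmetic computation.
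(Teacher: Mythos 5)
Parts (i) and (ii) agree exactly with the paper's one-line argument: the antecedent of the implication is false, so the whole clause is vacuously true. No issue there.

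In part (iii) you reach the right conclusion, but there are two small points worth flagging. First, a notational slip: you write $N = |P| + 2|\mathcal{A}|^2$, but $N$ is the \emph{number of instructions} in $P$ plus $2|\mathcal{A}|^2$, not the \emph{length} $|P|$ plus that. What you actually need (and implicitly use) is the inequality $N \leqslant C'\cdot|P|$, which does hold because each instruction occupies at least one symbol in the text of $P$. Second, the paper's proof, although terser, is more explicit about a subtlety you slide past: when the $N$ formulae $\varphi(1),\ldots,\varphi(N)$ are conjoined, their auxiliary bound variables $\widehat{u},\widehat{U},\widehat{v},\widehat{V},\widehat{h},\widehat{H},\ldots$ are renamed apart by attaching the index $k$, which costs an extra $\lceil\log N\rceil$ symbols per occurrence. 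The paper absorbs this by proving the sharper inequality $N\cdot\lceil\log N\rceil < C_2\cdot|P|$ (valid since states appearing in $P$ must themselves be written with roughly $\log N$ digits). Your route of bounding $|\varphi(k)|\leqslant C_0\cdot|\varphi(N)|$ does handle this too, provided you interpret $|\varphi(N)|$ as the length \emph{including} the first index $k=N$, which is the longest; it may be worth stating that convention explicitly, since Lemma \ref{lphi} measures $|\varphi(k)|$ with indices left out. With these clarifications your argument is equivalent to the paper's.
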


\begin{proof}
(i),(ii) The premises of clauses are false in these cases.\\
(iii) If \ $N\!-\!2|\mathcal{A}|^2\neq\!0$ (see the definition-equation \ref{Phi0} and the 
beginning of this section), then $N\cdot\lceil\log N\rceil\!<\!C_{2}\cdot|P|$; here $\log N$ 
locations are required to number (using indices $k$ for $1\!\leqslant\!k\!\leqslant\!N$) the 
tuples of auxiliary variables $\widehat{u},\!\widehat{U},\widehat{v},\widehat{V},\widehat{h},
\widehat{H},\ldots$ entered in the record of each formula $\varphi(k)$.  
This implies the lemma assertion.  
\end{proof}

\subsubsection{The exponential amount of steps}\label{s4.3.2}
The formulae  $\Phi^{(k)}(P)(\widehat{y}_{t},\widehat{y}_{t+e(k)})$ 
conform to the actions of machine $P$ over a period of time \ 
$e(k)\!\rightleftharpoons\!\exp(k)$. They are defined by induction on $k$:
\[
\Phi^{(k+1)}(P)\!\rightleftharpoons\!\exists\,\widehat{Y}\,
\forall\,\widehat{a}\,\forall\,\widehat{b}\bigl\{\bigl[(\widehat{y}_{t}\!
\equiv\!\widehat{a}\wedge\widehat{Y}\!\equiv\!\widehat{b})\vee
(\widehat{Y}\!\equiv\!\widehat{a}\wedge
\widehat{b}\!\equiv\!\widehat{y}_{t+e(k+1)})\bigr] \rightarrow 
 \ \Phi^{(k)}(P)(\widehat{a},\widehat{b})\bigr\},
\] 
where \ $\widehat{Y},\widehat{a},\widehat{b}$ are the $(2m\!+\!5r\!+\!2s)$-tuples  
of the new auxiliary variables.

\subsubsection{The description of configurations}\label{s4.3.3} 
Let $L_2(t)$ be a 
configuration recorded on the second tape after step $t$ (it may be unrealizable). Namely, 
every cell, whose number is $(\widehat{\mu})$, contains a symbol $\varepsilon(\widehat{\mu})$; 
the scanned squares have the numbers $(\widehat{\zeta}_1)$ and $(\widehat{\zeta}_2)$; and a 
machine is ready to execute an instruction \ $q_{i}\alpha_1,\alpha_2\!\rightarrow\!\ldots$. 
Then the following formula corresponds to this configuration: 
$$ \Psi L_2(t)(\widehat{y}_{t}) \ \rightleftharpoons \
\pi_{t}((i)_{2},\alpha_1,\alpha_2,\widehat{\zeta}_1,\widehat{\zeta}_2) \ \& \
\bigwedge_{0\leqslant\,(\widehat{\mu})\leqslant\,T}
\psi_{2,t}(\widehat{\mu}\!\rightarrow\!\varepsilon(\widehat{\mu})),$$
we recall that $T\!=\!\exp(m)$. The variables $\widehat{X}$ and $\widehat{F}$ are dummies 
here. Let us notice that this formula also indicates the location of the head on the first 
tape and describes that symbol, which is in this cell.

It remains only to describe the entries in the cells of the first tape: 
\[ 
\Psi K_1(\widehat{X},\widehat{F})\rightleftharpoons \psi_1(\widehat{0}\rightarrow
\rhd) \ \& \bigwedge_{1\leqslant\,(\widehat{\xi})\leqslant\,n}
\psi_{1}(\widehat{\xi}\!\rightarrow\!\chi(\widehat{\xi})) \ \& 
\quad \forall\widehat{u}_0\:\bigl(\widehat{u}_0>(n)_2 \ \rightarrow \ \psi_1(\widehat{u}
_0\rightarrow\Lambda)\bigr),
\]   
here the letter "L" is replaced by the letter "K" because these records on the first tape are 
undoubtedly real. The meaning of this entire formula is as follows: the $\rhd$ symbol is 
located in the zeroth cell, the input string $X=\langle\chi_1,\ldots,\chi_n\rangle$ is written 
next to $\rhd$, and the remaining cells of the tape are empty.%

The configuration $C(t)=K_1\cup L_2(t)$ of both tapes together at instant $t$ is described by 
the formula 
\[
\Psi(C)(\widehat{y}_{t}) \ \rightleftharpoons \ \Psi K_1(\widehat{X},\widehat{F}) \ \& \ 
\Psi L_2(t)(\widehat{y}_{t}).
\]

\section{The simulation of one running step}\label{s5}

So far, we have only associated the previously constructed formulae with some program 
components or processes. However, it cannot be argued that these formulae model something, 
i.e., they will not always become true, when the described events occur in reality.

Let $C(t)$ and $C(t\!+\!1)$ be some adjacent configurations, we definite 
\[
\Omega^{(0)}(X,P)(\widehat{y}_{t},\widehat{y}_{t+1})
\rightleftharpoons \ [\Psi C(t)(\widehat{y}_{t}) \ \& \
\Phi^{(0)}(P)(\widehat{y}_{t},\widehat{y}_{t+1})] \ 
\rightarrow \: \Psi C(t\!+\!1)(\widehat{y}_{t+1}).
\]
We will prove that the sentence $\forall\widehat{y}_t\forall\widehat{y}_{t+1}
\Omega^{(0)}(X,P)(\widehat{y}_{t},\widehat{y}_{t+1})$ is true if and only if the 
machine $P$ transforms the configuration $C(t)$ to $C(t\!+\!1)$ in one step, i.e., this 
formula models the machine actions at the step $t\!+\!1$.

\begin{rem} 
Since the subformula $\Psi K_1(\widehat{X},\widehat{F})$ does not depend on the 
step number, we can focus only on the subformulae $\Psi K_2(t)$ and $\Psi K_2(t+1)$ of 
formulae $\Psi C(t)$ and $\Psi C(t\!+\!1)$ in two next subsections.
\end{rem}

\subsection{The single-valuedness of modeling}\label{s5.1} 
Let \ $C(t\!+\!1)=K_1\cup K_2(t\!+\!1)$ be a configuration that has arisen from a 
configuration $C(t)=K_1\cup K_2(t)$ as a result of the machine $P$ action at the step 
$t\!+\!1$.

\begin{prop}\label{Pr2}
(i) There exist special values of variables $\widehat{y}_{t}$ such that the 
formula \ $\Psi C(t)(\widehat{y}_{t})$ \ is true, and the truth of \  
$\Phi^{(0)}(P)(\widehat{y}_{t},\widehat{y}_{t+1})$ \ follows from the truth of \ 
$\Psi C(t\!+\!1)(\widehat{y}_{t+1})$ for every tuple $\widehat{y}_{t+1}$.

(ii) If a sentence $\forall\widehat{y}_t\forall\widehat{y}_{t+1}\Omega^{(0)}(X,P)(\widehat{y}
_{t},\widehat{y}_{t+1})$ is true, then the machine $P$ cannot 
convert the configuration $C(t)$ into the configuration, which differs from 
$C(t\!+\!1)$, at the step $t+\!1$. 
\end{prop}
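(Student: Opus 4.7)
The plan is to prove (i) by constructing an explicit assignment to $\widehat{y}_t$ that encodes the configuration $C(t)$, with the pointer tuples $\widehat{X}$ and $\widehat{x}_t$ chosen specifically for the unique instruction $M(k_0)=q_i\alpha_1,\alpha_2\to q_j\beta_1,\gamma,\beta_2$ that $P$ applies at step $t+1$; part (ii) will then follow from (i) by contradiction, by exhibiting a ``bad'' witness $\widehat{y}_{t+1}$ whenever the prescribed and the actual successors of $C(t)$ differ.

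For (i), I set $\widehat{q}_t:=(i)_2$, the tuples $\widehat{Z}_t,\widehat{z}_t$ to the binary codes of the heads' positions in $C(t)$, and $\widehat{D}_t:=c\alpha_1$, $\widehat{d}_t:=c\alpha_2$; this makes the timer $\pi_t$ in $\Psi K_2(t)$ true by construction. The pointer tuples are chosen to name the cells on which the heads will land after step $t+1$: $\widehat{X}:=\widehat{Z}_t\oplus\widehat{V}^{\ast}$ and $\widehat{x}_t:=\widehat{z}_t\oplus\widehat{v}^{\ast}$ with the corrections determined by $\beta_1,\beta_2$, and $\widehat{F}$, $\widehat{f}_t$ are the codes of the symbols currently stored in those cells under $C(t)$ (when $\beta_2=S$ the tuple $\widehat{x}_t$ may be chosen freely, since $\widehat{h}$ is then directly pinned down by $\Gamma^{ret}_2(S)$). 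By Lemma~\ref{trcl} every clause $\psi_{2,t}$ and $\psi_1$ is either vacuous or satisfied by this choice, hence $\Psi C(t)(\widehat{y}_t)$ is true. For any $\widehat{y}_{t+1}$ satisfying $\Psi C(t+1)$ I verify $\varphi(k)$ conjunct by conjunct: if $M(k)\neq M(k_0)$ the parameters of $\pi_t$ disagree with $\widehat{q}_t,\widehat{D}_t,\widehat{d}_t$ for every $\widehat{U},\widehat{u}$, making $\varphi(k)$ vacuously true; if $M(k)=M(k_0)$ the antecedent uniquely fixes $\widehat{U}=\widehat{Z}_t$, $\widehat{u}=\widehat{z}_t$, the corrections $\widehat{V},\widehat{v}$ via $\kappa_1,\kappa_2$, and -- owing to the careful choice of $\widehat{X}$ and $\widehat{x}_t$ -- also $\widehat{H}=\widehat{F}$ and $\widehat{h}$ via $\Gamma^{ret}$. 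The consequent then matches $C(t+1)$: $\pi_{t+1}$ holds because $\widehat{y}_{t+1}$ encodes the successor, $\Delta^{wr}(\widehat{u})$ holds because cell $(\widehat{u})$ contains $\gamma$ in $C(t+1)$, and the existential witness required by $\Delta^{cop}(\widehat{u})$ is produced as discussed below.

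For (ii), I assume the universal sentence in the hypothesis is true and suppose, for contradiction, that $P$ actually converts $C(t)$ into some $C^{\ast}(t+1)\neq C(t+1)$. Applying (i) to the actual transition $C(t)\to C^{\ast}(t+1)$ yields a tuple $\widehat{y}_t^{\ast}$ with $\Psi C(t)(\widehat{y}_t^{\ast})$ true such that $\Phi^{(0)}(P)(\widehat{y}_t^{\ast},\widehat{y}_{t+1})$ holds for every $\widehat{y}_{t+1}$ satisfying $\Psi C^{\ast}(t+1)$. I then construct one such $\widehat{y}_{t+1}^{\ast}$ that simultaneously falsifies $\Psi C(t+1)$: if $C^{\ast}(t+1)$ and $C(t+1)$ disagree on state, head position, or scanned symbol, the mismatch is visible already in $\pi_{t+1}$; otherwise they disagree at some non-scanned cell $(\widehat{\mu}_0)$, and I set $\widehat{x}_{t+1}^{\ast}:=\widehat{\mu}_0$ with $\widehat{f}_{t+1}^{\ast}$ equal to the symbol prescribed by $C^{\ast}(t+1)$, which is consistent with $\Psi C^{\ast}(t+1)$ yet violates the clause $\psi_{2,t+1}(\widehat{\mu}_0\to\varepsilon(\widehat{\mu}_0))$ in $\Psi C(t+1)$. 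Then $\Omega^{(0)}(X,P)(\widehat{y}_t^{\ast},\widehat{y}_{t+1}^{\ast})$ has a true antecedent and a false consequent, contradicting the hypothesis. The main technical subtlety will be the verification of $\Delta^{cop}(\widehat{u})$ inside (i): for each $\widehat{w}\neq\widehat{u}$ I must produce a single tuple $\widehat{g}$ satisfying both $\psi_{2,t}(\widehat{w}\to\widehat{g})$ and $\psi_{2,t+1}(\widehat{w}\to\widehat{g})$, and this relies essentially on the locality of the step -- only cell $(\widehat{u})$ changes between $C(t)$ and $C(t+1)$, so whenever both pointers $\widehat{x}_t,\widehat{x}_{t+1}$ happen to equal $\widehat{w}$, the values $\widehat{f}_t$ and $\widehat{f}_{t+1}$ forced by $\Psi C(t)$ and $\Psi C(t+1)$ necessarily agree.
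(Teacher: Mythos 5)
Your proposal is correct and essentially reproduces the paper's own argument: the same explicit assignment of $\widehat{q}_t$, $\widehat{D}_t$, $\widehat{d}_t$, $\widehat{Z}_t$, $\widehat{z}_t$ from $C(t)$ and of the pointer tuples $\widehat{X},\widehat{x}_t$ to the cells the heads will land on after step $t+1$ (handling $\beta_2=S$ separately), the same reliance on Lemma~\ref{trcl} to render unwanted clauses vacuous, and the same locality observation for verifying $\Delta^{cop}$. The only difference is organizational -- you prove (i) first and then derive (ii) by contradiction, whereas the paper establishes both items ``simultaneously'' by selecting a single pair $(\widehat{y}_t,\widehat{y}_{t+1})$ that falsifies $\Upsilon_{t+1}$ whenever the candidate second-tape configuration $L_2(t+1)$ differs from the real $K_2(t+1)$ -- but these are the same construction packaged differently.
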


\begin{proof} 
We will prove these assertions simultaneously. Namely, we will select the 
values of the variables $\widehat{y}_{t}$ and $\widehat{y}_{t+1}$ such that a formula 
\[
\Upsilon_{t+1}(\widehat{y}_{t},\widehat{y}_{t+1})\! \rightleftharpoons\!
[\Psi K_2(t)(\widehat{y}_{t})\,\&\,\Phi^{(0)}(P)(\widehat{y}_{t},\widehat{y}_{t
+1})]\!\rightarrow\!\Psi L_2(t\!+\!1)(\widehat{y}_{t+1})
\] 
will be false if the configuration on the second tape $L_2(t\!+\!1)$ differs from the 
real $K_2(t+\!1)$. This implies Item (ii) of the proposition. However, in the beginning, we 
will select the special values of the variables of the tuple $\widehat{y}_{t}$. After 
that when we pick out the values of the corresponding variables of the color 
$t+\!1$, the formulae $\Psi K_2(t\!+\!1)(\widehat{y}_{t+1})$ and \ 
$\Phi^{(0)}(P)(\widehat{y}_{t},\widehat{y}_{t+1})$ will become true or false at 
the same time depending on the values of the variables $\widehat{y}_{t+1}$. 

Let \ $M(k)\!=\!q_{i}\alpha_1,\alpha_2\!\rightarrow\ldots$ be an instruction that is
applicable to the configuration $C(t)$; $(\widehat{\zeta}_1)$ and $(\widehat{\zeta}_2)$ be 
the numbers of the scanned squares. We define \ $\widehat{q}_{t}\!=\!(i)_{2}, \ \widehat{D}
_{t}\!=\!c\alpha_1,\ \widehat{d}_t\!=\!c\alpha_2, \ \widehat{Z}_t\!=\!\widehat{\zeta_1}, \ 
\widehat{z}_{t}\!=\!\widehat{\zeta_2}$. Then $\pi$-formula $\pi_{t}((i)_{2},\alpha_1,\alpha_2,
\widehat{\zeta}_1,\widehat{\zeta}_2)$, which is in the record of the formula 
$\Psi C(t)(\widehat{y}_{t}))$, becomes true. 

Now we consider a formula $\varphi(l)$ that conforms to some instruction 
$M(l)\!=\!q_{b}\theta_1,\theta_2\!\rightarrow\ldots$ that differs from $M(k)$. 
This formula has a timer \ $\pi_{t}((b)_{2},\theta_1,\theta_2,$ $\widehat{U},\widehat{u})$ as 
the first premise. For the selected values of the variables $\widehat{q}_{t}$, 
$\widehat{D}_t$, and $\widehat{d}_{t}$, this timer takes 
the form of \ $(i)_{2}\!\equiv\!(b)_{2}\wedge c\alpha_1\!\equiv\!c\theta_1\wedge c\alpha_2\!
\equiv\!c\theta_2\wedge\ldots$. It is obvious that if \ $i\!\neq b$, \ or 
$\alpha_1\!\neq\!\theta_1$, or \ $\alpha_2\neq\theta_2$, then this $\pi$-formula will be 
false, and the whole $\varphi(l)$ will be true.

Thus, let $\varphi(k)$ be a formula corresponding to the instruction \ 
$M(k)\!=\!q_{i}\alpha_1,\alpha_2\!\rightarrow\!q_{j}\beta_1,\gamma,\beta_2$. The 
quantifier-free part of this formula can become false only for \ $\widehat{U}\!=\!
\widehat{\zeta}_1$ and $\widehat{u}\!=\!\widehat{\zeta}_2$ since the values of variables 
$\widehat{Z}_t$ and $\widehat{z}_t$ are such. Therefore we will consider the case when \ 
$\widehat{U}\!=\!\widehat{\zeta}_1$ and $\widehat{u}\!=\!\widehat{\zeta}_2$. For the same 
reason, we pick out $\widehat{V}\!=\!\widehat{\eta}_1$ and $\widehat{v}\!=\!\widehat{\eta}_2$, 
where the tuple $\widehat{\eta}_l$ satisfies the condition $\kappa_l(\beta_l)(\widehat{\zeta}
_l,\widehat{\eta}_l)$  for $l\!=\!1,2$ (see the formulae \ref{LR1} and \ref{LR2} in Subsection 
\ref{s4.2}). So, the number of the cell, which a head will scann after step $t+\!1$ 
on the $l$th tape, equals \ $(\widehat{\zeta}_l(\beta_l))\!=\!(\widehat{\zeta}_l\!\oplus\!
\widehat{\eta}_l)$.  

Let us assign \ $\widehat{x}_{t}\!=\!\widehat{\zeta}_2(\beta_2)$ and $\widehat{X}\!=\!
\widehat{\zeta}_1(\beta_1)$. Since $\widehat{u}(\beta_2)\!=\!\widehat{\zeta}_2(\beta_2)$, the 
quasi-equation, of the color $t$, which enters in $\langle\Delta^{cop}(\widehat{u}) 
\rangle$ (i.e., in quanti\-fier-free part of $\Delta^{cop}(\widehat{u}))$, is true 
for all $\widehat{w}\!\neq\!\widehat{\zeta}_2(\beta_2)$ irrespective of the values of the 
tuples $\widehat{f}_{t}$ and $\widehat{g}$ according to Lemma \ref{trcl}(i,ii); while when 
$\widehat{w}\!=\!\widehat{\zeta}_2$, the premise of $\langle\Delta^{cop}(\widehat{u})
\rangle$ is false. For the same reason, all clauses that are included in $\Psi K_2(t)$ and 
$\Psi K_1$ are true, except the clauses \ $\psi_{2,t}(\widehat{\zeta}_2(\beta_2)\rightarrow\!
\lambda_2)$ and $\psi_{1}(\widehat{\zeta}_1(\beta_1)\rightarrow\!\lambda_1)$, where 
$\lambda_l$ is that symbol, which a head will view after step $t+\!1$ on the $l$th tape for 
$l\!=\!1,2$. We set the values of two tuples $\widehat{F}$ and $\widehat{f}_{t}$ as 
$c\lambda_1$ and $c\lambda_2$, respectively. Now, the questionable clauses from $\Psi C(t)$ 
become true because their premises and conclusions are such. All clauses of color $t$ from 
$\langle\varphi(k)\rangle$ become also true.

So, we have selected the values of the variables $\widehat{y}_t$, i.e., of all basic ones 
of color $t$. It remains only to assign the values of the variables of color $t\!+\!1$.

We recall that $\lambda_l$ is the symbol that will be in the $(\widehat{\zeta}_l(\beta_l))$th 
cell on $l$th tape for $l\!=\!1,2$ at the instant $t\!+\!1$. We 
define \ $\widehat{D}_{t+1}\!=\!c\lambda_1$, \ $\widehat{d}_{t+1}\!=\!c\lambda_2$, \ 
$\widehat{Z}_{t+1}\!=\!\widehat{\zeta}_1(\beta_1)\!=\!\widehat{\zeta}_1\oplus\widehat{\eta}
_1$,  \ $\widehat{z}_{t+1}\!=\!\widehat{\zeta}_2(\beta_2)\!=\!\widehat{\zeta}_2\oplus
\widehat{\eta}_2$, \ and \ $\widehat{q}_{t+1}\!=\!(j)_{2}$.

With this assignment and the chosen values of variables $\widehat{U}$, $\widehat{u}$, 
$\widehat{V}$, and $\widehat{v}$, the timer $\pi_{t+1}((j)_2,\lambda_1,\lambda_2,
\widehat{\zeta}_1(\beta_1),\widehat{\zeta}_2(\beta_2))$, which enters into the record of $\Psi 
K_2(t+\!1)$, becomes true; while the $\pi$-formula contained in $\Psi L_2(t+\!1)$ is false if 
it differs from "proper". But the conclusion of the quantifier-free part 
$\langle\varphi(k)\rangle$ of the formula $\varphi(k)$ contains a slightly different timer 
$\pi_{t+1}((j)_{2},\widehat{H},\widehat{h},\widehat{U}(\beta_1),$ $\widehat{u}(\beta)_2)$. 
In this timer, the equivalencies \ $\widehat{D}_{t+1}\!\equiv\!\widehat{H}$ \ and \ 
$\widehat{d}_{t+1}\!\equiv\!\widehat{h}$ \ raise doubts for the time being.

If \ $\widehat{H}\!\neq\!c\lambda_1$ \ or \ $\widehat{h}\!\neq\!c\lambda_2$, then the  
formula  $\Gamma^{ret}(\beta_1,\beta_2)$ will be false for the assidned values of 
$\widehat{X}$, $\widehat{x}_t$, $\widehat{F}$, and $\widehat{f}_t$ and the calculated 
$\widehat{\zeta}_1(\beta_1)$ and $\widehat{\zeta}_2(\beta_2)$. 
Hence, the whole formula $\langle\varphi(k)\rangle$ will be true in this case because its 
third premise $\Gamma^{ret}(\beta_1,\beta_2)$ is false. When $\widehat{H}\!=\!c\lambda_1$ 
and $\widehat{h}\!=\!c\lambda_2$, the terminal $\pi$-formula in $\langle\varphi(k)\rangle$ 
becomes true.  

When the "incorrect" formula $\Psi L_2(t+\!1)$ has a mistake in the record of the clause 
$\psi_{2,t+1}(\widehat{\zeta}_2\rightarrow\!\gamma)$ (see the definition of the formula 
$\Delta^{cop}(\widehat{u})$ in Subsection \ref{s4.2}), we will assign 
$\widehat{x}_{t+1}\!=\!\widehat{\zeta}_2$ \ and \ $\widehat{f}_{t+1}\!=\!c\,\gamma$. If this 
fragment is as it should be, but there is another "incorrect" clause 
$\psi_{t+1}(\widehat{\mu}\!\rightarrow\!\rho)$, where $\rho$ is different from "real" $\delta
$, we will define \ $\widehat{x}_{2,t+1}\!=\!\widehat{\mu}$ and \ $\widehat{f}_{t+1}\!=\!
\widehat{g}\!=\!c\delta$ (we note that \ 1) this is the only case when we need to set the 
values of the variables $\widehat{g}$; \ 2) in this case, we essentially use the formula 
$\Delta^{cop}(\widehat{u})$ in the form of disjunction --- see Remark \ref{remQu}). 
Now, the quasi-equations of the color $t\!+\!1$ in the formulae $\langle\Delta^{cop}
(\widehat{u})\rangle$ and $\Delta^{wr}(\widehat{u})$ are valid in both of 
these cases on the grounds of Lemma \ref{trcl}(i,ii) or because their premises and conclusions 
are true. Therefore, the whole formula $\langle\varphi(k)\rangle$ is true. All the clauses 
contained in $\Psi K_2(t\!+\!1)$ are true for the same reasons.

We obtain as a result that any formula $\varphi(l)$ is true for the above-selected  
values of the primary variables, so the entire conjunction 
$\Phi^{(0)}(P)$ is true. Since the premise and conclusion of the \
$\Omega^{(0)}(X,P)(\widehat{y}_{t},\widehat{y}_{t+1})$ are true, and the
configurations $K_2(t+\!1)$ and $L_2(t+\!1)$ are different, the "incorrect" formula 
$\Upsilon_{t+1}$ is false because of the choice of the values of basic variables. 

As the configuration $L_2(t\!+\!1)$ may differ from the real 
$K_2(t+\!1)$ in any place, Item (i) is established too. 
\end{proof} 

\subsection{The sufficiency of modeling}\label{s5.2} 
We will now prove a converse to Proposition \ref{Pr2}(ii).

\begin{prop}\label{Pr3} 
Let \ $C(t\!+\!1)=K_1\cup K_2(t\!+\!1)$ be a configuration that 
has arisen from a configuration $C(t)=K_1\cup K_2(t)$ as a result of an action of the machine 
$P$ at the step $t\!+\!1$. Then the formula
$\Omega^{(0)}(X,P)(\widehat{y}_{t},\widehat{y}_{t+1})$ is true for all the values variables
$\widehat{y}_{t}$ and $\widehat{y}_{t+1}$. 
\end{prop}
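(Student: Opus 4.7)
The plan is to establish validity of $\Omega^{(0)}(X,P)(\widehat{y}_t,\widehat{y}_{t+1})$ by fixing an arbitrary valuation of the tuples $\widehat{y}_t$ and $\widehat{y}_{t+1}$. If either of the premises $\Psi C(t)(\widehat{y}_t)$ or $\Phi^{(0)}(P)(\widehat{y}_t,\widehat{y}_{t+1})$ is false the implication is vacuous, so the interesting case is when both hold. From $\Psi C(t)$ I first read off the color-$t$ values forced by the $\pi$-formula inside $\Psi L_2(t)$: namely $\widehat{q}_t=(i)_2$, $\widehat{Z}_t=\widehat{\zeta}_1$, $\widehat{z}_t=\widehat{\zeta}_2$, $\widehat{D}_t=c\alpha_1$, $\widehat{d}_t=c\alpha_2$, together with the fact that if $\widehat{X}$ or $\widehat{x}_t$ points to any real cell then $\widehat{F}$ or $\widehat{f}_t$ encodes the true symbol there (by Lemma~\ref{trcl}\,(i,ii) applied to the clauses of $\Psi K_1$ and $\Psi L_2(t)$).

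Because $P$ is deterministic and has no hanging states (after the preprocessing in Section~\ref{s3.1}), a unique instruction $M(k)=q_i\alpha_1,\alpha_2\!\rightarrow\!q_j\beta_1,\gamma,\beta_2$ is applicable to $C(t)$. The corresponding conjunct $\varphi(k)$ of $\Phi^{(0)}(P)$ holds under the valuation, so I instantiate its universal auxiliaries at the \emph{canonical} values $\widehat{U}=\widehat{\zeta}_1$, $\widehat{u}=\widehat{\zeta}_2$; the shift corrections $\widehat{V}=\widehat{\eta}_1$, $\widehat{v}=\widehat{\eta}_2$ dictated by $\kappa_1(\beta_1)$ and $\kappa_2(\beta_2)$; and $\widehat{H}=c\lambda_1$, $\widehat{h}=c\lambda_2$, where $\lambda_1,\lambda_2$ are the symbols the two heads will see after step $t{+}1$. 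On this instantiation the timer $\pi_t$, the $\kappa$-conditions, and each part of $\Gamma^{ret}(\beta_1,\beta_2)$ can all be checked directly, using the symbol extraction from $\Psi C(t)$ to validate $\Gamma^{ret}$, so the premise of $\varphi(k)$ is true. Hence the conclusion $\Delta^{cop}(\widehat{\zeta}_2)\,\&\,\Delta^{wr}(\widehat{\zeta}_2)\,\&\,\pi_{t+1}((j)_2,c\lambda_1,c\lambda_2,\widehat{\zeta}_1(\beta_1),\widehat{\zeta}_2(\beta_2))$ is forced.

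The new $\pi$-formula pins down $\widehat{q}_{t+1}$, $\widehat{D}_{t+1}$, $\widehat{d}_{t+1}$, $\widehat{Z}_{t+1}$, $\widehat{z}_{t+1}$ to exactly the values asserted by the $\pi$-formula inside $\Psi L_2(t{+}1)$, and the subformula $\Psi K_1$ is inherited unchanged from $\Psi C(t)$. It remains to verify the big conjunction $\bigwedge_{\widehat{\mu}}\psi_{2,t+1}(\widehat{\mu}\!\rightarrow\!\varepsilon'(\widehat{\mu}))$ of $\Psi L_2(t{+}1)$; by Lemma~\ref{trcl}\,(i) the only non-vacuous clause is the one where $\widehat{\mu}=\widehat{x}_{t+1}$. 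Two subcases: if $\widehat{x}_{t+1}=\widehat{\zeta}_2$ then $\varepsilon'(\widehat{\zeta}_2)=\gamma$ and $\Delta^{wr}(\widehat{\zeta}_2)$ directly forces $\widehat{f}_{t+1}=c\gamma$; otherwise the cell is unchanged between steps, $\varepsilon'(\widehat{x}_{t+1})=\varepsilon(\widehat{x}_{t+1})$, and instantiating $\Delta^{cop}$ at $\widehat{w}=\widehat{x}_{t+1}$ supplies an existential witness $\widehat{g}$ which, combined with the color-$t$ value forced by $\Psi C(t)$, yields $\widehat{f}_{t+1}=c\varepsilon(\widehat{x}_{t+1})$.

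The main obstacle is the bookkeeping in the last case: the pointers $\widehat{x}_t$ and $\widehat{x}_{t+1}$ are independent free variables, and the existential in $\Delta^{cop}(\widehat{u})$ only links color-$t$ and color-$(t{+}1)$ contents through whichever pointer happens to match $\widehat{w}$. The argument has to split according to whether $\widehat{x}_t$ coincides with $\widehat{x}_{t+1}$ and whether either equals $\widehat{\zeta}_2$, and in every subcase combine Lemma~\ref{trcl}\,(i) with the vacuous-implication trick already exploited in the proof of Proposition~\ref{Pr2}, so that the pointer placement either gives the required equality directly or reduces it to a symbol already forced by $\Psi C(t)$.
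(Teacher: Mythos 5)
Your argument follows the same skeleton as the paper's: instantiate $\varphi(k)$ for the unique applicable instruction at $\widehat{U}=\widehat{\zeta}_1$, $\widehat{u}=\widehat{\zeta}_2$, with the $\kappa$-corrections and $\widehat{H},\widehat{h}$ forced by $\Gamma^{ret}$ together with $\Psi C(t)$, then read the required color-$(t{+}1)$ values off the concluding $\pi_{t+1}$ and the $\Delta$-subformulae. The paper argues this as a chain of logical consequences; you argue it model-theoretically over an arbitrary valuation. That difference is cosmetic. The real question is the last subcase, and that is where both your write-up and (more quietly) the paper's own text leave a genuine hole.

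The gap is precisely the case you flag and then wave off: $\widehat{x}_{t+1}\neq\widehat{\zeta}_2$ and $\widehat{x}_t\neq\widehat{x}_{t+1}$. You claim it is handled by ``Lemma~\ref{trcl}\,(i) with the vacuous-implication trick'' so that the value ``reduces to a symbol already forced by $\Psi C(t)$.'' Neither mechanism is available there. The clause of $\Psi L_2(t{+}1)$ that you must establish is $\psi_{2,t+1}(\widehat{x}_{t+1}\rightarrow\varepsilon'(\widehat{x}_{t+1}))$, whose antecedent $\widehat{x}_{t+1}\!\equiv\!\widehat{x}_{t+1}$ is true, so it is \emph{not} vacuous, and it demands $\widehat{f}_{t+1}=c\varepsilon'(\widehat{x}_{t+1})$. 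But nothing in the premise forces this. $\Delta^{wr}(\widehat{\zeta}_2)$ is vacuous because $\widehat{x}_{t+1}\neq\widehat{\zeta}_2$. Instantiating $\Delta^{cop}(\widehat{\zeta}_2)$ at $\widehat{w}=\widehat{x}_{t+1}$ yields
\[
\exists\widehat{g}\bigl[(\widehat{x}_t\!\equiv\!\widehat{x}_{t+1}\rightarrow\widehat{f}_t\!\equiv\!\widehat{g})\wedge(\widehat{x}_{t+1}\!\equiv\!\widehat{x}_{t+1}\rightarrow\widehat{f}_{t+1}\!\equiv\!\widehat{g})\bigr],
\]
and since $\widehat{x}_t\neq\widehat{x}_{t+1}$ the first conjunct is vacuous, so the witness can simply be $\widehat{g}=\widehat{f}_{t+1}$ — the existential is true for \emph{every} value of $\widehat{f}_{t+1}$, and hence imposes no constraint at all. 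Finally, $\Psi C(t)$ only forces $\widehat{f}_t$ at the cell $(\widehat{x}_t)$, which by hypothesis is a \emph{different} cell; it says nothing about the symbol at $(\widehat{x}_{t+1})$. The $\pi_{t+1}$ conclusion pins down $\widehat{q}_{t+1},\widehat{D}_{t+1},\widehat{d}_{t+1},\widehat{Z}_{t+1},\widehat{z}_{t+1}$ but not $\widehat{x}_{t+1}$ or $\widehat{f}_{t+1}$. So the required equality simply does not follow, and a valuation with $\widehat{x}_t$, $\widehat{x}_{t+1}$, $\widehat{\zeta}_2$ pairwise distinct and $\widehat{f}_{t+1}$ set to a symbol not actually occupying cell $(\widehat{x}_{t+1})$ at time $t{+}1$ makes the premise of $\Omega^{(0)}$ true while falsifying the conclusion. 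Your final paragraph names the obstacle correctly but the announced resolution does not exist; the paper's own proof does the same thing when it asserts that ``substituting the unique $\widehat{g}$'' into $\Delta^{cop}$ produces the clauses of $\Psi K_2(t{+}1)$ — the existential quantifier does not license that substitution. Until this subcase is genuinely repaired (e.g.\ by strengthening $\Delta^{cop}$ so that the color-$t$ and color-$(t{+}1)$ contents are linked independently of where the free pointers land), the proof is incomplete.
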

 
\begin{proof} 
Let \ $M(k)\!=\!q_{i}\alpha_1,\alpha_2\!\rightarrow\!q_{j},\beta_1,\gamma,\beta_2$ 
be the instruction that transforms the configuration $C(t)$ into the $C(t\!+\!1)$; 
and $\varphi(k)(\widehat{y}_{t},\widehat{y}_{t+1})$ be a formula, which is 
written for this instruction. This formula is a consequence of \ 
$\Phi^{(0)}(P)(\widehat{y}_{t},\widehat{y}_{t+1})$.

Let us replace $\varphi(k)$ by a conjunction of formulae
$\varphi(k)(\widehat{\mu}_1,\widehat{\mu}_2)$; they are each obtained as the result of the
substitution of the various values of the universal variables $\widehat{U}$ and $\widehat{u}$ 
for the variables themselves. Every formula $\varphi(k)(\widehat{\mu}_1,\mu_2)$ contains the 
subformula \ $\widehat{q_{t}}\!\equiv\!(i)_{2}\wedge\widehat{D}_t\!\equiv\!c\alpha_1\wedge
\widehat{d}_{t}\!\equiv\!c\alpha_2\wedge\widehat{Z}_t\!\equiv\!\widehat{\mu}_1\wedge
\widehat{z}_{t}\!\equiv\!\widehat{\mu}_2$ \ as the first premise. One of these subformulae 
coincides with the only timer \ $\pi_{t}((i)_2,\alpha_1,\alpha_2,\widehat{\zeta}_2,
\widehat{\zeta}_2)$ included in $\Psi K_2(t)$ for \ $\widehat{U}\!=\!\widehat{\mu}_1\!=\!
\widehat{\zeta}_1$ \ and \ $\widehat{u}\!=\!\widehat{\mu}_2\!=\!\widehat{\zeta}_2$,\ as the 
instruction $M(k)$ is applicable to the configuration $C(t)$. 

For these \ $\widehat{U}\!=\!\!\widehat{\zeta}_1$ and \ $\widehat{u}\!=\!\widehat{\zeta}_2$, 
there are tuples $\widehat{\eta}_1$ and $\widehat{\eta}_2$ \ such that the conditions 
$\kappa_1(\beta_1)(\widehat{U},\widehat{V})$ and $\kappa_2(\beta_2)(\widehat{u},\widehat{v})$ 
are valid for $\widehat{V}\!=\!\widehat{\eta}_1$ and $\widehat{v}\!=\!\widehat{\eta}_2$. So, 
the second premise of $\varphi(k)$ is true for suitable values of variables $\widehat{U}$, 
$\widehat{u}$, $\widehat{V}$, and $\widehat{v}$.

Using these $\widehat{\eta}_1$ and $\widehat{\eta}_2$, one can find $\widehat{U}(\beta_1)$ and 
$\widehat{u}(\beta_2)$ correspondently as $\widehat{\zeta}_1\!\oplus\widehat{\eta}_1$ and 
$\widehat{\zeta}_2\!\oplus\widehat{\eta}_2$. We recall that the machine $P$ cannot go beyond 
the left edges of tapes, hence \ $0\!\leqslant\!(\widehat{\zeta}_1(\beta_1))\!\leqslant\!n\!+
\!1$ \ and \ $0\!\leqslant\!(\widehat{\zeta}_2(\beta_2))\!\leqslant\!T\!=\!\exp(m)$ (see the 
second paragraph in Subsection \ref{s4.1}); therefore the formulae $\Psi K_1$ and $\Psi K_2(t)
$ accordingly contain the quasi-equations $\psi_1(\widehat{\zeta}_1(\beta_1)\rightarrow
\lambda_1)$ and $\psi_{2,t}(\widehat{\zeta}_2 (\beta_2)\rightarrow\lambda_2)$ for some symbols 
$\lambda_1$ and $\lambda_2$. The conjunction of these clauses coincides with $\Gamma^{ret}
(\beta_1,\beta_2)$, when $\beta_1\!\in\!\{L,R,S\}$ and $\beta_2\!\in\!\{L,R\}$ for 
$\widehat{H}\!=\!c\lambda_1$ and $\widehat{h}\!=\!c\lambda_2$; while if $\beta_2\!=\!S$, then 
$\Gamma^{ret}_2(\beta_2)$ becomes true equivance for $\widehat{h}\!=\!c\,\gamma$. 

So, each of the three premises in the formula $\varphi(k)$ coincides with the appropriate 
subformula of $\Psi C(t)$ or is true for some values of universal auxiliary variables from
the list at the beginning of $\varphi(k)$. It follows from this that 
\[
\Psi K_2(t) \ \ \& \ \ \Delta^{cop}(\widehat{\zeta}_2) \ \ \& \ \ \Delta^{wr}
(\widehat{\zeta}_2) \ \ \& \ \ \pi_{t+1}((j)_2,\lambda_1,\lambda_2,\widehat{\zeta}
_1(\beta_1),\widehat{\zeta}_2(\beta_2))
\] 
is the consequence of the formulae $\Psi K_2(t)$ and $\Phi^{(0)}(P)(\widehat{y}_{t},
\widehat{y}_{t+1})$.

The formula $\Delta^{cop}(\widehat{\zeta}_2)$ begins with the quantifiers 
$\forall\,\widehat{w}$. Let us replace this formula with a conjunction that is equivalent  
to it; we substitute all possible values of variables $\widehat{w}$ into this formula to  
this effect. For every value of $\widehat{w}$, there is a 
unique value of the tuple $\widehat{g}$ such that the clause \ 
$\psi_{2,t}(\widehat{w}\!\rightarrow\!\widehat{g})$ enters into the formula 
$\Psi K_2(t)$. When these values of $\widehat{g}$ are substituted in their 
places, we will obtain all the quasi-equations from $\Psi K_2(t\!+\!1)$, except one, which 
coincides with \ $\Delta^{wr}(\widehat{\zeta}_2)$. 
\end{proof}

\section{End of the proof of the main theorem}\label{s6}

\subsection{The simulation of the exponential computations}\label{s6.1}
  
We define the formulae that model $e(k)\!\rightleftharpoons\!\exp(k)$
running steps of a machine $P$ when it applies to a configuration $C(t)$:
\[
\Omega^{(k)}(X,P)(\widehat{y}_{t},\widehat{y}_{t+e(k)}) \ 
\rightleftharpoons\ [\Psi C(t)(\widehat{y}_{t}) \ \& \ \Phi^{(k)}(P)
(\widehat{y}_{t},\widehat{y}_{t+e(k)})]
\rightarrow \ \Psi C(t\!+\!e(k))(\widehat{y}_{t+e(k)}).
\]

\begin{prop}\label{Pr4} 
Let $t,k\!\geqslant\!0$ be the integers such that \
$t\!+\!e(k)\!\leqslant\!T\!=\!\exp(m)$.

(i) If the machine $P$ transforms the configuration $C(t)$ to $C(t\!+\!e(k))$ 
within $e(k)$ steps, then there are special values of 
variables \ $\widehat{y}_{t}$ such that the formula \ 
$\Psi C(t)(\widehat{y}_{t})$ \ is true, and for all \ $\widehat{y}_{t+e(k)}$, 
whenever the formula $\Psi C(t\!+\!e(k))(\widehat{y}_{t+e(k)})$ is true, \  
$\Phi^{(k)}(P)(\widehat{y}_{t},\widehat{y}_{t+e(k)})$ is also true.

(ii) The formula \ $\Omega^{(k)}(X,P)(\widehat{y}_{t},\widehat{y}_{t+e(k)})$ 
is valid for all values of variables \ $\widehat{y}_{t},\widehat{y}_{t+e(k)}$ 
(i.e., identically true) if and only if the machine $P$ 
converts the configuration $C(t)$ into $C(t\!+\!e(k))$ within $e(k)$ steps. 
\end{prop}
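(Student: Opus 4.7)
\emph{Proof proposal.} The plan is to induct on $k$, using the base case $k=0$ supplied by Propositions~\ref{Pr2} and~\ref{Pr3} (since $e(0)=1$), and then exploiting the doubling identity $e(k{+}1)=2\,e(k)$ together with the Meyer and Stockmeyer quantifier compression embedded in $\Phi^{(k+1)}$. Intuitively, the single existential $\exists\widehat{Y}$ in $\Phi^{(k+1)}$ should be read as naming the midpoint configuration at time $t+e(k)$, and the $\forall\widehat{a}\forall\widehat{b}$ specialization forces the same midpoint to connect both halves of the computation.

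For the inductive step of part~(i), suppose $P$ really converts $C(t)$ into $C(t+e(k{+}1))$ and let $C^{\ast}=C(t+e(k))$ be the actual intermediate configuration produced by $P$. I would apply the inductive part~(i) twice: first to the pair $(C(t),C^{\ast})$, obtaining special values $\widehat{y}_{t}^{\ast}$; and then to the pair $(C^{\ast},C(t+e(k{+}1)))$, obtaining special values $\widehat{y}_{t+e(k)}^{\ast}$ that make $\Psi C^{\ast}$ true. I then take $\widehat{Y}^{\ast}:=\widehat{y}_{t+e(k)}^{\ast}$ as the existential witness in $\Phi^{(k+1)}$. For any $\widehat{y}_{t+e(k+1)}$ that makes $\Psi C(t+e(k{+}1))$ true, the only pairs $(\widehat{a},\widehat{b})$ that activate the disjunctive premise inside $\Phi^{(k+1)}$ are $(\widehat{y}_{t}^{\ast},\widehat{Y}^{\ast})$ and $(\widehat{Y}^{\ast},\widehat{y}_{t+e(k+1)})$, and for each of them $\Phi^{(k)}(P)(\widehat{a},\widehat{b})$ is delivered by one of the two applications of the inductive hypothesis.

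For part~(ii) the $(\Leftarrow)$ direction will follow uniformly: given any assignment making both premises of $\Omega^{(k+1)}$ true, I peel off an existential witness $\widehat{Y}$ from $\Phi^{(k+1)}$ and specialize its universal clause to the two natural pairs to obtain both $\Phi^{(k)}(\widehat{y}_{t},\widehat{Y})$ and $\Phi^{(k)}(\widehat{Y},\widehat{y}_{t+e(k+1)})$; two successive applications of the inductive~(ii) to the real conversions $C(t)\to C^{\ast}$ and $C^{\ast}\to C(t+e(k{+}1))$ then propagate the truth of $\Psi$ from $\widehat{y}_{t}$ through $\widehat{Y}$ to $\widehat{y}_{t+e(k+1)}$. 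For the $(\Rightarrow)$ direction I would argue by contraposition: if $P$ does not reach $C(t+e(k{+}1))$ in $e(k{+}1)$ steps then, thanks to the idle-run instructions added at the beginning of Section~\ref{s4}, $P$ still runs for exactly $e(k{+}1)$ steps and halts in some definite configuration $D\ne C(t+e(k{+}1))$. Applying the already-proved part~(i) at level $k{+}1$ to the real conversion $C(t)\to D$ produces values at which both premises of $\Omega^{(k+1)}$ are true, while its conclusion $\Psi C(t+e(k{+}1))(\widehat{y}_{t+e(k+1)})$ is false because the chosen $\widehat{y}_{t+e(k+1)}$ describes $D$ rather than the hypothesized $C(t+e(k{+}1))$.

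The main obstacle I expect is making the single witness $\widehat{Y}^{\ast}$ serve two roles simultaneously --- as the correct endpoint of the first half and as an acceptable starting point for the second half --- without collision. This is possible precisely because the inductive part~(i) gives freedom to pick the second-half starting tuple: any tuple making $\Psi C^{\ast}$ true will do, so the midpoint description can be threaded coherently through both halves. Once this observation is in place, the alternation of quantifiers inside $\Phi^{(k+1)}$ causes no further difficulty, and the remainder reduces to routine Boolean manipulation of the disjunctive premise.
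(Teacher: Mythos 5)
Your treatment of part (i) and of the $(\Leftarrow)$ direction of part (ii) matches the paper's argument: split $e(k{+}1)$ at the actual midpoint $C^{\ast}=C(t{+}e(k))$, apply the inductive hypothesis twice, and exploit the fact that $\Phi^{(k+1)}$ behaves like $\exists\,\widehat{Y}\,[\Phi^{(k)}(\widehat{y}_t,\widehat{Y})\,\&\,\Phi^{(k)}(\widehat{Y},\widehat{y}_{t+e(k+1)})]$. That part is fine.

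The $(\Rightarrow)$ direction of (ii) is where you deviate from the paper and also where a genuine gap appears. You apply the already-proved level-$(k{+}1)$ part (i) to the real transition $C(t)\to D$, obtain special values $\widehat{y}_t^{\ast}$, and then pick some $\widehat{y}_{t+e(k+1)}$ making $\Psi D$ true, asserting that this choice automatically makes $\Psi C(t{+}e(k{+}1))$ false because it ``describes $D$.'' That last step does not follow. The formulas $\Psi D$ and $\Psi C(t{+}e(k{+}1))$ are conjunctions of quasi-equations of the form $\widehat{x}_{t+e(k+1)}\!\equiv\!\widehat{\mu}\rightarrow\widehat{f}_{t+e(k+1)}\!\equiv\!c\varepsilon$, and almost all such clauses are vacuously satisfied once the premise fails. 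If $D$ and $C(t{+}e(k{+}1))$ agree on state, head positions, and scanned symbols but differ only in the contents of some unscanned cell $(\widehat{\mu}_0)$, then an assignment in which $\widehat{x}_{t+e(k+1)}$ points anywhere but $(\widehat{\mu}_0)$ makes \emph{both} $\Psi D$ and $\Psi C(t{+}e(k{+}1))$ true simultaneously. You still have the freedom to steer $\widehat{x}_{t+e(k+1)}$ to the disagreeing cell (and then set $\widehat{f}_{t+e(k+1)}$ according to $D$), which does falsify $\Psi C(t{+}e(k{+}1))$ while keeping $\Psi D$ and $\Phi^{(k+1)}$ true; but this targeted choice is exactly what is missing from your sketch, and it is the same delicate bookkeeping that Proposition~\ref{Pr2} performs at level $0$.

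The paper's route avoids this issue entirely: for the $(\Rightarrow)$ direction it applies the inductive hypothesis of (ii) at level $k$ to the second half-pair $(C(t{+}e(k)),\,L(t{+}e(k{+}1)))$, where $L$ is the wrong target. Since $P$ does not transform $C(t{+}e(k))$ into $L(t{+}e(k{+}1))$ in $e(k)$ steps, IH~(ii) immediately produces concrete tuples $\widehat{Y}_1,\widehat{Y}_2$ for which the conclusion $\Psi L(t{+}e(k{+}1))(\widehat{Y}_2)$ is false while the two premises are true; a single use of IH~(i) on the first half then yields the starting tuple $\widehat{Y}_0$. In other words, the falsifying assignment is handed over by the inductive hypothesis rather than constructed by hand at level $k{+}1$. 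Your alternative is salvageable if you insert an explicit lemma (``if $D\neq C$ then some assignment makes $\Psi D$ true and $\Psi C$ false''), but as written that step is an assertion rather than an argument.
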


\begin{proof}
Induction on the parameter $k$. For $k\!=\!0$, Item (i) is Proposition \ref{Pr2}(i), 
and Item (ii) follows from Propositions \ref{Pr2}(ii) and \ref{Pr3}.

We start the proof of the inductive step by rewriting the formula \break
$\Phi^{(k+1)}(P)(\widehat{y}_{t},\widehat{y}_{t+e(k+1)})$ in the equivalent, 
but longer form:
\begin{multline*}
\exists\,\widehat{Y}\bigl\{\forall\,\widehat{a}\,\forall\,
\widehat{b} \bigl[(\widehat{y}_{t}\!\equiv\!\widehat{a} \ \wedge \
\widehat{Y}\!\equiv\!\widehat{b}) \rightarrow \Phi^{(k)}(P)(\widehat{a},
\widehat{b}) \bigr] \quad \& \\
\& \quad \forall\,\widehat{a}\,\forall\,\widehat{b}
\bigl[(\widehat{Y}\!\equiv\!\widehat{a} \ \wedge \
\widehat{b}\!\equiv\!\widehat{y}_{t+e(k+1)}) \rightarrow 
\Phi^{(k)}(P)(\widehat{a},\widehat{b})\bigr]\bigr\}.
\end{multline*}   
The following formula results from this immediately:
\[
\Xi_{k+1} \ \rightleftharpoons \ \exists\,\widehat{Y}\bigl\{\Phi^{(k)}(P)
(\widehat{y}_{t},\widehat{Y})
\ \& \ \Phi^{(k)}(P)(\widehat{Y},\widehat{y}_{t+e(r+1)})\bigr\}.
\]
On the other hand, each of the two implications which are included in the long form of 
the formula \ $\Phi^{(k+1)}(P)(\widehat{y}_{t},\widehat{y}_{t+e(k+1)})$ can be false 
only when the equivalences existing in its premise are valid. Hence, this formula 
is tantamount to $\Xi_{k+1}$. 

Let the machine $P$ transform the configuration $C(t)$ into $C(t+\!e(k))$ within 
$e(k)$ steps and convert the latter into $C(t\!+\!e(k\!+\!1))$ within the same time.

By the inductive hypothesis of Item (ii) (we recall that the induction is  
carried out over a single parameter $k$), the formula \ $\Omega^{(k)}(P)
(\widehat{y}_{t},\widehat{y}_{t+e(k)})$ is identically true for any $t$ such
that \ $t+\!e(k)\!\leqslant\!T$, and hence, it is identically true for an 
arbitrarily chosen $t$ and $t_1\!=\!t\!+\!e(k)$ provided that \
$t\!+\!e(k\!+\!1)\!=\!t_1\!+\!e(k)\!\leqslant\!T$. Thus, the formulae
\[ 
[\Psi C(t)(\widehat{y}_{t}) \
\& \ \Phi^{(k)}(P)(\widehat{y}_{t},\widehat{y}_{t+\!e(k)})] \ \rightarrow \ \Psi 
C(t\!+\!e(k))(\widehat{y}_{t+e(k)})
\]
and 
\[
[\Psi C(t\!+\!e(k))(\widehat{y}_{t+e(k)}) \ \& \ \Phi^{(r)}(P)(\widehat{y}_{t+e(k)},
\widehat{y}_{t+e(k+1)})] \ \rightarrow
 \Psi C(t\!+\!e(k\!+\!1))(\widehat{y}_{t+e(k+1)})
\]   
are identically true. Therefore, when we change the variables  $\widehat{y}_{t+e(k)}$ 
under the sign of the quantifier, we obtain from this that the following formula  \ 
\[
\forall\widehat{Y}\{[\Psi C(t)(\widehat{y}_{t}) \ \& \ \Phi^{(k)}(P)
(\widehat{y}_{t},\widehat{Y}) \ \& \ \Phi^{(k)}(P)(\widehat{Y},
\widehat{y}_{t+e(r+1)})]\rightarrow  
 \ \Psi C(t\!+\!e(k\!+\!1))(\widehat{y}_{t+e(k+1)})\}
\]   
is also identically true. This formula is equivalent to \
$$[(\Psi C(t)(\widehat{y}_{t}) \ \& \ \Xi_{k+1}) \ \rightarrow \ \Psi C(t\!+\!e(k\!+\!1))]
(\widehat{y}_{t},\widehat{y}_{t+e(k+1)})$$
because the universal quantifiers will be interchanged with the quantifiers of existence when 
they are introduced into the premise of the implication. Since the premise of the 
formula  $\Omega^{(k+1)}(X,P)(\widehat{y}_{t},\widehat{y}_{t+e(k+1)})$  is equivalent to 
$\Psi C(t) \ \& \ \Xi_{k+1}$ under the preceding argument, the inductive   
step of Item~(ii) is proven in one direction.

Now, let the configurations \ $L(t+\!e(k+\!1))$ and \ $C(t+\!e(k+\!1))$ be 
different. For some values \ $\widehat{Y}_{1}$ and \ $\widehat{Y}_{2}$ of variables \ 
$\widehat{y}_{t+e(k)}$ and \ $\widehat{y}_{t+e(k+1)}$, respectively,  the formula
$\{[\Psi C(t\!+\!e(k)) \ \& \ \Phi^{(k)}(P)]\rightarrow\Psi L(t+e(k+\!1))\}(\widehat{Y}_{1},
\widehat{Y}_{2})$ \ is false by the inductive assumption of Items (ii). Therefore, its 
conclusion \ $\Psi L(t+e(k+\!1))(\widehat{Y}_{2})$ \ is false, 
but both its premises \ $\Phi^{(k)}(P)(\widehat{Y}_{1},\widehat{Y}_{2})$ \ and \ 
$\Psi C(t+e(k))(\widehat{Y}_{1})$ are true. Since the last formula and 
$\Psi C(t)(\widehat{Y}_{0})$ are true for some special $\widehat{Y}_{0}$, 
which exists due to the induction proposition of Item~(i), the formula \
$\Phi^{(k)}(P)(\widehat{Y}_{0},\widehat{Y}_{1})$ is true. So, the implication \ 
\[
\{[\Psi C(t) \ \& \ \Phi^{(k+1)}(P)]\rightarrow\Psi
L(t\!+\!e(k\!+\!1))\}(\widehat{Y}_{0},\widehat{Y}_{2})
\]
has a true premise, and a false conclusion, therefore it is not 
identically true. Item~(ii) is proven.

Inasmuch as the configuration $L(t\!+\!e(k+\!1))$ may differ from the proper 
one at any position, to finish the proof of Item~(i), we set the values $\widehat{Y}_{1}$ of 
the variables $\widehat{y}_{t+e(k)}$ in a specific manner, using the inductive 
hypothesis of Item (ii) and then Item (i) for the configurations $C(t+\!e(k))$ and 
$C(t+\!e(k+\!1))$, as $P$ transforms the first of them to the second.
\end{proof}
 
\subsection{The short recording of the initial configuration and the condition of the successful 
run termination} \label{s6.2}  

We cannot fully describe the final configuration on the second tape when the machine arrives at 
one of the two states $q_{acc}$ or $q_{rej}$. Firstly, this final configuration may be very long. 
Secondly, and most importantly, it is unknown to us --- we only know the input string $X$ on the 
first tape and the machine's program $P$. However, we do not need the entire final configuration.

Since we have the instructions for the machine's run at idle (see the beginning of 
Section~\ref{s4}), the statement that the machine $P$ accepts an input string $X$ within \ 
$T\!=\!\exp(m(n))$ steps can be written rather briefly --- utilizing one quantifier-free 
formula of the color $T$: \ $\chi(\omega)\rightleftharpoons\widehat{q}\,_{T}\!\equiv\!(1)_2$ 
because the accepting state has the number one. This formula is a full 
analog of subformula $I$ of Cook's formula from \cite{Coo71}, and it has 
$\mathcal{O}(r)$ symbols in length nonmetering the indices. The 
writing of the first index $T$ (in the tuple $\widehat{q}\,_{T}$) occupies $m\!+\!1\!=\!m(n)\!
+\!1$ bits. The maximum length of the second indices is $\mathcal{O}\lceil\log r\rceil$, 
so we have \ $|\chi(\omega)|\!=\!\mathcal{O}(r\cdot(m(n)\!+\!\lceil\log r\rceil))$.

The formula $\Psi C(t)(\widehat{y}_t)$ was introduced in Subsection \ref{s4.3} to describe 
a confi\-gu\-ration arising after step $t$. It is a very lengthy:
$|\Psi C(t)|>m\cdot\exp(m)$. However, the initial configuration is simple enough and allows us 
to give a relatively short description. We have already described the entries on the first 
tape in Subsection \ref{s4.3.3} since they remain unchanged throughout the work. This is done 
using the following formula $\chi_1(X)$ with the free variables $\widehat{X}\!=\!\langle X_0,
\ldots,X_m\rangle$ and $\widehat{F}\!=\!\langle F_0,\ldots,F_r\rangle$: 
\begin{eqnarray*} 
\chi_1(X)\rightleftharpoons\Psi K_1(\widehat{X},\widehat{F}) \quad \rightleftharpoons \quad 
\psi_1(\widehat{0}\rightarrow\rhd) \ \& \ 
\bigwedge_{1\leqslant\,(\widehat{\xi})\leqslant\,n}
\psi_{1}(\widehat{\xi}\!\rightarrow\!\chi(\widehat{\xi})) \quad \& \\
\& \quad \forall\widehat{u}_0\:\bigl(\widehat{u}_0>(n)_2 \ \rightarrow \ \psi_1(\widehat{u}
_0\rightarrow\Lambda)\bigr).
\end{eqnarray*}
Let us pay attention that the variable $X$ denotes the input string, but not the tuple of the 
variables $\widehat{X}$. The sense of $\chi_1(X)$ is as follows: the $\rhd$ symbol is located 
in the zeroth cell, the input string $X=\langle\chi_1,\ldots,\chi_n\rangle$ is written next to 
$\rhd$ in the squares $1,2,\ldots,n$, and the remaining cells of the tape are empty.  
  
The entries on the second tape before starting work are described even more simply since all 
its cells are empty, except for the zeroth one, which contains the $\rhd$ symbol:
\[
\chi_2(0)\rightleftharpoons\psi_{2,0}((0)_2\!\rightarrow\!\rhd) \ \& \ \forall\widehat{u}
_0[\widehat{u}_0\geqslant(1)_2\rightarrow\psi_{2,0}(\widehat{u}_0\rightarrow\Lambda)].
\]  

It remains to record that the machine is in the start state $q_0$; both of its heads are 
pointed at zeroth cells and see the $\rhd$ symbols there. The single timer of color 0 describes 
all this \ \quad $\pi_0((0)_2,\rhd,\rhd,(0)_2,(0)_2)$.

The whole initial configuration is described by the formula
\[
\chi(0)(\widehat{y}_0) \ \rightleftharpoons \ \chi_1(X) \quad \& \quad \chi_2(0) \quad \& \quad 
\pi_0((0)_2,\rhd,\rhd,(0)_2,(0)_2),
\]
It is easy see that the formula $\chi(0)(\widehat{y}_0)$ is the brief form for 
$\Psi C(t)(\widehat{y}_0)$ introduced in Subsection \ref{s4.3.3} for $t\!=\!0$. 

We recall that $r$ positions are reserved for recording the codes of the symbols of the 
working alphabet of simulated machines and the indication of the machine state's numbers in 
Subsection \ref{s4.1}; and if $|X|\!=\!n$, then $|\widehat{X}|\!=\!\mathcal{O}(\lceil\log n
\rceil\cdot\lceil\log\lceil\log n\rceil\rceil)$.

\begin{lem}\label{lchi0} 
$|\chi(0)(\widehat{y}_{0})|\!\leqslant\!D_1\cdot n\cdot\lceil
\log n\rceil\cdot\lceil\log\lceil\log n\rceil\rceil\!+\!D_2\cdot m(n)\cdot\lceil\log m(n)
\rceil$ for the proper constants $D_{1}$, $D_2$ and \ $|X|\!=\!n\!>\!r$. 
\end{lem}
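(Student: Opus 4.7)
The plan is to split $\chi(0)(\widehat{y}_0)$ into its three components from Subsection \ref{s6.2} --- namely $\chi_1(X)$, $\chi_2(0)$, and the single timer $\pi_0((0)_2,\rhd,\rhd,(0)_2,(0)_2)$ --- and bound each contribution separately. The key preliminary observation is that every basic variable appearing in $\chi(0)$ has first index $t=0$, so that subscript costs only $\mathcal{O}(1)$ per occurrence. The length is therefore governed by the number of clauses and by the second (type-dependent) subscript, whose range is $\{0,\ldots,s\}$ for $X$-type variables, $\{0,\ldots,m\}$ for $x$- and $f$-type variables, and $\{0,\ldots,r\}$ for $F$-type variables, requiring respectively $\lceil\log s\rceil=\lceil\log\lceil\log n\rceil\rceil$, $\lceil\log m(n)\rceil$, and $\lceil\log r\rceil$ bits. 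Since $r$ depends only on the fixed machine $P$ and $n>r$ by hypothesis, any purely $r$-dependent contribution is absorbed into the constants $D_1,D_2$.

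For $\chi_1(X)$ I would enumerate its clauses: one $\psi_1(\widehat{0}\to\rhd)$, the $n$ clauses $\psi_1(\widehat{\xi}\to\chi(\widehat{\xi}))$ for $1\le(\widehat{\xi})\le n$, and one universally quantified clause $\forall\widehat{u}_0(\widehat{u}_0>(n)_2\to\psi_1(\widehat{u}_0\to\Lambda))$. By Lemma \ref{lcl} each such $\psi_1$-clause has length $\mathcal{O}(s+r)$ without indices; with the binary encoding of the second subscripts it grows to $\mathcal{O}(s\lceil\log s\rceil)$. Summing over the $\mathcal{O}(n)$ clauses yields the first summand $D_1\cdot n\cdot\lceil\log n\rceil\cdot\lceil\log\lceil\log n\rceil\rceil$ for a suitable $D_1$.

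For $\chi_2(0)$ there are only two clauses of form $\psi_{2,0}$ (one free, one under $\forall\widehat{u}_0$); by Lemma \ref{lcl} each has length $\mathcal{O}(m+r)$ without indices and $\mathcal{O}(m(n)\lceil\log m(n)\rceil)$ once the second subscripts are encoded. The lone timer $\pi_0$ fits within $\mathcal{O}((s+m+r)\lceil\log m(n)\rceil)$ and is therefore also absorbed into this bound since $s\le m(n)$ by hypothesis on $m$. Collecting, the contribution of $\chi_2(0)$ together with the timer is at most $D_2\cdot m(n)\cdot\lceil\log m(n)\rceil$ for a suitable $D_2$, and adding the $\chi_1$ estimate produces the claimed inequality.

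The only mildly delicate point is careful index bookkeeping --- specifically recognising that only the colour $t=0$ appears in $\chi(0)$, so the first subscript stays cheap even though in the full modelling formula it may range up to $T=\exp(m(n))$ --- and checking that the ``universal-tail'' clauses, each contributing only once, do not swamp the first summand. Beyond this, the argument is a direct symbol count from Lemma \ref{lcl}, and I anticipate no real obstacle.
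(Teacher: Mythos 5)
Your proof is correct and follows the same route the paper takes: the paper's own proof is the single line ``This follows from Lemmata \ref{ll2} and \ref{lcl} by immediate calculation,'' and your argument is exactly the ``immediate calculation'' spelled out. One small bookkeeping gap: you bound the universal tails $\forall\widehat{u}_0(\widehat{u}_0>(n)_2\to\psi_1(\ldots))$ and $\forall\widehat{u}_0(\widehat{u}_0\geqslant(1)_2\to\psi_{2,0}(\ldots))$ only via the $\psi$-clause length from Lemma \ref{lcl}, but the lexicographic comparisons $\widehat{u}_0>(n)_2$ and $\widehat{u}_0\geqslant(1)_2$ are themselves subformulae whose lengths need Lemma \ref{ll2}(i); since that gives $\mathcal{O}(s)$ and $\mathcal{O}(m)$ respectively before indices, this does not change the asymptotics, but it is the reason the paper cites Lemma \ref{ll2} alongside Lemma \ref{lcl}.
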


\begin{proof} 
This follows from Lemmata \ref{ll2} and \ref{lcl} by immediate calculation.
\end{proof}

\subsection{The simulating formula $\Omega^{(m)}(X,P)$}\label{s6.3} 

Let us define
\begin{eqnarray}\label{6.1}
\Omega^{(m)}(X,P) \ \rightleftharpoons
\ \forall\widehat{y}_{0},\widehat{y}\,_{T} \ \Bigl\{\Bigl[
\ \chi(0)(\widehat{y}_{0}) \ \& \ \exists\,\widehat{Y}_{m}\forall\,\widehat{a}_{m}\forall\,
\widehat{b}_{m}\ldots\exists\,\widehat{Y}_1\forall\,\widehat{a}_{1}\forall\,
\widehat{b}_{1} \nonumber  \\
\bigl\{\bigwedge_{1\leqslant\,k\leqslant\,m}\bigl[
(\widehat{a}_{k+1}\!\equiv\!\widehat{a}_{k}\wedge
\widehat{Y}_{k}\!\equiv\!\widehat{b}_{k}) \ \vee
(\widehat{Y}_{k}\!\equiv\!\widehat{a}_{k}\wedge\widehat{b}_{k}\!
\equiv\!\widehat{b}_{k+1})\bigr] \rightarrow  
\Phi^{(0)}(P)(\widehat{a}_{1},\widehat{b}_{1})\bigr\}\Bigr] \rightarrow\ \\
\rightarrow \ \quad \chi(\omega)(\widehat{y}\,_{T})\Bigr\},\nonumber
\end{eqnarray}
here we designated $\widehat{a}_{m+1}\!=\!\widehat{y}_{0}, \
\widehat{b}_{m+1}\!=\!\widehat{y}\,_{T}$ in the record of the "big" 
conjunction for the sake of brevity. 

\begin{prop}\label{Pr5} 
The formula $\Omega^{(m)}(X,P)$ has the property (i) from 
the statement of Teorema \ref{Main}. In other words, this sentence is true  
if and only if the machine $P$ accepts the input $X$ within \ $T=\exp(m(n))$ steps. 
\end{prop}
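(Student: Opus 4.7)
The plan is to reduce Proposition \ref{Pr5} to Proposition \ref{Pr4} by two bookkeeping observations about $\Omega^{(m)}(X,P)$ in (\ref{6.1}). First, the inner quantified block
$\exists\widehat{Y}_m\forall\widehat{a}_m\widehat{b}_m\cdots\exists\widehat{Y}_1\forall\widehat{a}_1\widehat{b}_1\{\cdots\}$
is logically equivalent to $\Phi^{(m)}(P)(\widehat{y}_0,\widehat{y}_T)$ from Subsection \ref{s4.3.2}. I would prove this by induction on $m$: unfold one layer of the recursive definition of $\Phi^{(m)}$, push the existential $\exists\widehat{Y}_m$ across the outer implication (legal since $\widehat{Y}_m$ is fresh in its premise and $P\to\exists x\,Q(x)\equiv\exists x(P\to Q(x))$), and collapse nested implications by $A\to(B\to C)\equiv(A\wedge B)\to C$. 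Second, the formula $\chi(0)(\widehat{y}_0)$ is, by construction in Subsection \ref{s6.2}, literally $\Psi C(0)(\widehat{y}_0)$. Combining the two, $\Omega^{(m)}(X,P)$ is equivalent to
\[
\forall\widehat{y}_0\forall\widehat{y}_T\bigl\{[\Psi C(0)(\widehat{y}_0)\,\&\,\Phi^{(m)}(P)(\widehat{y}_0,\widehat{y}_T)]\rightarrow\chi(\omega)(\widehat{y}_T)\bigr\},
\]
which differs from $\forall\widehat{y}_0\widehat{y}_T\,\Omega^{(m)}(X,P)(\widehat{y}_0,\widehat{y}_T)$ of Subsection \ref{s6.1} only in replacing the full conclusion $\Psi C(T)$ by the short assertion $\chi(\omega)\rightleftharpoons\widehat{q}_T\equiv(1)_2$.

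For the forward direction, assume $P$ accepts $X$ within $T=\exp(m(n))$ steps. The idle-run instructions added at the start of Section \ref{s4} force $P$ to remain in $q_{acc}=q_1$ once it enters that state, so the actual step-$T$ configuration $C(T)$ has $\widehat{q}_T=(1)_2$; hence $\Psi C(T)\to\chi(\omega)$ is identically true. Proposition \ref{Pr4}(ii) with $t=0,\,k=m$ asserts that $[\Psi C(0)\,\&\,\Phi^{(m)}(P)]\to\Psi C(T)$ is identically true, and chaining these two implications gives that $\Omega^{(m)}(X,P)$ holds.

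For the converse, suppose $P$ does \emph{not} accept $X$ within $T$ steps. The elimination of hanging states in Subsection \ref{s3.1} together with the idle-run convention ensures that $C(T)$ is well defined and has $\widehat{q}_T\ne(1)_2$: either $P$ has rejected and is still in $q_{rej}=q_2$, or it has not halted and is in some other non-accepting state. Applying Proposition \ref{Pr4}(i) with $t=0,\,k=m$ yields special values $\widehat{y}_0^*$ making $\Psi C(0)(\widehat{y}_0^*)$ true such that every $\widehat{y}_T$ satisfying $\Psi C(T)(\widehat{y}_T)$ also satisfies $\Phi^{(m)}(P)(\widehat{y}_0^*,\widehat{y}_T)$. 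Choose any such $\widehat{y}_T^*$; then the premise of $\Omega^{(m)}(X,P)$ is true at $(\widehat{y}_0^*,\widehat{y}_T^*)$ while the conclusion $\chi(\omega)(\widehat{y}_T^*)$ is false, so $\Omega^{(m)}(X,P)$ is refuted.

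The main technical obstacle is the first observation: verifying that the Meyer--Stockmeyer prenex block in (\ref{6.1}) really matches $\Phi^{(m)}(P)$ requires careful quantifier-freshness bookkeeping when pushing $\exists\widehat{Y}_k$ across the implication at each inductive stage. Once that is in place, everything else is an immediate application of Proposition \ref{Pr4}, together with the remark that the short formula $\chi(\omega)$ can stand in for the full configuration $\Psi C(T)$ precisely because the idle-run convention retains the accepting state from the moment it is first entered through step $T$.
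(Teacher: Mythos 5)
Your proposal is correct and follows essentially the same route as the paper: first show the Meyer--Stockmeyer prenex block in Equation~\eqref{6.1} contracts, layer by layer, to $\Phi^{(m)}(P)(\widehat{y}_0,\widehat{y}_T)$, then observe $\chi(0)\equiv\Psi C(0)$ and that $\chi(\omega)$ is a conjunct of $\Psi C(T)$ when $P$ accepts, and finally invoke Proposition~\ref{Pr4}. The one place you go further than the paper's own compressed argument is in spelling out the converse direction explicitly via Proposition~\ref{Pr4}(i) (special $\widehat{y}_0^*$ making $\Psi C(0)$ true, then a $\widehat{y}_T^*$ encoding the true non-accepting $C(T)$ so that $\Phi^{(m)}$ holds but $\chi(\omega)$ fails), which is a correct and worthwhile elaboration of what the paper leaves implicit.
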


\begin{proof} 
Let $\Theta_k\!=\!\Theta_k(\widehat{a}_{k},\widehat{a}_{k+1},\widehat{Y}_k,
\widehat{b}_{k},\widehat{b}_{k+1})$ be the denotation for the expression \ 
$(\widehat{a}_{k+1}\!\equiv\!\widehat{a}_{k}\wedge\widehat{Y}_{k}\!\equiv\!\widehat{b}_{k})
\vee(\widehat{Y}_{k}\!\equiv\!\widehat{a}_{k}\wedge\widehat{b}_{k}\!\equiv\!\widehat{b}_{k+1})
$. The part of the formula $\Omega^{(m)}(X,P)$, which is located in the big square brackets 
in Equation \ref{6.1}, almost coincides with the first of the two following formulae and is  
equally matched to the second (according to the agreement of Subsection \ref{s3.2} that 
conjunction connects more intimately than an implication):
\begin{eqnarray*}
1) \quad \chi(0)(\widehat{y}_{0}) \ \& \
\exists\,\widehat{Y}_{m}\forall\,\widehat{a}_{m}\forall\,\widehat{b}_{m}
\ldots\exists\,\widehat{Y}_1\forall\,\widehat{a}_{1}\forall\,\widehat{b}_{1} \
\bigl\{\bigwedge\limits_{1\leqslant\,k\leqslant\,m}\Theta_{k}\rightarrow
\Phi^{(0)}(P)(\widehat{a}_{1},\widehat{b}_{1})\bigr\};  \\
2) \ \ \chi(0)(\widehat{y}_{0}) \ \& \ \exists\,\widehat{Y}_{m}\forall
\,\widehat{a}_{m}\forall\,\widehat{b}_{m}\ldots\exists\,\widehat{Y}_1\forall\,
\widehat{a}_{1}\forall\,\widehat{b}_{1}
\Bigl(\Theta_{m}\rightarrow\bigl(\Theta_{m-1}\rightarrow(\ldots\rightarrow(\Theta_{1}\to\\
\rightarrow \ \Phi^{(0)}(P)(\widehat{a}_{1},\widehat{b}_{1}))\ldots)\bigr) \Bigr).
\end{eqnarray*}
 If we carry the quantifiers through the subformulae, which do not contain 
the corresponding variables, then we will conclude that the second formula is tantamount to 
the third one:
\begin{eqnarray*}
3) \ \ \chi(0)(\widehat{y}_{0}) \ \& \ \exists\,\widehat{Y}_{m}
\forall\,\widehat{a}_{m}\forall\,\widehat{b}_{m}\Bigl(\Theta_{m} \ \rightarrow 
 \ \exists\,\widehat{Y}_{m-1}\forall\,\widehat{a}_{m-1}\forall\,
\widehat{b}_{m-1}\bigl(\Theta_{m-1}\rightarrow(\ldots \rightarrow  \\
\rightarrow \ \exists\,\widehat{Y}_1\forall\,\widehat{a}_{1}\forall\,
\widehat{b}_{1}(\Theta_{1}\rightarrow
\Phi^{(0)}(P)(\widehat{a}_{1},\widehat{b}_{1}))\ldots)\bigr)\Bigr).
\end{eqnarray*}

According to the definition, the formula \ $\exists\,\widehat{Y}_k\forall\,\widehat{a}_{k}\forall
\,\widehat{b}_{k}(\Theta_{k}\to\Phi^{(k-1)}(P)(\widehat{a}_{k},\widehat{b}_{k}))$ \ contracts 
into $\Phi^{(k)}(P)(\widehat{a}_{k+1},\widehat{b}_{k+1})$. Therefore the whole sentence 
$\Omega^{(m)}(X,P)$ is equivalent to \ $\forall\widehat{y}_{0},\widehat{y}\,_{T}
\bigl[\bigl(\chi(0) \ \& \ \Phi^{(m)}(P)\bigr)\to\chi(\omega)\bigr]$. Notice that the formula
\[
\forall\widehat{y}_{0},\widehat{y}\,_{T}\bigl[\bigl(\Psi C(0)(\widehat{y}_0) \ \& \ 
\Phi^{(m)}(P)(\widehat{y}_{0},\widehat{y}\,_{T})\bigr)\to\Psi(T)(\widehat{y}_{T})\bigr]
\]
is true by Proposition \ref{Pr4}(ii), and $\chi(\omega)$ is a product term of the formula 
$\Psi(T)(\widehat{y}_{T})$, provided that the machine $P$ accepts an input string $X$ within $T$ 
steps. Consequently, based on the fact that formulae $\chi(0)(\widehat{y}_0)$ and 
$\Psi C(0)(\widehat{y}_0)$ are tantamount, one could say that the formula \ref{6.1} is the 
modeling formula.  
\end{proof}

\subsection{The time of writing of $\Omega^{(m)}(X,P)$ and its length}\label{s6.4} 

Let us discuss the possibility of writing this formula for the various functions $m(n)$.

\begin{lem}\label{sp-con} 
Let $g(n)$ be a fully space-constructible function. Then this 
function is also indexed-constructible, i.e., for each input string having length $n$ on the 
first tape, one can fill exactly $g(n)\!+\!1$ square on the second tape and then number the 
filled cells in the binary representation using \ $\mathcal{O}(g(n)\!\cdot\!\lceil\log g(n)
\rceil)$ memory cells within time bounded the value of some polynomial on $\max\{n,g(n)\}$, 
provided that the working alphabet has at least five symbols: $\rhd, \Lambda, 0, 1$, and one 
more, for instance, $X$.
\end{lem}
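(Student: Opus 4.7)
The plan is to construct the numbered layout on the second tape in three phases.

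First, I would simulate the 2-DTM witnessing full space-constructibility of $g$, which places a distinguished marker (say, the symbol $X$) in the $g(n)$-th cell of the second tape without ever visiting any later cell. A single left-to-right sweep from cell $0$ up to this marker then writes $\Lambda$ in every intermediate cell, so that the only non-blank contents below the marker are $\rhd$ at cell $0$ and $X$ at cell $g(n)$.

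Second, I would reserve a block $R$ of $\lceil\log g(n)\rceil+1$ cells just to the right of the marker (these cells are guaranteed untouched by the first phase) to hold a binary counter, and perform a unary-to-binary conversion: walk a temporary scan pointer across the cells $1,2,\ldots,g(n)$ and, at every step, travel to $R$ to increment the counter by one. Upon reaching the original marker, $R$ holds the binary representation of $g(n)$. The extra space used is $\mathcal{O}(\log g(n))$ and the running time is polynomial in $g(n)$.

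Third, immediately to the right of $R$, I would write the desired layout as $g(n)+1$ consecutive blocks, each of length $\lceil\log g(n)\rceil+2$, where block $i$ consists of the symbol $X$ followed by the binary representation of $i$ padded to $\lceil\log g(n)\rceil+1$ bits. This is produced by a simple loop that maintains an $L$-bit counter $c$ (with $L=\lceil\log g(n)\rceil+1$), writes $X\,(c)_{2}$ into the next free output slot, increments $c$, and halts as soon as $c$ equals the value of $g(n)$ stored in $R$. The layout has total length $(g(n)+1)(L+1)=\mathcal{O}(g(n)\cdot\lceil\log g(n)\rceil)$, which dominates all scratch overhead; each loop iteration runs in time $\mathcal{O}(\log g(n))$, so the entire construction finishes in polynomial time.

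The most delicate point is the second phase: the unary-to-binary conversion requires shuttling the head between the current scan position among cells $1,\ldots,g(n)$ and the counter block $R$ without losing track of either, which forces a careful use of the reserved symbol $X$ (and possibly $0$ or $1$) as a temporary place-holder distinguished from the original marker by its relative position. Once $g(n)$ is in hand in binary, the construction of the numbered layout is essentially a routine counter-driven loop, and the space budget is met because the output itself is the quantity we are asked to produce.
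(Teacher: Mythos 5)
Your construction is correct and produces the same indexed layout, but the decomposition differs somewhat from the paper's. The paper first fills \emph{all} $g(n)+1$ squares with the marker $X$ (not just one marker at position $g(n)$), appends a binary counter block $Y$ of length $\lceil\log(g(n)+1)\rceil$ initialized to zeros, and then ``moves apart'' the adjacent $X$'s in place so as to open a $|Y|$-cell gap after each one, finally writing the successive counter values into those gaps; the $g(n)+1$ markers themselves serve as the unary loop control and $g(n)$ is never explicitly produced in binary. You instead keep a single marker at cell $g(n)$, run an explicit unary-to-binary conversion to place $g(n)$ in a block $R$, and write a fresh layout to the right of $R$, terminating a counter-driven loop by comparison against $R$. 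Your route substitutes a clean comparison-against-$R$ loop bound for the paper's somewhat delicate in-place shift, at the cost of an extra binary comparison per iteration and a slightly larger intermediate footprint (the original $g(n)+1$ cells coexist with $R$ and the new output block); both versions nonetheless stay within the stated $\mathcal{O}(g(n)\cdot\lceil\log g(n)\rceil)$ space and polynomial time budgets, and both leave the head-shuttling bookkeeping (placing and recovering a temporary place-holder while commuting to the counter) at the same informal level of detail, a point you correctly flag as the delicate step, so the two arguments are comparably rigorous.
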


\begin{proof} 
At first, we fill exactly $g(n)\!+\!1$ squares on the second tape with the $X$ symbol. One can 
make this using exactly $g(n)\!+\!1$ memory cells since the function $g(n)$ is fully 
space-constructible. Then, we count the filled cells beginning with the zero number and apply 
a reverse binary representation, \eg the number $6$ will be written down as 
$011$, but not $110$. For this purpose, we write the tuple $Y$ consisting of 0 and 1 to the right 
of $(g(n)\!+\!1)$-tuple $(X,\ldots,X)$. The tuple $Y$ is $\lceil\log(g(n)\!+\!1)\rceil$ 
symbols in length. One can fill these $g(n)\!+\!|Y|\!+\!1$ squares 
utilizing the same amount of memory. Next, we fill the zone $Y$ with zeros. After that, we move 
apart the adjacent symbols $X$ by $|Y|$ positions and fill these gaps with zeros. Possibly, 
we will use a few more cells than $(g(n)\!+\!1)\cdot|Y|$ for this expansion. It remains 
only to number the $X$ symbols, using gaps filled with zeros for this, but now we write the binary 
representation of the numbers in the usual way. If $g(n)\!=\!\lceil\log n\rceil$, then as a 
result, we will obtain the tuple $\widehat{X}$ involved in the record of the simulating formula.

All this is feasible in a polynomial time on $\max\{n,g(n)\}$. 
\end{proof}  

\begin{rems} 
(1) It is quite probable that this lemma can be noticeably strengthened. In 
particular, one can most likely get by with the alphabet $\{\rhd,\Lambda,0,1\}$. 

(2) All functions, interesting for us, are fully space-constructible: $\lceil\log n\rceil,\ n$, 
$cn^d$, and $\exp_k(n)$. The functions \ $n\cdot\lceil\log n\rceil$ \ and \ $n\cdot\lceil\log 
n\rceil\cdot\lceil\log\lceil\log n\rceil\rceil$ are uninteresting for us concerning the property 
"be space-constructible" \ because the summand $D_1\cdot n\cdot\lceil\log n\rceil\cdot\lceil
\log\lceil\log n\rceil\rceil$ arise in the upper estimate of the length of modeling formula 
(see inequalities \ref{in1} and Lemma \ref{lchi0}) due to the fact the input string is 
described by $n$ clauses, each of which is $\mathcal{O}(\lceil\log n\rceil\cdot\lceil\log
\lceil\log n\rceil\rceil)$ in length. 
\end{rems}

We recall that the disjunction of equivalencies \ 
$(\widehat{a}_{k+1}\!\equiv\!\widehat{a}_{k}\wedge
\widehat{Y}_{k}\!\equiv\!\widehat{b}_{k})\vee(\widehat{Y}_{k}\!\equiv\!
\widehat{a}_{k}\wedge\widehat{b}_{k}\!\equiv\!\widehat{b}_{k+1})$ \ entered in the formula 
$\Omega^{(m)}(X,P)$ was denoted as $\Theta_k$ in the proof of Proposition \ref{Pr5}; and the 
lengths of the formulae are calculated in the natural language --- see Subsection \ref{s3.2}.

\begin{lem}\label{lfrOm} 
The length of the subformula \ 
\[
\Upsilon(m) \ \rightleftharpoons \ \exists\,\widehat{Y}_{m}\forall\,\widehat{a}_{m}\forall\,
\widehat{b}_{m}\ldots\exists\,\widehat{Y}_1\forall\,\widehat{a}_{1}\forall\,\widehat{b}_{1} \
\bigl\{\bigwedge\limits_{1\leqslant\,k\leqslant\,m}\Theta_{k}\rightarrow
\Phi^{(0)}(P)(\widehat{a}_{1},\widehat{b}_{1})\bigr\}
\]
is \ $\mathcal{O}((|P|\cdot m(n)+m^2(n))\cdot\lceil\log m(n)\rceil)$ \ 
for sufficiently long $X$ and \ $m\!=\!m(n)\!\geqslant\!\lceil\log n\rceil$. 
\end{lem}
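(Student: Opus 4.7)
The plan is to compute the length of $\Upsilon(m)$ by summing the contributions of (a) its quantifier prefix, (b) the big conjunction $\bigwedge_{1\leqslant k\leqslant m}\Theta_k$, and (c) the one-step formula $\Phi^{(0)}(P)(\widehat{a}_1,\widehat{b}_1)$; and then verifying that these three pieces together fit into the claimed bound $\mathcal{O}((|P|\cdot m(n)+m^2(n))\cdot\lceil\log m(n)\rceil)$. Throughout, indices are counted honestly in the natural language alphabet.

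First I would fix the basic size estimates on tuples. Each of the tuples $\widehat{a}_k$, $\widehat{b}_k$, $\widehat{Y}_k$ is of the same shape as $\widehat{y}_t$ from Subsection \ref{s4.3}, hence has length $2m+5r+2s$; since $s=\lceil\log n\rceil\leqslant m(n)$ and $r$ is a constant depending only on $P$, this gives length $\mathcal{O}(m(n))$ for all large enough $X$. Moreover, the first subscripts of these variables range over $\{1,\ldots,m\}$ and the second over $\{0,\ldots,2m+5r+2s-1\}$, so the total number of distinct variables in the prefix is $\mathcal{O}(m^2(n))$, and every subscript can be written in $\mathcal{O}(\lceil\log m(n)\rceil)$ symbols.

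Second I would estimate the three pieces.
\begin{itemize}
\item The quantifier prefix lists $3m$ tuples of length $\mathcal{O}(m(n))$; each variable carries a pair of subscripts of size $\mathcal{O}(\lceil\log m(n)\rceil)$. So the prefix contributes $\mathcal{O}(m^2(n)\cdot\lceil\log m(n)\rceil)$ symbols.
\item Each $\Theta_k$ is a disjunction of two conjunctions of two tuple equivalences $\widehat{\alpha}\equiv\widehat{\beta}$ of tuples of length $\mathcal{O}(m(n))$, hence has length $\mathcal{O}(m(n)\cdot\lceil\log m(n)\rceil)$ with indices included. Summing over $k=1,\ldots,m$ gives $\mathcal{O}(m^2(n)\cdot\lceil\log m(n)\rceil)$ for the big conjunction.
\item For $\Phi^{(0)}(P)(\widehat{a}_1,\widehat{b}_1)$ I would invoke Lemma \ref{trcl}(iii), which gives $|\Phi^{(0)}(P)|\leqslant C_1\cdot|P|\cdot|\varphi(N)|$, together with Lemma \ref{lphi}, by which $|\varphi(N)|=\mathcal{O}(m+r+s)=\mathcal{O}(m(n))$ without indices. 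Restoring the subscripts (which again require $\mathcal{O}(\lceil\log m(n)\rceil)$ symbols apiece) yields $|\Phi^{(0)}(P)|=\mathcal{O}(|P|\cdot m(n)\cdot\lceil\log m(n)\rceil)$.
\end{itemize}

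Finally I would add up the three bounds and observe that the $\mathcal{O}(m^2(n)\cdot\lceil\log m(n)\rceil)$ contribution from the prefix and from the big conjunction combines with the $\mathcal{O}(|P|\cdot m(n)\cdot\lceil\log m(n)\rceil)$ contribution from $\Phi^{(0)}(P)$ into $\mathcal{O}((|P|\cdot m(n)+m^2(n))\cdot\lceil\log m(n)\rceil)$, which is the claim. The only place where care is needed is the indexing: one must confirm that the two-level subscripting of the new variables $\widehat{Y}_k,\widehat{a}_k,\widehat{b}_k$ can indeed be done using $\mathcal{O}(\lceil\log m(n)\rceil)$ characters per symbol (so that no hidden factor of $\log m(n)$ is lost), and that the constant $r$ can be absorbed into $|P|$ when it arises; apart from this bookkeeping, the argument is a direct calculation.
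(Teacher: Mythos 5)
Your decomposition into quantifier prefix, the big conjunction $\bigwedge_k\Theta_k$, and $\Phi^{(0)}(P)$, together with the tuple-size count $2m+5r+2s=\mathcal{O}(m(n))$, the $\mathcal{O}(\lceil\log m(n)\rceil)$ bound per subscript, and the invocation of Lemmata \ref{lphi} and \ref{trcl}(iii) for $\Phi^{(0)}(P)$, is exactly the paper's argument. The proposal is correct and follows essentially the same route as the printed proof.
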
 

\begin{proof} 
According to Lemmata \ref{lphi} and \ref{trcl}(iii), 
\[
|\Phi^{(0)}(P)(\widehat{y}_{t},\widehat{y}_{t+1})|\!\leqslant\!
C_{1}\cdot|P|\cdot|\varphi(N)|\!\leqslant\!C_2\cdot|P|\cdot(m+\!r+\!s)\cdot
\max\{\log m,\log r,\log s\}
\]  
for appropriate constants $C_1$ and $C_2$. The tuple $\widehat{y}_t$ consists of $2m+\!5r\!+\!
2s$ va\-rious variables, therefore its length is not more than $9m\cdot\lceil\log m\rceil$ for 
all $X$ such that $|X|\!=\!n$ and \ $m\!=\!m(n)\!\geqslant\!\lceil\log n\rceil\!=\!s\!
\geqslant\!r$. For $1\!\leqslant\!k\!\leqslant\!m$, the tuples 
$\widehat{Y}_{k},\widehat{a}_{k}$, and $\widehat{b}_{k}$ also have $2m+\!5r+\!2s$ variables; 
and each of these tuples has the variables of the identical type, but not of nine ones; hence, 
their maximal second index equals $\lceil\log(2m\!+\!5r\!+\!2s)\rceil$. 

Nevertheless, $|\exists\widehat{Y}_{k}|\!=\!|\forall\widehat{a}_{k}|\!
=\!|\forall\widehat{b}_{k}|\!=\!\mathcal{O}(m\cdot\lceil\log m\rceil)$ for all $k$ according to 
our suggestions about $X$ and $m$. Hence, 
\[
|\exists\,\widehat{Y}_{m}\forall\,\widehat{a}_{m}\forall\,\widehat{b}_{m}\ldots
\exists\,\widehat{Y}_1\forall\,\widehat{a}_{1}\forall\,\widehat{b}_{1}\bigl\{\bigwedge
\limits_{1\leqslant\,k\leqslant\,m}\Theta_{k}\bigr\}| \ = \ \mathcal{O}(m^2(n)\cdot\lceil\log m(n)
\rceil).
\] 
\end{proof}

\begin{cor}\label{lfrOm1} 
There exist constants $D_{1}, \ D_2$, and $D_w$ such that Inequalities~\ref{in1} hold  
for every sufficiently long $X$ and $m(n)\!\geqslant\!\lceil\log n \rceil$, i.e.,  
\begin{eqnarray*}
 m(n)\!<\!|\Omega^{(m)}(X,P)|\leqslant 
D_1\!\cdot\!n\!\!\cdot\!\lceil\log n\rceil\!\cdot\!\lceil\log\lceil\log n\rceil\rceil\!+\! 
D_2\cdot m(n)\cdot\lceil\log m(n)\rceil+ \\
+D_w\cdot[(|P|\cdot m(n)+m^2(n))\cdot\lceil\log m(n)\rceil  
\end{eqnarray*} 
\end{cor}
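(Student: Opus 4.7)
The plan is to simply decompose the modeling formula $\Omega^{(m)}(X,P)$ as defined in Equation~\ref{6.1} into its three structural components and apply the length estimates already obtained for each. Schematically, $\Omega^{(m)}(X,P)$ has the shape
\[
\forall\widehat{y}_0,\widehat{y}_T\bigl\{[\chi(0)(\widehat{y}_0)\ \&\ \Upsilon(m)]\rightarrow\chi(\omega)(\widehat{y}_T)\bigr\},
\]
so its length is bounded (up to additive constants for the quantifier prefix, parentheses, and connectives) by the sum of the lengths of the three subformulae plus the length of the quantifier block $\forall\widehat{y}_0,\widehat{y}_T$.

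First, I would use Lemma~\ref{lchi0} to bound $|\chi(0)(\widehat{y}_0)|$ by $D_1\cdot n\cdot\lceil\log n\rceil\cdot\lceil\log\lceil\log n\rceil\rceil+D_2\cdot m(n)\cdot\lceil\log m(n)\rceil$; this accounts for the first two summands on the right-hand side of Inequality~\ref{in1}. Next, I would apply Lemma~\ref{lfrOm} to bound $|\Upsilon(m)|$ by $\mathcal{O}((|P|\cdot m(n)+m^2(n))\cdot\lceil\log m(n)\rceil)$, absorbing the $\mathcal{O}$-constant into $D_w$; this gives the third summand. The acceptance subformula $\chi(\omega)(\widehat{y}_T)$, according to Subsection~\ref{s6.2}, has length only $\mathcal{O}(r\cdot(m(n)+\lceil\log r\rceil))$, which (since $r$ is a constant depending on $P$ and $m(n)\geqslant\lceil\log n\rceil$) is swallowed into the $D_w$-term. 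Finally, the outer quantifier block $\forall\widehat{y}_0,\widehat{y}_T$ contributes only $\mathcal{O}(m(n)\cdot\lceil\log m(n)\rceil)$, since each tuple has $2m+5r+2s$ variables with maximum second index of order $\lceil\log m(n)\rceil$ and first index $T$ requiring $m(n)+1$ bits; this too is absorbed into the existing summands.

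For the lower bound $m(n)<|\Omega^{(m)}(X,P)|$, I would observe that the formula explicitly uses the variable tuple $\widehat{y}_T$ (with colour $T=\exp(m(n))$), and the binary record of the first index $T$ occupies exactly $m(n)+1$ symbols. Hence the record of even a single variable from that tuple already exceeds $m(n)$ in length.

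I do not expect any genuine obstacle here: all the hard work of estimating sub-lengths was done in Lemmata~\ref{ll2}, \ref{lcl}, \ref{lphi}, \ref{trcl}, \ref{lchi0}, and \ref{lfrOm}. The only minor care needed is to check that the overhead from the quantifier prefix $\forall\widehat{y}_0,\widehat{y}_T$ and from the auxiliary connectives of the outer implication does not dominate — which it does not, since $m(n)\cdot\lceil\log m(n)\rceil$ is already present in the second summand. Consequently the claimed Inequalities~\ref{in1} follow by direct addition of the component bounds, with the constants $D_1$, $D_2$ inherited verbatim from Lemma~\ref{lchi0} and $D_w$ chosen large enough to absorb the $\mathcal{O}$-constant of Lemma~\ref{lfrOm} together with the auxiliary terms.
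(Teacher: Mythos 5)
Your proposal is correct and follows essentially the same route as the paper: decompose $\Omega^{(m)}(X,P)$ as in Equation~\ref{6.1}, bound $\chi(0)$ by Lemma~\ref{lchi0}, bound $\Upsilon(m)$ by Lemma~\ref{lfrOm}, and observe that the remaining pieces (the acceptance clause $\chi(\omega)$ and the outer quantifier prefix $\forall\widehat{y}_0,\widehat{y}_T$) contribute only $\mathcal{O}(m(n)\lceil\log m(n)\rceil)$, which is dominated by the existing summands. The paper's own proof is a one-line citation of the same ingredients; your version merely spells out the bookkeeping (including the lower bound via the $(m(n)+1)$-bit colour index $T$), which is entirely sound.
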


\begin{proof} 
It follows from this lemma, Lemma \ref{lchi0}, and the description of $\chi(0)$ 
in Subsection \ref{s6.2} because $|\forall\widehat{y}_0|\!<\!|\forall\widehat{y}_T|\!=\!
\mathcal{O}(m\cdot\lceil\log m\rceil$).
\end{proof}

Thus, we have proven that the formula $\Omega^{(m)}(X,P)$ of form \ref{6.1} possesses the 
properties (i) (see Proposition \ref{Pr5}) and (ii) from Theorem \ref{Main}, i.e., it is 
modeling and has a relatively short length. The simulating formula is described by the 
Equation \ref{6.1} in an explicit form; this description allows us to design a polynomial 
algorithm relative to $\max\{n, m(n)\}$ for its construction based on Lemma \ref{sp-con}.  

\section{The space complexity of languages from the class $\mathbf{P}$}\label{s7}

A simple application of the modeling theorem to languages of class $\mathbf{P}$ (as is done in 
the second proof of Theorem \ref{Imp-k} for languages from class $^{(k+1)}\mathbf{EXP}$ in 
Subsection \ref{s2.2}\,) yields the uninteresting fact that these languages can be recognized in 
an almost linear space, i.e., with memory \ $\mathcal{O}(n\lceil\log n\rceil^2\lceil\log\lceil
\log n\rceil\rceil)$; this follows from Corollary \ref{lfrOm1} for $m(n)\!=\!C\lceil\log n\rceil
$. We will try to strengthen this statement, although that does not give equality to 
classes $\mathbf{P}$ and $\mathbf{L}$. The equality of these classes it is hardly possible at all.

When $m(n)\!=\!\lceil\log n^d\rceil$, then the longest component of the modeling formula \break
$\Omega^{(m)}(X,P)$ is the subformula $\chi_1(X)$ corresponding to the unchanged entries on 
the first tape (see Subsection \ref{s6.2}\,) since it contains the subformula 
$\bigwedge\limits_{1\leqslant\,(\widehat{\xi})\leqslant\,n}\psi_{1}(\widehat{\xi}\!\rightarrow\!
\chi(\widehat{\xi}))$ describing the input string $X\!=\!\langle\chi_1,\ldots,\chi_n\rangle$ 
and having length \ $\mathcal{O}(n\cdot\lceil\log n\rceil\cdot\lceil\log\lceil\log n\rceil
\rceil)$. How this description can essentially be abridged is perfectly incomprehensible. The 
universal quantifier cannot be applied here, as the input string may contain various symbols. 

Should we not try to remove this description entirely because we have a record of the input 
string on the first tape?

\begin{thm} \label{PvsLog}
Every language of class $\mathbf{P}$ can be recognized in an almost 
logarithmical space; more precisely, if a language $\mathcal{L}$ belongs to $DTIME[n^d]$, 
then there is a 2-DTM $P(\mathcal{L})$ that recognizes $\mathcal{L}$ with memory bounded by  
$C\!\cdot\!d\,^2\!\cdot\!\lceil\log^2n\rceil\cdot\lceil\log\lceil\log n\rceil\rceil)$ for the 
appropriate constant $C$.
\end{thm}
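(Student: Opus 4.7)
The plan is to apply Theorem~\ref{Main} with the fully space-constructible function $m(n) = \lceil d\log n\rceil$, so that $\exp(m(n)) \geqslant n^d$, to a fixed 2-DTM $P_0$ that recognizes $\mathcal{L}$ in time at most $n^d$. This yields the modeling sentence $\Omega^{(m)}(X,P_0)$. The discussion just before the theorem already identifies the difficulty: among the three summands of Inequality~\ref{in1}, only the first, $D_1 n\lceil\log n\rceil\lceil\log\lceil\log n\rceil\rceil$, exceeds the target bound, and by Lemma~\ref{lchi0} and Remark~\ref{rem2}(b) this summand is exactly the contribution of $\chi_1(X)$. The remaining two summands give $\mathcal{O}(d^2\lceil\log^2 n\rceil\lceil\log\lceil\log n\rceil\rceil)$ once $P_0$ is fixed, which matches the claim exactly.

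The central idea is to exploit what the author already flags rhetorically: the machine $P(\mathcal{L})$ already has $X$ on its first tape, so $\chi_1(X)$ need not appear in the formula at all. I would therefore define a trimmed sentence $\widetilde{\Omega}$ by deleting $\chi_1(X)$ from the premise inside $\chi(0)(\widehat{y}_0)$, and evaluate $\widetilde{\Omega}$ with the standard recursive PSPACE algorithm for TQBF, modified to handle $\widehat{X},\widehat{F}$ specially: whenever the evaluator reaches an atomic clause $\psi_1(\widehat{\xi}\rightarrow\alpha)$ inside some $\Gamma^{ret}_1$ in a $\varphi(k)$, with $\widehat{\xi}$ and $\alpha$ fixed by the current partial assignment, it treats that clause as an oracle query, moving the first head to cell $(\widehat{\xi})$ and comparing the symbol read with $\alpha$. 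Equivalently, the component $\widehat{F}$ of $\widehat{y}_0$ is no longer freely quantified but is set functionally from $\widehat{X}$ via the input tape. The length of $\widetilde{\Omega}$ is $\mathcal{O}(d^2\lceil\log^2 n\rceil\lceil\log\lceil\log n\rceil\rceil)$, the TQBF evaluator uses $\mathcal{O}(|\widetilde{\Omega}|)$ additional workspace for its recursion stack and partial assignment, and tracking the first-head position costs only $\lceil\log n\rceil$ more bits, all absorbed in the target bound.

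For correctness I would verify that this oracle-augmented evaluation of $\widetilde{\Omega}$ returns the same truth value as the full $\Omega^{(m)}(X,P_0)$, then invoke Proposition~\ref{Pr5}. The main obstacle is to re-examine the proofs of Propositions~\ref{Pr2} and~\ref{Pr3} with $\chi_1(X)$ absent: in Proposition~\ref{Pr2} the value of $\widehat{F}$ was deliberately chosen as $c\lambda_1$ for the actual input symbol $\lambda_1$ at the current head position, which is precisely what the oracle supplies, so the witnesses produced there transfer verbatim to $\widetilde{\Omega}$; in Proposition~\ref{Pr3} the clauses $\psi_1(\widehat{\zeta}_1(\beta_1)\rightarrow\lambda_1)$ previously used from $\chi_1$ are now supplied by the oracle. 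Assuming these adjustments go through, the 2-DTM $P(\mathcal{L})$ first uses Lemma~\ref{sp-con} to lay out the working area for $m(n)$, then writes $\widetilde{\Omega}$ on the second tape, and finally runs the oracle-augmented TQBF algorithm, staying entirely within space $\mathcal{O}(d^2\lceil\log^2 n\rceil\lceil\log\lceil\log n\rceil\rceil)$, the bound claimed in the theorem.
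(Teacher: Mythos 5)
Your proposal is essentially the paper's own proof. You choose the same $m(n)\approx d\lceil\log n\rceil$, identify $\chi_1(X)$ as the sole overlong component, discard it (the paper's ``conditionally modeling formula'' $\Delta^{(m)}(n,P_0)$), and replace the deleted information by direct reads of the input tape when the TQBF evaluator encounters a $\psi_1$-clause or a $\pi$-timer's first-tape part (the paper packages this as the subroutines $M_1,M_2,M_3$). The only stylistic divergence is that the paper does not revisit Propositions \ref{Pr2} and \ref{Pr3}: because it writes $\Delta^{(m)}(n,P_0)$ but treats the modified evaluator as deciding the truth of the full $\Omega^{(m)}(X,P_0)$ (same variables, same meaning, the deleted premise being enforced on the fly by the input-tape lookups), correctness is obtained directly from Theorem \ref{Main}; your re-examination of how the witnesses in those propositions use $\widehat{F}=c\lambda_1$ is a reasonable, if somewhat more roundabout, way to justify the same point.
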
 

\begin{proof} 
Let us notice that the programs of all Turing machines depend only on the 
working alphabet, recognized language, and the number of their tapes, but not on an input 
string; though, the program's execution depends on the input, as a rule, essentially. 

Suppose the formula $\Omega^{(m)}(X,P)$ is written for the simulation of the actions of 2-DTM $P$ 
for some input $X$ during the first $\exp m(n)$ steps. In that case, we denote its subformula, 
which is obtained if the component "$\chi_1(X) \ \&$" is removed, as $\Delta^{(m)}(n,P)$, and will 
name it {\it conditionally modeling formula}. The reduced formula $\Delta^{(m)}(n,P)$ depends only 
on the program $P$, the function $m$, and the length $n$ 
of input according to the description of the simulating formula --- see Remark \ref{Ind}. Its 
length is $\mathcal{O}((|P|\cdot d\cdot\lceil\log n\rceil+d\,^{2}\cdot\lceil\log n)\rceil^2\cdot
\lceil\log\lceil\log n\rceil\rceil)$ for $m(n)\!=\!\mathcal{O}(\lceil\log n^d\rceil)$. We note 
that formula $\chi(0)$ is only shortened to \ $\chi_2(0) \ \& \ \pi_0((0)_2,\rhd,\rhd,(0)_2,(0)_2)
$ (see Subsection \ref{s6.2}), but is not removed at all.

We recall that the working alphabet of all 2-DTMs is $\mathcal{A}$, and $\mathcal{A}_1$ is the 
alphabet of the natural language described in Subsection \ref{s3.2}. Both these alphabets are 
finite. We regard that the programs of the machines are also written in the alphabet 
$\mathcal{B}\rightleftharpoons\mathcal{A}\cup\mathcal{A}_1$ with a natural identification of 
the implication and the arrow, which is involved in the record of the instructions. In 
Subsection \ref{s4.1}, applied a uniform length binary coding $c=c(\mathcal{A},P)$ of 
all symbols of the working alphabet $\mathcal{A}$ and simultaneously all internal states of 
the given machine $P$. Now, we extend this coding to $c_1=c_1(\mathcal{B},P)$ to 
include the whole alphabet $\mathcal{A}_1$. We will continue to write down the formulae 
$\Omega^{(m)}(X,P)$ and $\Delta^{(m)}(n,P)$, applying the alphabet $\mathcal{A}_1$ 
and the encoding $c_1$ as described in Section \ref{s4} for the encoding $c$.

However, one can do without such an extension if one immediately assumes that the alphabet 
$\mathcal{A}$ includes the whole alphabet $\mathcal{A}_1$ since only its finiteness is used in 
all constructions. On the other hand, we can proceed wholly to the codes of alphabetic symbols 
from $\mathcal{A}_1$ using the symbols of the alphabet $\mathcal{A}$; however, this will only 
overload the proof with irrelevant details.

\begin{lem} 
For every program $P$, there is 2-DTM $M(P)$ that writes $P$ when $M(P)$ gets 
any input; by that, the $\rhd$ symbol is written as $(c_1\rhd)$, i.e., as $r$-tuple consisting 
of $r\!-\!1$ zeros and one $1$ (see Subsection \ref{s4.1}). 
\end{lem}

\begin{proof}  
Executing its instructions, the machine $M(P)$ writes down all the symbols involved in the 
record of $P$ one by one in the second tape. The first head remains in the zeroth place during 
the entire work, so one may say that $M(P)$ is running in the one-tape mode.

Using an example, let us explain how $M(P)$ works. Let $q_7\rhd,B\rightarrow q_4R,C,L$ be some 
instruction in the program $P$. The sense of this instruction is as follows. If, in the 
seventh state, the machine $P$ sees the $\rhd$ symbol on the first tape and the $B$ symbol on 
the second at some instant, then $P$ moves the first head to the right, writes $C$ instead 
of $B$ on the second tape, moves the second head to the left, and enters the state $q_4$. 

This instruction must be written on the second tape of machine $M(P)$ in the following 
form (the comma belongs to the natural language's alphabet): 
$q7,(c_1\rhd),B\rightarrow q4,R,C,L$. 
The first six instructions of $M(P)$ for writing this instruction are (here we regard that 
$M(P)$ has entered the state $q_{101}$ after the executing of the current instruction; the 
ultimate non-empty symbol on the second tape is $D$, and the second head aims at this $D$):
\begin{eqnarray*}
&(1) \quad q_{101}\,\rhd,D\rightarrow q_{102}S,D,R \phantom{aaaa} & (2) \quad  q_{102}\,
\rhd,\Lambda\rightarrow q_{103}\,S,q,R \\
&(3) \quad q_{103}\,\rhd,\Lambda\rightarrow q_{104}S,7,R \phantom{aaaa} & (4) \quad    q_{104}
\,\rhd,\Lambda\rightarrow q_{105}S,,,R\\
&(5) \quad q_{105}\,\rhd,\Lambda\rightarrow q_{106}S,(,R \phantom{aaaa}  & (6) \quad   q_{106}
\,\rhd,\Lambda\rightarrow q_{107}S,0,R\\
\end{eqnarray*} 
The result of the actions of these instructions is the string $q7,(0$. 
\end{proof}

\begin{lem} 
For every program $P$ and each indexed-constructible function $m$, there is 
2-DTM $M_1(P)$ that writes the conditionally modeling formula $\Delta^{(m)}(n,P)$ when $M_1(P)$ 
gets any input $X$ having $n$ in length. For this purpose, $M_1(P)$ uses $\mathcal{O}(|
\Delta^{(m)}(n,P)|)$, i.e., $\mathcal{O}((|P|\cdot m(n)+m^2(n))\cdot\lceil\log m(n)\rceil)$ the 
memory cells for the sufficiently long $X$.
\end{lem}

\begin{proof} 
At first, the machine $M_1(P)$ activates its subprogram $M(P)$, which writes 
down the program $P$. Then, $M_1(P)$ calculates the value $m(n)$ and writes it in the binary 
representation on the second tape to the right of the string $P$. Next, it writes the formula 
$\Delta^{(m)}(n,P)$ using the strings $P$ and $(m(n))_2$. The machine $M(P)$ again runs in the 
one-tape mode in this and last stage. In the last stage,  $M_1(P)$ erases the strings $P$ and 
$(m(n))_2$ and shifts left the formula $\Delta^{(m)}(n,P)$. 
\end{proof}

Let us return to the proof of the theorem. Suppose some 2-DTM $P_0$ recognizes a 
language $\mathcal{L}$  within time $n^d$. One can regard the number $d$ as an integer. We 
will describe the machine $\mathcal{M(L)}$, which recognizes $\mathcal{L}$ in space 
$\mathcal{O}(\lceil\log^2 n)\rceil\cdot\lceil\log\lceil\log n\rceil\rceil)$. 

In Sections 4-6, one implicitly assumed the presence of 2-DTM, which writes the code of any 
symbol of the working alphabet on the second tape when the first head is aimed at this symbol. 
This was not stated due to the obviousness. It is not difficult to add 
several instructions to this program that check whether the clause of a kind \ 
$\psi_1(\widehat{\zeta}\rightarrow\alpha)=\widehat{X}\!\equiv\!\widehat{\zeta}\rightarrow 
\widehat{F}\!\equiv\!c\widehat{\alpha}$ is true, i.e., whether the $\alpha$ symbol is located 
in the $(\widehat{\zeta})$th cell of the first tape. The actions of this subroutine 
$M_2(\widehat{\zeta},\alpha)$ can be such: it shifts the head to the $(\widehat{\zeta})$th 
position on the first tape, reads some symbol $\beta$ there, writes the code $c\beta$ on the 
second tape, and then compares it with the $c\alpha$ code. Similarly, $M_2(\zeta_1,\alpha_1)$ 
can check the truth of the fragments of the timers $\pi_t((i)_2,\alpha_1,\alpha_2,\zeta_1,
\zeta_2)$ of any color $t$.

The last procedure we need is $M_3(\phi)$, which determines whether a given Boolean sentence 
$\phi$ is true. This machine uses a complete enumeration (or almost complete when using, for 
example, the branch and bound method) of possible values of the variables included in the 
record $\phi$. After generating the next set of variable values, it substitutes these values 
into $\phi$ and checks the truth of all subformulas from $\phi$. This algorithm requires 
exponential time but is feasible in a linearly bounded (relative to the length of $\phi$) 
space.

Thus, let the machine $\mathcal{M(L)}$ get an input $X$. In the first stage, it turns on 
subroutine $M_1(P_0)$, which writes the conditionally modeling formula $\Delta^{(m)}(n,P_0)$ for 
$m(n)=d\cdot\lceil\log n\rceil$. Let us notice that the formulae $\Delta^{(m)}(n, P_0)$ and 
$\Omega^{(m)}(X,P_0)$ contain the same variables. Grounding on this, the recognizing machine 
$\mathcal{M(L)}$ activates the modified procedure $M_3(\Omega^{(m)}(X,P_0))$ to check the 
validness of the sentence $\Omega^{(m)}(X, P_0)$. This modification lies in that the check of 
the truth of the timers $\pi_t((i)_2,\alpha,\alpha_2,\zeta,\zeta_2)$ and clauses 
$\psi_1(\widehat{\zeta}\rightarrow\alpha)$ is carried out by applying subroutine 
$M_2(\widehat{\zeta},\alpha)$. If the sentence $\Omega^{(m)}(X,P_0)$ occurs true, then 
this means that the machine $P_0$ accepts the input $X$ by Theorem \ref{Main}; hence, the string 
$X$ belongs to $\mathcal{L}$; therefore $\mathcal{M(L)}$ also accepts $X$. If not, $\mathcal{M(L)}
$ rejects $X$ since the word $X$ does not belong to the language $\mathcal{L}$. 
\end{proof}

\section{Results and Discussion}\label{s8}

\paragraph*{Motivation}
The basic motivation for writing this work was the author's desire to significantly improve 
the lower estimate of the computational complexity of equivalency-nontrivial decidable 
theories from \cite{Lat22}. For this purpose, the possibility of reducing the formula, which 
simulates lengthy computations, was analyzed. Three fragments were the longest for the modeling 
period $\exp n$, i.e., when $m(n)\!=\!n$. 

The first long component corresponds to the representation of the number $t\pm1$ in binary 
notation. If the binary representation $(t)_2$ of the number $t$ has length $n\lceil\log n
\rceil$, then in \cite{Lat22}, the binary representations of the numbers $t\pm1$ have length 
$\mathcal{O})(n\lceil\log n\rceil)^2)$ (Lemma 3.1 in Subsection 3.3). Introducing corrective 
additives, it was possible to significantly reduce the length of recording these numbers in 
Subsection \ref{s3.2} (compare Lemmata 3.1 in \cite{Lat22} and \ref{ll2} in this paper).

The second long fragment in \cite{Lat22}, the formula $\chi(0)$, describes the initial 
configuration, including the input string $X$. It was $\mathcal{O}(n\cdot n\lceil\log n\rceil)
$ symbols in length. Now, it has the length $\mathcal{O}(n\cdot\lceil\log n\rceil\cdot\lceil\log
\lceil\log n\rceil\rceil)$ through the usage of an input tape on which nothing can be written or 
erased. This is mainly why the author began to model the actions of two-tape machines in this 
paper. He has only seen the possibility of proving Theorem \ref{PvsLog} in the paper 
writing-process. 

Nevertheless, the other long component, the $\Upsilon(m)$ formula (see Lemma \ref{lfrOm}), remains 
here of the same length. Apparently, it is possible to significantly reduce the size of the 
description of the exponential (relative to $m(n)$) number of steps only by abandoning the 
usage of the elegant construction of Meyer and Stockmeyer, but that has not yet been possible 
to do accurately enough.

\paragraph*{Relativization and nondeterministic computations} One running step of a
non-determi\-nis\-tic machine can not only be described by somewhat complicating the formula 
$\varphi(k)$ from Subsection \ref{s4.2} but even simulated, i.e., it is possible to prove 
analogs of Propositions \ref{Pr2}(ii) and \ref{Pr3} for this modified formula. 
However, it is  very challenging to adequately model long, non-deterministic computations through 
short formulae. 

The point is that the method of modeling the actions of Turing machines described in this 
paper and \cite{Lat22} is \emph{point-wise} (or \emph{strictly local}), in contrast to Cook's 
formula from \cite{Coo71}. For example, when proving Proposition \ref{Pr2}, we were able to 
explicitly "see" only three cells on the second tape, i.e., set appropriate values for the 
variables included in the description of these squares; namely, these are cells with numbers 
$\zeta_2$, $\zeta_2(\beta_2)$, and $\mu$. At the same time, the proof of an analog of 
Proposition \ref{Pr4} may require simultaneous "consideration" of almost the entire zone with a 
length of $\exp(m(n))$ cells for non-deterministic computations. The description of the 
corresponding example takes several pages and awaits a separate publication.

Relativization of the main theorem on modeling (Theorem \ref{Main}) seems to the author, firstly 
all, to be meaningless. Secondly, it is impossible. Indeed, in an exponential number of steps, the 
machine can record on the oracle's tape a string whose length is not bounded by the value of the 
polynomial. Certainly, the recording process can be described by formulae similar to $\Phi^{(0)}
(\widehat{y}_t,\widehat{y}_{t+1})$ (see Subsections \ref{s4.2} and \ref{s4.3}\,). But to describe 
the query to the oracle, it will be necessary to use a formula of exponential length since the 
query chain may be heterogeneous in structure, and as a result, the universal quantifier is not 
applicable here. In addition, this query string will only be known to us at the instant of the 
query.

\paragraph*{One can prove another way}
Probably the equality of the classes $\mathbf{PSPACE}$ and $\mathbf{EXP}$ can proven in 
another way. Instead of the $QSAT$ problem, one can theoretically take any other 
$\mathbf{PSPACE}$-complete problem and prove its completeness for the $\mathbf{EXP}$ class. 
However, such a construction's practical realizability should also be considered.
The desired problem must have a fairly simple description; at the same time, the formal 
language of this problem must be expressive enough so that it can be used to model the 
exponential computations of Turing machines. From this point of view, the word problem for 
suitable automaton groups or automaton semi-groups is promising. For these problems, their 
completeness in the $\mathbf{PSPACE}$ class has recently been proven \cite{An-Ro-Wa, Wa-We}. 
Of course, choosing a $\mathbf{PSPACE}$-complete problem for this purpose is a very subjective 
thing.

\bibliographystyle{alphaurl}
\bibliography{correspondence}

\begin{thebibliography}{ELTT65}

\bibitem[AB09]{AB09}
S.~Arora and B.~Barak.
\newblock {\em Computational Complexity: A Modern Approach}.
\newblock Cambridge University Press, 2009.

\bibitem[AHU76]{AHU76}
A.V. Aho, J.E. Hopcroft, and J.D. Ullman.
\newblock {\em The design and analysis of computer algorithms}.
\newblock Addison-Wesley Publishing Company, Reading, Massachusetts, 1976.

\bibitem[Coo71]{Coo71}
S.A. Cook.
\newblock The complexity of theorem-proving procedures.
\newblock In {\em 3rd Annual ACM Symposium on Theory of Computing}, pages 151
  -- 159, Shaker Heights, Ohio, 1971.

\bibitem[DK14]{DK14}
D.~Du and K.~Ko.
\newblock {\em Theory of computational complexity}.
\newblock John Wiley\&Sons, Inc., Hoboken, New Jersey, 2014.

\bibitem[DRW17]{An-Ro-Wa}
D.~D'Angeli, E.~Rodaro, and J.Ph. W\"{a}chter.
\newblock On the complexity of the word problem for automaton semigroups and
  automaton groups.
\newblock {\em Adv. Appl. Math.}, 90(2):160--187, 2017.
\newblock \href {http://dx.doi.org/10.1007/s00233-020-10114-5}
  {\path{doi:10.1007/s00233-020-10114-5}}.

\bibitem[ELTT65]{ELTT65}
Y.L. Ershov, I.A. Lavrov, A.D. Taimanov, and M.A. Taitslin.
\newblock Elementary theories.
\newblock {\em Russian Math. Surveys (English translation)}, 20(4):35--105,
  1965.
\newblock \href {http://dx.doi.org/10.1070/rm1965v020n04abeh001188}
  {\path{doi:10.1070/rm1965v020n04abeh001188}}.

\bibitem[FR74]{FR74}
M.J. Fisher and M.O. Rabin.
\newblock Super exponential complexity of presburger's arithmetic.
\newblock In {\em The SIAM-AMS Symposium in Applied Mathematics}, pages 27--41,
  1974.

\bibitem[GJ79]{GJ79}
M.R. Garey and D.S. Johnson.
\newblock {\em Computers and Intractability: A Guide to the Theory of
  NP-completeness}.
\newblock Freeman, New York, 1979.

\bibitem[Lat22]{Lat22}
I.V. Latkin.
\newblock The recognition complexity of decidable theories.
\newblock {\em Eurasian Math. J.}, 13(1):44--68, 2022.
\newblock \href {http://dx.doi.org/10.32523/2077-9879-2022-13-1-44-68}
  {\path{doi:10.32523/2077-9879-2022-13-1-44-68}}.

\bibitem[Mey73]{Mey73}
A.R. Meyer.
\newblock Weak monadic second order theory of successor is not
  elementary-recursive.
\newblock Memo 38, Massachusetts Institute of Technology, Cambridge, MA, USA,
  1973.

\bibitem[Mey75]{Mey75}
A.R. Meyer.
\newblock The inherent complexity of the theories of ordered sets.
\newblock In {\em International Congress of Mathematians (Canadian Math.
  Congress)}, pages 477--482, Vancouver, 1974 (1975).

\bibitem[MS72]{MS72}
A.R. Meyer and L.J. Stockmeyer.
\newblock The equivalence problem for regular expressions with squaring
  requires exponetial space.
\newblock In {\em Thirteenth Annual IEEE Symposium on Switching and Automata
  theory}, pages 125--129, 1972.
\newblock \href {http://dx.doi.org/DOI: 10.1109/SWAT.1972.29} {\path{doi:DOI:
  10.1109/SWAT.1972.29}}.

\bibitem[Rab99]{Rab99}
M.O. Rabin.
\newblock Decidable theories.
\newblock In J.~Barwise, editor, {\em Handbook of mathematical logic Part C},
  chapter III, pages 596--629. Elsevier, Amsterdam, New York, eighth edition,
  1999.

\bibitem[Rob74]{Rob74}
E.L. Robertson.
\newblock Structure of complexity in the weakmonadic second-order theories of
  the natural numbers.
\newblock In {\em 6th ACM Symp. on Theory of Computing}, pages 161--171, 1974.

\bibitem[SM73]{SM73}
L.J. Stockmeyer and A.R. Meyer.
\newblock Word problems requiring exponential time.
\newblock In {\em 5th Ann. ACM Symp. on the Theory of Computing}, pages 1--9,
  New York, 1973. Association for Computing Machinery.

\bibitem[Sto74]{Sto74}
L.J. Stockmeyer.
\newblock {\em The complexity of decision problems in automata theory and
  logic}.
\newblock PhD thesis, MIT Lab for Computer Science, 1974.

\bibitem[Vor04]{Vor04}
S.~Vorobyov.
\newblock The most nonelementary theory.
\newblock {\em Information and Computation}, (190):196--219, 2004.
\newblock \href {http://dx.doi.org/10.1016/j.ic.2004.02.002}
  {\path{doi:10.1016/j.ic.2004.02.002}}.

\bibitem[WW23]{Wa-We}
J.Ph. W\"{a}chter and A.~Wei{\ss}.
\newblock An automaton group with {PSPACE}-complete word problem.
\newblock {\em Theory of Computing Systems}, (67):178--218, 2023.
\newblock \href {http://dx.doi.org/10.1007/s00224-021-10064-7}
  {\path{doi:10.1007/s00224-021-10064-7}}.

\end{thebibliography}
\end{document}